\DeclareSIUnit{\byte}{B}
\newtheorem{lemma}{\bf Lemma}
\newtheorem{theorem}{\bf Theorem}
\newtheorem{proposition}{\bf Proposition}
\newtheorem{corollary}{\bf Corollary}
\newtheorem{assumption}{\bf Assumption}
\newtheorem{remark}{\bf Remark}
\newcommand{\swname}[1]{\texttt{#1}}
\newcommand*{\python}{\swname{Python}}
\newcommand{\ie}{i\/.\/e\/.,\/~}
\newcommand{\eg}{e\/.\/g\/.,\/~}
\newcommand{\domain}{\mathcal{X}}
\newcommand{\error}{{\epsilon}}
\newcommand{\cubes}{\mathcal{C}}
\newcommand{\alkiax}{\mbox{\textsc{Alkia-x}}}
\newcommand{\casadi}{CasADi}
\newcommand{\xcm}{\overline{x}_{c}}
\newcommand{\kla}{
{\widetilde{k}_{a}}
}
\newcommand{\knorm}{\mathfrak{K}}
\newcommand{\kinv}{\mathfrak{K}^{-1}}
\newcommand{\xfeasf}{\mathcal{X}_\mathrm{feas}^f}
\newcommand{\xfeash}{\mathcal{X}_\mathrm{feas}^h}
\newcommand{\rkhsx}{\overline{\Gamma}_a}
\newcommand{\rkhso}{\overline{\Gamma}_a}
\newcommand{\kernelcite}{
Scharnhorst2022Robust,
Maddalena2021Deterministic,
Steinwart2008SVM, Fasshauer2015Kernel, Kanagawa2018Gaussian,
Binder2019GPs, Rose2023Learning,
Rasmussen2006Gaussian, Liu2020Scalable, 
Lederer2020Realtime, Nguyen2008Local, Berkenkamp2021Bayesian, Srinivas2010Gaussian, Chowdhury2017Kernelized, Fiedler2021Practical, Sui2015Safe, Hashimoto2020Nonlinear}
\newcommand{\GPcite}{Kanagawa2018Gaussian, 
Binder2019GPs, Rose2023Learning, Rasmussen2006Gaussian, Liu2020Scalable, 
Lederer2020Realtime, Nguyen2008Local,
Berkenkamp2021Bayesian, Srinivas2010Gaussian, Chowdhury2017Kernelized, Fiedler2021Practical, Sui2015Safe, Hashimoto2020Nonlinear}
\newcommand{\RKHSnormcite}{Scharnhorst2022Robust, Maddalena2021Deterministic, Berkenkamp2021Bayesian, Srinivas2010Gaussian, Chowdhury2017Kernelized, Fiedler2021Practical, Sui2015Safe, Hashimoto2020Nonlinear}
\newcommand{\KRRcite}{Scharnhorst2022Robust, Maddalena2021Deterministic, Steinwart2008SVM, Fasshauer2015Kernel, Kanagawa2018Gaussian}
\newcommand{\kernelinterpolationcite}{Steinwart2008SVM, Fasshauer2015Kernel, Maddalena2021Deterministic, Kanagawa2018Gaussian}
\newcommand{\robustMPCcite}{Houska2019Robust, Kohler2021Uncertain, Sasfi2023Robust, Rakovic2023Robust}
\newcommand{\AMPCcite}{ 
Chen2018Approx,
Karg2020Efficient,
Chen2019Large,
Hose2023Approximate,
Zhang2021Policy,
Karg2021Probabilistic,
Hertneck2018Learning,
Nubert2020Robot, 
Pin2013Approximate,
parisini1998nonlinear,
gonzalez2023neural,
Canale2009Lipschitz,
Canale2010Lipschitz,
Canale2014Nonlinear,
boggio2022set,
Ganguly2023Interpolation,
Binder2019GPs,
Rose2023Learning}
\begin{document}
\title{
Automatic nonlinear MPC approximation with closed-loop guarantees}
\author{Abdullah Tokmak$^{1,2,3}$,
Christian Fiedler$^{2}$,
Melanie N. Zeilinger$^3$,
Sebastian Trimpe$^{2}$,
Johannes Köhler$^3$
\thanks{This work has been supported by the Swiss National Science Foundation
under NCCR Automation (grant agreement 51NF40 180545),
the IDEA League,
and the German Academic Scholarship Foundation.
}
\thanks{$^1$%
Department of Electrical Engineering and Automation, 
Aalto University,
02150 Espoo,
Finland,
\texttt{abdullah.tokmak@aalto.fi}.
}
\thanks{$^2$%
Institute for Data Science in Mechanical Engineering, RWTH Aachen University, 52068 Aachen,
Germany,
\texttt{\{christian.fiedler, trimpe\}@dsme.rwth-aachen.de}.
}
\thanks{$^3$%
Institute for Dynamic Systems and Control, ETH Zurich, Zurich CH-8092, Switzerland,
\texttt{\{mzeilinger, jkoehle\}@ethz.ch}.
}
}

\maketitle

%!TEX root = ../main.tex
\begin{abstract}
% Abstract must be between 150-250 words
\begin{comment}
Safety guarantees are vital in many control applications, such as robotics.
Model predictive control (MPC) provides a constructive framework for controlling safety-critical systems, but is limited by its computational complexity.
Hence, approximate MPC schemes aiming for reduced computation times are being extensively explored.
However, most works on approximate MPC are either restricted to linear systems, do not ensure safety guarantees, require an iterative offline design, or do not yield a sufficiently fast controller.
To address this issue, we develop a novel algorithm that automatically approximates nonlinear MPC schemes with closed-loop guarantees.
First, we discuss how this task can be reduced to a function approximation problem, which we then tackle by proposing \alkiax, the {A}daptive and {L}ocalized {K}ernel {I}nterpolation {A}lgorithm with e{X}trapolated reproducing kernel Hilbert space norm.
\alkiax\ is a non-iterative algorithm that ensures numerically well-conditioned computations, a fast-to-evaluate approximating function, and the guaranteed satisfaction of any desired bound on the approximation error.
Hence, \alkiax\ automatically computes an explicit function that approximates the MPC, yielding a controller suitable for safety-critical systems and high sampling rates.
In a numerical experiment, we apply \alkiax\ to a nonlinear MPC scheme, demonstrating reduced offline computation and online evaluation time compared to a state-of-the-art method.
\end{comment}
 Safety guarantees are vital in many control applications, such as robotics.
Model predictive control (MPC) provides a constructive framework for controlling safety-critical systems, but is limited by its computational complexity.
We address this problem by presenting a novel algorithm that automatically computes an explicit approximation to nonlinear MPC schemes while retaining closed-loop guarantees.
Specifically, the problem can be reduced to a function approximation problem, which we then tackle by proposing \alkiax, the {A}daptive and {L}ocalized {K}ernel {I}nterpolation {A}lgorithm with e{X}trapolated reproducing kernel Hilbert space norm.
\alkiax\ is a non-iterative algorithm that ensures numerically well-conditioned computations, a fast-to-evaluate approximating function, and the guaranteed satisfaction of any desired bound on the approximation error.
Hence, \alkiax\ automatically computes an explicit function that approximates the MPC, yielding a controller suitable for safety-critical systems and high sampling rates.
We apply \alkiax\ to approximate two nonlinear MPC schemes, demonstrating reduced computational demand and applicability to realistic problems.
% \newline
% In this paper, we address the problem of automatically approximating nonlinear model predictive control (MPC) schemes with closed-loop guarantees.
% First, we discuss how this problem can be reduced to a function approximation problem, which we then tackle by proposing \alkiax, the {A}daptive and {L}ocalized {K}ernel {I}nterpolation {A}lgorithm with e{X}trapolated reproducing kernel Hilbert space norm.
% \alkiax\ is a non-iterative algorithm that ensures numerically well-conditioned computations, a fast-to-evaluate approximating function, and the guaranteed satisfaction of any desired bound on the approximation error.
% Hence, \alkiax\ automatically computes an explicit function that approximates the MPC, yielding a controller suitable for safety-critical systems and high sampling rates.
% In a numerical experiment, we apply \alkiax\ to a nonlinear MPC scheme, demonstrating reduced offline computation and online evaluation time compared to a state-of-the-art method.
\end{abstract}

\begin{IEEEkeywords}
% Approximate Model Predictive Control, Kernel-based Methods, Nonparametric Regression, Reproducing Kernel Hilbert Spaces
% Kernel-based approximation,
% \biggerchange{approximate MPC},
% nonlinear predictive control,
% machine learning
%optimization algorithm
NL predictive control;
Kernel-based function approximation;
Machine learning;
Constrained control
\end{IEEEkeywords}

% !TEX root =  ../main.tex 
\section{Introduction}
\label{sec:introduction}
\IEEEPARstart{M}{odel}
predictive control (MPC)~\cite{Rawlings2017Model} is an optimization-based control method for nonlinear constrained systems, in which the applied control input is implicitly defined by the solution of a nonlinear program (NLP). 
Hence, online application of MPC  under real-time requirements is challenging for fast applications like robotics.
This has motivated extensive research on approximate MPC schemes~\cite{\AMPCcite} to obtain controllers suitable for high sampling rates.
Specifically, these approaches approximate the MPC offline, \ie prior to applying the controller.
This offline approximation should ideally be automatic %to enable a simple \sidenote{application}.
to simplify its application. 
%\biggerchange{Notably, the MPC is approximated offline, \ie a priori to applying the controller, and the offline approximation should be automatic to enable a simple computation of the approximate MPC.}
Furthermore, the approximate MPC should retain closed-loop guarantees regarding stability and constraint satisfaction to allow for application on safety-critical systems.

Figure~\ref{fig:framework} outlines the proposed approach. 
We consider a \textit{robust} MPC formulation~\cite{\robustMPCcite}, which is designed such that the desired closed-loop guarantees remain valid under input disturbances below a user-chosen error bound $\error>0$.
By approximating the feedback law implicitly defined by this MPC up to a tolerance~$\error$, the approximate MPC preserves all control-theoretic guarantees induced by the
MPC.
Hence, approximating the MPC can be cast as a function approximation problem by sampling state and corresponding optimal inputs obtained by solving the MPC offline.
To address this function approximation problem,
we propose \alkiax, the \textbf Adaptive and \textbf Localized \textbf Kernel \textbf Interpolation \textbf Algorithm with e\textbf{X}trapolated reproducing kernel Hilbert space (RKHS) norm.
\alkiax\ automatically computes an explicit function that approximates the MPC with a uniform approximation error~$\error$, resulting in a cheap-to-evaluate approximate MPC with guarantees on stability and constraint satisfaction.

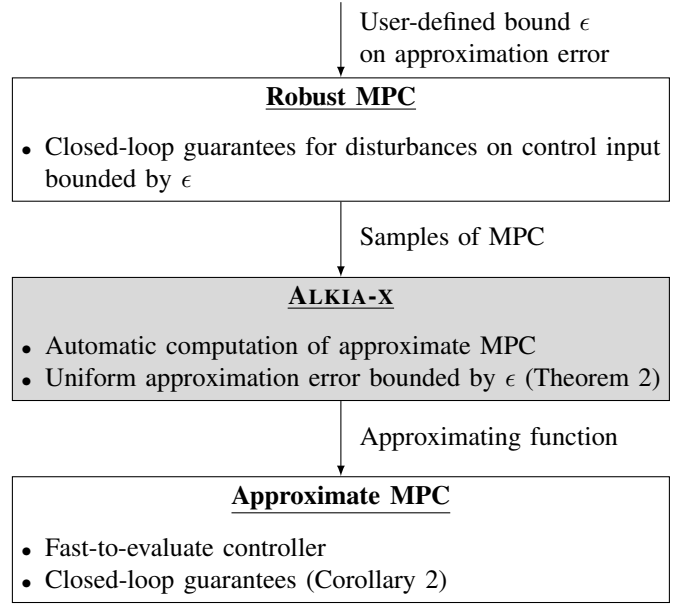
\begin{figure}[t]
    \centering
    \begin{center}
    % \hspace*{-1cm}
    % !TEX root =  ../main.tex 
\begin{tikzpicture}
     \node[draw, align=left, minimum width=8.5cm] at (0,0) (A) {
     \begin{minipage}{8.5cm}
     \centering
         \textbf{\underline {Robust MPC}}
     \end{minipage}
      \\~\\
    % \hspace*{-1em}
     \begin{minipage}{8.5cm}
         \begin{itemize}[left=0em]
             \item          Closed-loop guarantees for disturbances on control input bounded by~$\error$
         \end{itemize}
     \end{minipage}
     };
    \node[draw, align=left, fill=gray!30, minimum width=8.5cm, below=1cm of A.south] (B) {
     \begin{minipage}{8.5cm}
         \centering
         \textbf{\underline{\alkiax}}
     \end{minipage}
     \\~\\
    \begin{minipage}{8.5cm}
        \begin{itemize}[left=0em]
            \item Automatic computation of approximate MPC
            \item Uniform approximation error bounded by $\error$ (Theorem~\ref{th:alkiax}) 
        \end{itemize}
    \end{minipage}
     };
     \node[draw, align=left, minimum width=8.5cm, below=1cm of B.south] (C) {
     \begin{minipage}{8.5cm}
     \centering
     \textbf{\underline{Approximate MPC}}
     \end{minipage}
     \\~\\
        \begin{minipage}{8.5cm}
            \begin{itemize}[left=0em]
                \item Fast-to-evaluate controller
                \item Closed-loop guarantees (Corollary~\ref{co:MPC})
            \end{itemize}
        \end{minipage}
     };
     \draw[-latex] ([yshift=1cm]A.north) -- (A.north);
     \draw[-latex] (A.south) -- (B.north);
     \draw[-latex] (B.south) -- (C.north);
\node[align=left, text width=5cm] at  ($(A.north) + (2.75,0.5)$) {User-defined bound~$\error$ \\ on approximation error};
 \node[align=left, text width=5cm] at ($(B.north) + (2.75,0.5)$) {Samples of MPC};
     \node[align=left, text width=5cm] at ($(C.north) + (2.75,0.5)$) {Approximating function};
\end{tikzpicture}
    \end{center}
    \caption{
    A high level illustration of the proposed framework for approximate MPC with closed-loop guarantees.  
    \alkiax\ automatically computes an explicit function that approximates a robust MPC scheme up to any specified accuracy~$\error$, which yields a fast-to-evaluate approximate MPC with closed-loop guarantees.
    Notably, the proposed process is \emph{non-iterative}
   % \ie computing the approximating function using \alkiax\ directly implies the guaranteed satisfaction of a uniform approximation error bound~$\error$. %
    due to the uniform error bounds inherently guaranteed by \alkiax.
    % eliminating iterations between the approximation itself and the obtained closed-loop guarantees.
    }
    \label{fig:framework}
\end{figure}

%linear:exact, approx
\subsubsection*{Related work}
\label{sec:related_work}
\color{black}
Linear MPC schemes can be exactly characterized as an explicit piecewise affine function over a polyhedral partition of the domain, which is known as explicit MPC~\cite{Bemporad2002Explicit, Alessio2009Explicit}. 
However, this exact solution tends to suffer from scalability issues, which has motivated research on approximate explicit MPC schemes.  
In~\cite{Johansen2003Tree}, the domain is partitioned into hypercubes and the MPC is approximated using piecewise affine functions. 
Notably, \cite[Lemma 1]{Johansen2003Tree} ensures that the approximate MPC respects the constraints by using a constrained optimization problem that leverages convexity. 
A similar approximate MPC is suggested in~\cite{Summers2011Multiresolution}, using a more general continuous interpolation on each hypercube.
However, these approaches primarily rely on convexity, and hence cannot be applied to nonlinear MPC schemes.

An overview of explicit MPC for nonlinear systems is given in~\cite{Grancharova2012Explicit}.
A very appealing approach is to treat the MPC as a black-box function and use supervised machine learning techniques to obtain an explicit approximation. 
Approximating MPC schemes with neural networks (NNs) is a particularly popular approach~\cite{Chen2018Approx,Karg2020Efficient,Chen2019Large,Hose2023Approximate,Zhang2021Policy,Karg2021Probabilistic,Hertneck2018Learning,Nubert2020Robot, Pin2013Approximate}, which was first suggested in~\cite{parisini1998nonlinear}.
However, it is difficult to ensure that the resulting NN provides desirable closed-loop properties, see the recent review paper~\cite{gonzalez2023neural}. 
One solution to this problem is adding a safety filter to change the output of the NN~\cite{Chen2018Approx,Karg2020Efficient,Chen2019Large,Hose2023Approximate,Zhang2021Policy}. % to ensure these properties.  
For linear systems, this can be achieved with a projection on a robust control invariant set~\cite{Chen2018Approx,Karg2020Efficient} or by using additional active set iterations~\cite{Chen2019Large}.
For general nonlinear systems, an implicit safety filter can be obtained by validating the safety of the NN policy online and using an MPC-intrinsic fallback in case of predicted constraint violations~\cite{Hose2023Approximate}, see also~\cite{Zhang2021Policy}. 
Other solutions include a posteriori safety validations of the NN using statistical validation~\cite{Karg2021Probabilistic}.

A constructive approach to obtain closed-loop guarantees under approximation errors is to leverage \textit{robust} MPC design methods~\cite{\robustMPCcite}.
For instance, \cite{Pin2013Approximate,Hertneck2018Learning,Nubert2020Robot} first design an MPC scheme that is robust with respect to input disturbances bounded by~$\error$ and then use NNs to approximate this MPC with an approximation error smaller than~$\error$.
However, guaranteeing the desired approximation error bound for NNs is challenging. 
In~\cite{Hertneck2018Learning,Nubert2020Robot}, this issue is addressed by using statistical tools to validate the approximation accuracy with high probability. 
Nevertheless, this separate validation yields a nontrivial offline design consisting of multiple iterations between training the NN and validating its approximation.

% non-parametric
A natural solution to this problem are nonparametric regression tools, such as kernel-based methods~\cite{\kernelcite}
or set-membership identification~%
\cite{Calliess2014PhD,milanese2004set,Canale2009Lipschitz,Canale2010Lipschitz, Canale2014Nonlinear,boggio2022set}, which both provide error bounds. 
Nonparametric set-membership estimation typically assumes Lipschitz continuity and 
builds a non-falsified piecewise affine set that contains the ground truth, which is also called Kinky inference~\cite{Calliess2014PhD} or Lipschitz interpolation~\cite{milanese2004set}. 
Applications of such tools to approximate nonlinear MPC schemes are explored in~\cite{Canale2009Lipschitz, Canale2010Lipschitz, Canale2014Nonlinear,boggio2022set}.
Similarly, in~\cite{Ganguly2023Interpolation}, a known Lipschitz bound is utilized to approximate an MPC with guaranteed error bound~$\error$, using instead a 
%(nonparametric) 
quasi-interpolation approach.
Kernel-based methods, such as 
kernel ridge regression (KRR)~\cite{\KRRcite}
or
Gaussian process (GP) regression~\cite{\GPcite},
enable smooth approximations and error bounds, assuming the ground truth lies in the corresponding RKHS.  
Applications of GPs to approximate MPC schemes are, \eg explored in~\cite{Binder2019GPs,Rose2023Learning}, where closed-loop guarantees on the approximate MPC are provided using a posteriori sampling, similar to~\cite{Hertneck2018Learning, Nubert2020Robot}. 
These nonparametric estimation methods can in principle achieve any desired approximation accuracy $\error$,  however, they are in general not computationally cheap-to-evaluate. 
In particular, the complexity of evaluating kernel-based approximations scales cubicly in the number of samples~\cite[Section~2.3]{Rasmussen2006Gaussian}, making it
challenging to use
for control applications.  %
For kernel-based methods, computationally cheap online evaluation can, \eg be ensured by only considering nearest-neighbor samples for the online evaluation~\cite[Section~4.A]{Liu2020Scalable}
or 
by iteratively dividing the input space and using local GPs~\cite{Lederer2020Realtime, Nguyen2008Local}.

Overall, most state-of-the-art methods lack closed-loop guarantees, 
are limited to linear systems,
yield computationally expensive approximate controllers,
or require an iterative design process using a posteriori validation of error bounds.

\subsubsection*{Contribution}
In this work, we present \alkiax, a novel algorithm based on kernel interpolation~\cite{\kernelinterpolationcite},
a noise-free special case of KRR,
to automatically approximate functions up to any desired accuracy.
Hence, \alkiax\ enables the automatic approximation of nonlinear MPC schemes with closed-loop guarantees (see Figure~\ref{fig:framework}).
\alkiax\ has the following properties:
\begin{enumerate}[label=(P\arabic*)]
\item Fast-to-evaluate approximating function; \label{property:fast}
\item Guaranteed satisfaction of desired approximation error; \label{property:error_satisfaction}
\item Bound on worst-case number of required samples; % to guarantee desired approximation error.
\label{property:error_samples}
\item Automatic and non-iterative algorithm with well-conditioned computations and ready-to-use \python\ implementation.\footnote{%
The code can be found here:
\url{https://github.com/tokmaka1/ALKIA-X}}
\label{property:numerical}
\end{enumerate}
Property~\ref{property:fast} enables the implementation of the approximate MPC for fast control applications with limited computational resources. %using the resulting approximate MPC to control fast applications with limited computational resources.
Property~\ref{property:error_satisfaction} ensures closed-loop guarantees on stability and constraint satisfaction on the approximate MPC and is the main theoretical contribution in combination with~\ref{property:error_samples}.
Properties~\ref{property:error_samples} and~\ref{property:numerical} enable a reliable and automatic approximation.
The aforementioned properties are obtained by developing, extending, and combining the following tools:
\begin{enumerate}[label=(T\arabic*)]
    \item Localized kernel interpolation; \label{tool:localized}
    \item Adaptive sub-domain partitioning and length scale adjustment; \label{tool:adaptive}
    \item RKHS norm extrapolation. \label{tool:RKHS}
\end{enumerate}
Specifically, we follow a localized kernel interpolation approach~\ref{tool:localized} by only using a subset of data, resulting in a piecewise-defined approximating function.
Additionally, we adaptively partition the domain into sub-domains~\ref{tool:adaptive}, adjust the length scale accordingly, and sample equidistantly in each sub-domain.
Finally, we derive a heuristic RKHS norm extrapolation~\ref{tool:RKHS}, eliminating the need for an oracle to provide the RKHS norm.

Overall, \alkiax\ successfully addresses the problem of approximating nonlinear MPC schemes with closed-loop guarantees by yielding a fast-to-evaluate approximating function~\ref{property:fast} and guaranteeing any desired approximation accuracy~\ref{property:error_satisfaction} with a reliable and automatic algorithm~\ref{property:error_samples},~\ref{property:numerical}.
% \biggerchange{
% \alkiax\ approximates an academic nonlinear MPC scheme in below \SI{9}{\hour}, yielding an approximate MPC with closed-loop guarantees with an online evaluation time of below \SI{30}{\micro\second}, thus outperforming prior work~\cite{Hertneck2018Learning} by over one order of magnitude.
% Moreover, we demonstrate the practicability of \alkiax\ by approximating an MPC corresponding to a real-world application~\cite[Chapter~4.5]{Bonzanini2022Thesis}, yielding an approximate MPC with only~\SI{33}{\mega\byte} of memory and an online evaluation time of \SI{180}{\micro\second}.}
We demonstrate the performance of \alkiax\ by approximating the MPC schemes for two nonlinear systems:
\begin{enumerate}
    \item A simple continuous stirred tank reactor from prior work~\cite{Hertneck2018Learning}. 
     \alkiax\ computes the approximate MPC in~\SI{9}{\hour}, yielding a controller with closed-loop guarantees and an online evaluation time below \SI{50}{\micro\second}, thus outperforming prior work~\cite{Hertneck2018Learning} by over one order of magnitude.
    \item A realistic application to control a cold atmospheric plasma device~\cite[Chapter~4.5]    {Bonzanini2022Thesis}.
    \alkiax\ computes the approximate MPC in~\SI{66}{\hour}, yielding a controller with only~\SI{33}{\mega\byte} of memory and an online evaluation time of \SI{100}{\micro\second}.
\end{enumerate}

\subsubsection*{Outline}
We present the problem setting in Section~\ref{sec:problem_setting}, where we formally reduce the MPC approximation problem into a function approximation problem. 
We use kernel interpolation to tackle the function approximation problem, for which we introduce preliminaries in Section~\ref{sec:kernel_interpolation}. 
Section~\ref{sec:alki} introduces the adaptive and localized kernel interpolation algorithm, including theoretical guarantees regarding the desired approximation error and worst-case sample complexity.
Section~\ref{sec:alkiax} extends the approach to unknown RKHS norms by using an RKHS norm extrapolation, yielding \alkiax.
Furthermore, Section~\ref{sec:alkiax_MPC} investigates approximating MPC schemes via \alkiax\ and the resulting closed-loop guarantees on the approximate MPC.
Section~\ref{sec:numerical} demonstrates the successful deployment of \alkiax\ for approximating two nonlinear MPC schemes. % in a numerical example. 
Finally, we conclude the paper in Section~\ref{sec:conclusion}.

\subsubsection*{Notation}
The set of non-negative real numbers is given by  $\mathbb{R}_{\geq 0}$. % is $\mathbb{R}_{\geq 0}$,
The set of positive real numbers % is $\mathbb{R}_{> 0}$,
and the set of natural numbers greater or equal to $N\in\mathbb{R}_{\geq 0}$
are denoted by $\mathbb{R}_{> 0}$ and $\mathbb{N}_{\geq N}$, respectively.
For a set 
$X \subseteq \mathbb{R}^N$,
we denote the cardinality by~$\mathrm{card}(X)$.
For a vector $x\in\mathbb{R}^n$, the Euclidean norm, the infinity norm, and the $1$-norm are denoted by
$\|x\|_2$, $\|x\|_\infty$, and $\|x\|_1$, respectively.
For a matrix $A\in\mathbb{R}^{n\times n}$, the induced infinity norm is denoted by $\|A\|_\infty$.
We define $\mathbf{1}_{n}=[1, \ldots, 1]^\top \in \mathbb{R}^{n}$ and $\mathbf{0}_{n}=[0, \ldots, 0]^\top \in \mathbb{R}^{n}$.
For two  functions $f{:}\; \mathbb{R}_{\geq 0}\rightarrow \mathbb{R}_{\geq0}$ and $g{:}\; \mathbb{R}_{\geq 0}\rightarrow \mathbb{R}_{\geq0}$, we write $f(\error) = \mathcal{O}(g(\error))$ 
if $\frac{f(\epsilon)}{g(\epsilon)} < \infty$
as $\epsilon \rightarrow 0$ with $\error >0$.
Moreover, a function $f{:}\; \mathbb{R}_{\geq 0} \rightarrow \mathbb{R}_{\geq 0}$ belongs to class $\mathcal{K}_\infty$ if it is strictly increasing, continuous, and $f(0)=0$.

% for TAC submission: include \cite{Rose2023Learning}
% !TEX root =  ../main.tex 
\section{Problem Setting and Preliminaries}
\label{sec:p_and_p}
First, we describe the problem setting (Section~\ref{sec:problem_setting})
before
giving preliminaries on kernel interpolation (Section~\ref{sec:kernel_interpolation}).
Then, we discuss the computation of an upper bound on the power function (Section~\ref{sec:power_function}).

\subsection{Problem setting}\label{sec:problem_setting}
We consider a general nonlinear dynamical system 
$
    x_{t+1}=g(x_t,u_t)
$
with system state $x_t\in\mathbb{R}^n$, control input $u_t\in\mathbb{R}^{n_{\mathrm{u}}}$, and time index $t\in\mathbb{N}$. The control goal is to stabilize the steady-state~$x_{\mathrm{s}}$ while ensuring the satisfaction of user-specified state and input constraints, i.e., $x_t\in\domain \subseteq \mathbb{R}^n$, $u_t\in\mathcal{U}\subseteq \mathbb{R}^{n_{\mathrm{u}}}$, $\forall t\in\mathbb{N}$.
MPC provides a constructive approach to define a feedback law $u=f(x)$  that ensures asymptotic stability and constraint satisfaction.
\subsubsection*{Approximate MPC}
The feedback law~$f$ is based on the solution of a finite-horizon optimal control problem and solving such an NLP during online operation is computationally expensive.
Hence, we aim to replace $f$ with an explicit approximate feedback law $u = h(x)$ that we construct offline.
The main challenge is to ensure that the approximate MPC~$h$ inherits the closed-loop guarantees induced by the MPC~$f$. 
\subsubsection*{Solution approach}
The proposed framework is outlined in Figure~\ref{fig:framework}.
% Not focus on designing here, but maybe just working and then saying that f is just robust ...
Analogous to the approach in~\cite{Hertneck2018Learning, Nubert2020Robot}, we consider an MPC feedback law~$f$ that is robust with respect to input disturbances bounded by~$\error$.
Hence, the approximate MPC yields desirable closed-loop guarantees if $\|f(x)-h(x)\|_\infty \leq \error$ for all $x\in\domain$. %, see  Corollary~\ref{co:MPC} for a more formal statement.

\subsubsection*{Function approximation}
Consequently, the problem of determining an approximate MPC with closed-loop guarantees reduces to computing an explicit function $h$ that approximates the ground truth $f$ with a uniform approximation error bounded by~$\error$, \ie
$\|f(x)-h(x)\|_\infty \leq \error$ for all $x \in \domain$.
The ground truth~$f$ is implicitly defined through the solution of an NLP, and hence we regard~$f$ as a black-box function that we can query to receive noise-free evaluations.
In the following, we focus solely on this general function approximation problem. % and revisit the resulting control properties of the approximate MPC scheme only in Section~\ref{sec:alkiax_MPC}.
Specifically, in this paper, we solve this function approximation problem by proposing a novel algorithm based on kernel interpolation. 
We revisit the resulting control properties of the approximate MPC scheme later in Section~\ref{sec:alkiax_MPC}.
%In this paper, we propose a novel algorithm based on kernel interpolation to automatically solve this function approximation problem.
\subsubsection*{Simplifications}
We consider, w.l.o.g., a \emph{scalar} ground truth $f{:}\;\domain\rightarrow \mathbb R$ and approximating function $h{:}\;\domain\rightarrow \mathbb R$, and thus
%we 
require
\begin{equation} \label{eq:uniform_error}
\lvert f(x) - h(x) \rvert  \leq \error \quad \forall x \in \domain.
\end{equation}
For vector-valued functions, each output dimension can be approximated individually, resulting in multiple scalar approximation problems ensuring $\|f(x)-h(x)\|_\infty \leq \error$ for all $x\in\domain$.
For simplicity of exposition, we assume that the domain is a unit cube, \ie $\domain=[0,1]^n$.

\subsection{Preliminaries on kernel interpolation}
\label{sec:kernel_interpolation}
In this section, we introduce kernel interpolation, see~\cite{\kernelinterpolationcite} for more details.
First, we collect the standing assumptions on the kernel $k{:} \; \domain {\times} \domain \rightarrow \mathbb{R}_{> 0}$, which is the central object in kernel interpolation.
\begin{assumption}\label{asm:kernel}
The kernel~$k$ is
\begin{enumerate}
\item continuous.
\item (strictly) positive definite, \ie $\sum_{i=1}^N\sum_{j=1}^N \alpha_i \alpha_j k(x_i,x_j) > 0$ for all $\mathbb{R}^N \ni[\alpha_1,\ldots, \alpha_N]^\top \not = \mathbf{0}_N$, for all pairwise distinct $x_1, \ldots, x_N \in \domain$, and any $N \in \mathbb{N}$.
\item only a function of the Euclidean distance, \ie $\exists \widetilde{k}{:}\; \mathbb{R}_{\geq 0} \rightarrow (0,1]$ such that $k(x,x^\prime) = \widetilde{k}(\|x-x^\prime\|_2)$ for all $x,x^\prime \in \domain$.
Moreover, $\widetilde k$ is strictly decreasing and $\widetilde{k}(0) = 1$.
\end{enumerate}
\end{assumption}

\noindent
The first two conditions in Assumption~\ref{asm:kernel} are standard assumptions for kernels~\cite{Maddalena2021Deterministic}.
Condition~$3)$ implies a uniform length scale in all dimensions and that $k$ is a radial kernel~\cite[Chapter~4.7]{Steinwart2008SVM}.
% Moreover, it states that the value of the kernel function for two samples decreases as their Euclidean distance  increases.
We can assume $\widetilde k(0)=1$ w.l.o.g. by normalizing.
Assumption~\ref{asm:kernel} holds for many common kernels, such as the squared-exponential or the Matérn kernel~\cite{Rasmussen2006Gaussian}. 

\begin{comment}
Next, we define the function 
\begin{align} \label{eq:knorm}
    \knorm \coloneqq 1-\widetilde k.
\end{align}
Due to Assumption~\ref{asm:kernel}, the function $\knorm{:}\; \mathbb{R}_{\geq 0} \rightarrow [0,1)$ belongs to class $\mathcal{K}_\infty$, % \cite[Definition~4.2]{Khalil2002Nonlinear},
% \footnote{%
% A function $\knorm{:}\; \mathbb{R}_{\geq} 0 \rightarrow [0,1)$ belongs to class $\mathcal{K}_\infty$ if it is strictly increasing, continuous, and $\knorm(0)=0$.}
and thus its inverse $\kinv{:}\; [0,1) \rightarrow \mathbb{R}_{\geq 0}$ is a continuous and strictly increasing function, \ie it is again a class $\mathcal{K}_\infty$ function~\cite[Lemma~4.2]{Khalil2002Nonlinear}.
\end{comment}

Let $X=\{x_1, \ldots, x_N\} \subseteq \domain$ be a set of samples and assume that~$x_1, \ldots, x_N$ are pairwise distinct.
We collect the noise-free evaluations of $f$ on the inputs $X$ in a vector denoted by
\begin{align} \label{eq:f}
f_X = [f(x_1), \ldots, f(x_N)]^\top. %\quad X=[x_1, \ldots, x_N], \; X \subseteq \domain.
\end{align}
Given samples $X$, we denote by  $K_X \in \mathbb{R}_{> 0}^{N \times N}$ the symmetric and positive definite covariance matrix, also known as the kernel matrix or the Gram matrix~\cite{Steinwart2008SVM, Rasmussen2006Gaussian}, \ie the entry at row~$i$ and column~$j$ is $k(x_i,x_j)$ for any $i,j \in \{1, \ldots N\}$.
Given samples $X$,
%Given samples $X$,
the covariance vector $k_X{:} \; \domain \rightarrow \mathbb{R}_{> 0}^{N}$ is defined as
\begin{align} \label{eq:k_x}
k_X(x) \coloneqq \left[k(x,x_1), k(x,x_2), \ldots, k(x,x_N)\right]^\top.
\end{align}
With the introduced definitions, we can write the resulting approximating function $h_X{:}\; \domain \rightarrow \mathbb{R}$ as~\cite[Section~3.2]{Kanagawa2018Gaussian}
\begin{align}\label{eq:h}
h_X(x) = f_X^\top K_X^{-1} k_X(x).
\end{align}
The RKHS norm is a characteristic value of RKHS functions that quantifies their complexity, and which we later use to bound the approximation error.
We denote the RKHS norm of a function with respect to the RKHS of kernel $k$ by $\|\cdot\|_k$ and the RKHS norm of the approximating function~\eqref{eq:h} satisfies
\begin{align} \label{eq:h_RKHS}
\|h_{X}\|_k = \sqrt{f_{X}^\top K_X^{-1} f_X}.
\end{align}
Notably, approximating function~\eqref{eq:h} is the unique solution of a variational problem~\cite[Theorem~3.5]{Kanagawa2018Gaussian}, interpolating the given samples, \ie $h_X(x) = f(x)$ for all $x \in X$, with the minimal RKHS norm. %among all functions from the RKHS with this interpolation property. 
Thus, $\|f\|_k \geq \|h_X\|_k$ holds for all $X \subseteq \domain$.
Henceforth, we assume that the unknown ground truth $f$ is a member of the RKHS of the chosen kernel $k$, which trivially implies $\|f\|_k < \infty$. 
This is a standard assumption for kernel-based approximation (see \eg\cite{\RKHSnormcite}), which we discuss in more detail later (see Assumptions~\ref{asm:oracle_RKHS_adaptive} and~\ref{asm:extrapolation}).

Next, we define the power function $P_X{:} \;\domain \rightarrow [0,1)$~\cite[Section~9.3]{Fasshauer2015Kernel} 
\begin{align}
P_X(x) \coloneqq \sqrt{1-k_X(x)^\top K_X^{-1} k_X(x)} \label{eq:power},
\end{align}
which is used in the following proposition to characterize input-dependent error bounds for kernel interpolation.
%
% The following proposition provides an input-dependent error bound for kernel interpolation.
\begin{proposition} \label{le:maddalena}
\cite[Section~14.1]{Fasshauer2015Kernel}
% Let Assumption~\ref{asm:kernel} and~\ref{asm:f_in_RKHS} hold. 
% Further assume that the elements of the sample set $X \subseteq \domain$ are pairwise distinct.
Let Assumption~\ref{asm:kernel} hold.
Then, we have
\begin{align} \label{eq:error_lemma}
\lvert f(x) - h_X(x) \rvert \leq P_X(x) \sqrt{\|f\|_k^2 - \|h_X\|_k^2}
\end{align}
for all $x \in \domain$ and any samples $X \subseteq \domain$.
\end{proposition}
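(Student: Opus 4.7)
The plan is to prove Proposition~\ref{le:maddalena} by exploiting the reproducing property of the RKHS and an orthogonal decomposition argument that is classical in kernel interpolation theory. The core idea is that the interpolant $h_X$ is the orthogonal projection of $f$ onto the $N$-dimensional subspace $V_X \coloneqq \mathrm{span}\{k(\cdot, x_1), \ldots, k(\cdot, x_N)\}$ of the RKHS, while $P_X(x)$ measures exactly how far $k(\cdot, x)$ lies from this same subspace.

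First, I would recall that $h_X$ as given in~\eqref{eq:h} lies in $V_X$ (since $h_X = \sum_i \alpha_i k(\cdot, x_i)$ with $\alpha = K_X^{-1} f_X$) and satisfies $h_X(x_i) = f(x_i)$ for all $i$. Using the reproducing property $\langle g, k(\cdot, x_i)\rangle_k = g(x_i)$, this implies $\langle f - h_X, k(\cdot, x_i)\rangle_k = 0$ for every $i$, so $f - h_X \perp V_X$. The Pythagorean identity then yields
\begin{align*}
\|f - h_X\|_k^2 = \|f\|_k^2 - \|h_X\|_k^2,
\end{align*}
which accounts for the second factor on the right-hand side of~\eqref{eq:error_lemma} and (as a byproduct) justifies that the square root is well defined.

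Next, for an arbitrary $x \in \domain$, I would write the pointwise error as
\begin{align*}
f(x) - h_X(x) = \langle f - h_X, \, k(\cdot, x) \rangle_k
\end{align*}
via the reproducing property. Decomposing $k(\cdot, x) = p_X(\cdot, x) + r_X(\cdot, x)$ into its orthogonal projection $p_X(\cdot, x) \in V_X$ and its residual $r_X(\cdot, x) \perp V_X$, the inner product with $f - h_X \perp V_X$ reduces to $\langle f - h_X, r_X(\cdot, x)\rangle_k$, and Cauchy–Schwarz gives
\begin{align*}
\lvert f(x) - h_X(x) \rvert \leq \|f - h_X\|_k \cdot \|r_X(\cdot, x)\|_k.
\end{align*}
It then remains to identify $\|r_X(\cdot, x)\|_k$ with $P_X(x)$. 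A short calculation, using that the coefficients of the projection $p_X(\cdot, x)$ in the basis $\{k(\cdot, x_i)\}$ are $K_X^{-1} k_X(x)$, gives $\|r_X(\cdot, x)\|_k^2 = k(x,x) - k_X(x)^\top K_X^{-1} k_X(x) = 1 - k_X(x)^\top K_X^{-1} k_X(x) = P_X(x)^2$, where we used $\widetilde{k}(0) = 1$ from Assumption~\ref{asm:kernel}.

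The only mild subtlety is handling the projection cleanly without invoking heavy machinery: one has to verify that minimizing $\|k(\cdot, x) - \sum_i \beta_i k(\cdot, x_i)\|_k^2$ over $\beta \in \mathbb{R}^N$ yields the normal equations $K_X \beta = k_X(x)$, whose solution substituted back reproduces the power function in~\eqref{eq:power}. This is a routine finite-dimensional least-squares computation and is the main bookkeeping step; combining it with the two Pythagorean and Cauchy–Schwarz applications above completes the argument.
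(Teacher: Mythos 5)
Your proof is correct. The paper does not actually prove this proposition---it imports it by citation from \cite[Section~14.1]{Fasshauer2015Kernel} as a modified Golomb--Weinberger bound---and your argument (orthogonality of $f-h_X$ to $\mathrm{span}\{k(\cdot,x_i)\}$ via the reproducing property, the Pythagorean identity $\|f-h_X\|_k^2=\|f\|_k^2-\|h_X\|_k^2$, Cauchy--Schwarz against the projection residual of $k(\cdot,x)$, and the identification $\|r_X(\cdot,x)\|_k^2=k(x,x)-k_X(x)^\top K_X^{-1}k_X(x)=P_X(x)^2$ using $\widetilde{k}(0)=1$) is precisely the standard derivation underlying that citation, with all steps valid under the standing assumption that $f$ lies in the RKHS and the samples are pairwise distinct.
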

Note that Proposition~\ref{le:maddalena} is a modified version of the Golomb-Weinberger bound~\cite[Section~9.3]{Fasshauer2015Kernel}.
Based on Proposition~\ref{le:maddalena}, we can obtain \emph{uniform} bounds on the approximation error given a uniform upper bound on the power function~$P_X$ and the ground truth RKHS norm~$\|f\|_k$.
Additionally, the following lemma ensures continuity of the RKHS function.
\begin{lemma}\label{le:cont}
Let Assumption~\ref{asm:kernel} hold.
Then, for all $x, x^\prime \in \domain$, we have
\begin{align}
    \lvert f(x) - f(x^\prime) \rvert \leq \|f\|_k \sqrt{2\knorm(\|x-x^\prime\|_2)}
\end{align}
with
\begin{align} \label{eq:knorm}
     \knorm \coloneqq 1-\widetilde k \in \mathcal K_\infty.\footnotemark[2]
\end{align}
\end{lemma}
The proof can be found in Appendix~\ref{app:cont}.
Lemma~\ref{le:cont} later assists in proving sample complexity bounds for arbitrarily accurate approximating functions.
\footnotetext[2]{Due to Assumption~\ref{asm:kernel}, the function $\knorm{:}\; \mathbb{R}_{\geq 0} \rightarrow [0,1)$ belongs to class $\mathcal{K}_\infty$,
and thus its inverse $\kinv{:}\; [0,1) \rightarrow \mathbb{R}_{\geq 0}$ is a continuous and strictly increasing function, \ie it is again a class $\mathcal{K}_\infty$ function~\cite[Lemma~4.2]{Khalil2002Nonlinear}.}
\addtocounter{footnote}{1} % Increment footnote counter by 1

\subsection{Upper bound on the power function}\label{sec:power_function}
Based on~\eqref{eq:error_lemma}, we can ensure the desired uniform error bound~\eqref{eq:uniform_error} if a bound on the RKHS norm of the ground truth and a uniform upper bound on the power function $P_X$ are available.
However, finding the maximum of $P_X$ over the space $\domain$ in general requires the solution of a non-convex optimization problem.
We circumvent this issue by using a local approximation scheme with samples $X$ at the $2^n$ vertices of a cube.
In this symmetric setting, the center of the cube is the farthest away from the given samples.
Since \emph{radial} kernels (see  Assumption~\ref{asm:kernel}) are strictly decreasing functions of the Euclidean distance between two inputs, it is reasonable to assume that the maximum of the power function is at the center of the cube.
% The idea is formalized in Assumption~\ref{asm:max_power}.

\begin{assumption}\label{asm:max_power}
For any cube~$\widetilde\domain\subseteq\domain$ with $2^n$ samples~$X$ located at its vertices and with the center $\overline{x}=\frac{1}{2^n}\sum_{x \in X} x$, it holds that $P_{X}(x) \leq P_{X}(\overline{x})$ for all $x \in \widetilde\domain$.
\end{assumption}

\begin{figure}
    \centering
    \input{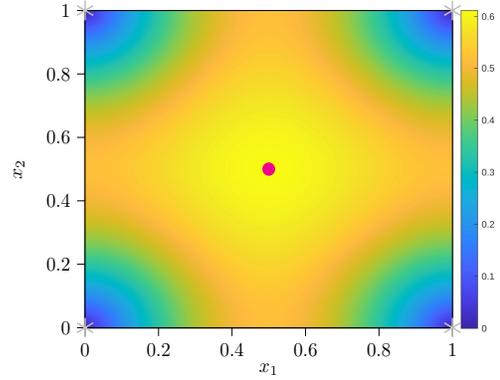}
    \caption{
  Power function $P_X$ on a cube with samples $X$ on its vertices, which are illustrated by the gray asterisks. The maximum of the power function is at the center and is highlighted by the magenta point. 
  This plot is generated with the Matérn kernel with $\nu=\nicefrac{3}{2}$ to compute the power function on the cube~$[0,1]^2$.
}
    \label{fig:quasiconcave}
\end{figure}
%Assumption~\ref{asm:max_power} allows to obtain the maximum of the power function on a cube, which we will use later (\cf Section~\ref{sec:alki}).
\noindent
Figure~\ref{fig:quasiconcave} shows an exemplary power function, highlighting that the maximum of the power function is indeed at the center of the cube.
% A more rigorous discussion about Assumption~\ref{asm:max_power} is presented in Appendix~\ref{app:power_function}.

% !TEX root =  ../main.tex 
\section{Adaptive and localized kernel interpolation algorithm}
\label{sec:alki}
In this section, we present the proposed adaptive and localized kernel interpolation algorithm to automatically compute a sufficiently accurate approximating function.
In Section~\ref{sec:alki_main_idea}, we describe the main idea before explaining the details in Section~\ref{sec:alki_detailed}.
We provide a theoretical analysis in Section~\ref{sec:alki_theory}, including the guaranteed error bound and a bound on the worst-case sample complexity. 

\subsection{Main idea} 
\label{sec:alki_main_idea}
A naïve approach to determine a sufficiently accurate approximating function would be to equidistantly sample the ground truth until the kernel interpolation ensures the desired error bound~\eqref{eq:uniform_error}.
The proposed approach modifies classic kernel interpolation by using \emph{local} approximations and an \emph{adaptive} adjustment of the length scale.

We use a \emph{localized} kernel interpolation approach~\ref{tool:localized} by locally sampling equidistantly and computing a piecewise-defined approximating function based on local cubes that only use a subset of the available samples.
This results in a fast online evaluation~\ref{property:fast}, 
%a high degree of parallelization in the offline learning,
a highly parallelized offline approximation,
and enables the reliable computation of a guaranteed bound on the approximation error~\ref{property:error_satisfaction},~\ref{property:error_samples}.

Additionally, we use an \emph{adaptive} adjustment of the length scale by partitioning the domain into sub-domains~\ref{tool:adaptive} and perform localized kernel interpolation on each sub-domain separately, where each sub-domain has an individual length scale and RKHS norm.
For every sub-domain, we can compute the sufficient number of samples that ensure an approximation error that is uniformly  bounded by $\error$. 
Moreover, we impose an upper bound on the number of samples per sub-domain using a hyperparameter $\overline p \in \mathbb{N}$.
If more samples are required to ensure the desired error bound, we partition the sub-domain and reduce the length scale.
This leads to an \emph{adaptive} sub-domain partitioning with denser sampling in harder-to-approximate areas.
Furthermore, upper-bounding the number of samples for each sub-domain enforces an upper bound on the condition number of the covariance matrices, leading to numerically reliable computations, even for millions of samples, without requiring any regularization~\ref{property:numerical}.

\subsection{Proposed algorithm} \label{sec:alki_detailed}
We partition the domain $\domain$ into sub-domains $\domain_a$, \ie$\domain = \cup_{a \in \mathcal{A}} \domain_a$, $\mathcal{A} \subseteq \mathbb{N}$, where each sub-domain is a cube.
% For each sub-domain $\domain_a \subseteq \domain$, we define $f_a$ as the restriction of the unknown ground truth $f$, \ie $f_a{:}\; \domain_a \subseteq \domain \rightarrow \mathbb{R}$ and $f_a(x) = f(x)$ for all $x \in \domain_a$.
Each sub-domain $\domain_a$ has an individual length scale $\ell_a$ and an individually scaled kernel $\kla$ with
\begin{align}\label{eq:scaled_kernel}
\kla(\cdot) \coloneqq \widetilde{k} \left( \frac{\cdot}{\ell_a} \right), \quad
\ell_a = \underset{x_i,x_j \in \domain_a}{\mathrm{max}}\|x_i-x_j\|_\infty.
\end{align}
Accordingly, the covariance matrix, the covariance vector, and the function $\knorm$ (see~\eqref{eq:knorm}) for each sub-domain $a \in \mathcal A$ are denoted by $K_{a,X}$, $k_{a,X}$, and $\knorm_{a}$, respectively.

We sample equidistantly in every sub-domain $\domain_a$ with grid size $\Delta x_a > 0$, which is chosen such that $\nicefrac{\ell_a}{\Delta x_a}=2^{p_a}$ with some later specified  $p_a \in \mathbb{N}_{\geq \underline p}$
and a  user-chosen hyperparameter~$\underline{p} > 0$.
For any $p \geq \underline p$, we denote by $X_{a,p}$ the set of $(1+2^p)^n$ equidistant samples on sub-domain $\domain_a$.
The total samples on the domain $\domain$ are thus given by $X=\cup_{a\in\mathcal{A}} X_{a,p_a}$. 
%Enforcing $p_a \geq \underline{p}$ introduces a lower bound on the number of samples employed per sub-domain.

Additionally, we divide the sub-domain $\domain_a$ into uniform local cubes $\domain_c$, \ie$\domain_a = \cup_{c \in \mathcal{C}_{a,p_a}} \domain_c$, $\cubes_{a,p_a} \subseteq \mathbb{N}$, of edge length $\Delta x_a$.
Each local cube $\domain_c$ has $2^n$ samples on its vertices, which we denote by $X_c$.
The center $\xcm$ of local cube $\domain_c$ is given by $\xcm=\frac{1}{2^n}\sum_{x \in X_c} x$.
The samples on sub-domain~$\domain_a \supseteq \domain_c$ can be obtained with $X_{a,p_a}=\cup_{c \in \cubes_{a,p_a}} X_c$.

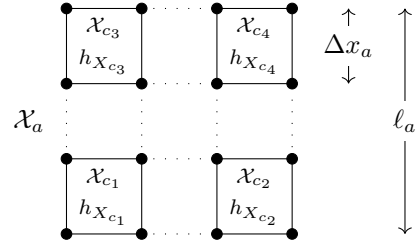
\begin{figure}
\centering
\begin{tikzpicture}

   \foreach \point [count=\i] in {
   (0,0), (1,0), (2,0), (3,0),
   (0,1),(1,1),(2,1),(3,1),
   (0,2), (1,2), (2,2), (3,2),
    (0,3), (1,3), (2,3), (3,3)
    } {
     \node[coordinate] (h\i) at \point { };
   }

    \filldraw (h1) circle (2pt);
    \filldraw (h2) circle (2pt);
    \filldraw (h3) circle (2pt);
    \filldraw (h4) circle (2pt);

    \filldraw (h5) circle (2pt);
    \filldraw (h6) circle (2pt);
    \filldraw (h7) circle (2pt);
    \filldraw (h8) circle (2pt);

    \filldraw (h9) circle (2pt);
    \filldraw (h10) circle (2pt);
    \filldraw (h11) circle (2pt);
    \filldraw (h12) circle (2pt);

    \filldraw (h13) circle (2pt);
    \filldraw (h14) circle (2pt);
    \filldraw (h15) circle (2pt);
    \filldraw (h16) circle (2pt);

    \node[align=left] at (-0.5, 1.5){{$\domain_a$}};
    \draw[<->] (4.5,3) --  (4.5,0) node[midway,fill=white] {{$\ell_a$}};
    \draw[<->] (3.75,3) --  (3.75,2) node[midway, fill=white] {{$\Delta x_a$}};

    \draw (h1)--(h2);
    \draw (h1)--(h5);
    \draw (h5)--(h6);
    \draw (h2)--(h6);

    \draw (h3)--(h4);
    \draw (h3)--(h7);
    \draw (h7)--(h8);
    \draw (h4)--(h8);

    \draw (h9)--(h10);
    \draw (h9)--(h13);
    \draw (h13)--(h14);
    \draw (h10)--(h14);

    \draw (h11)--(h12);
    \draw (h11)--(h15);
    \draw (h15)--(h16);
    \draw (h12)--(h16);

    \draw[loosely dotted] (h2)--(h3);
    \draw[loosely dotted] (h6)--(h7);
    \draw[loosely dotted] (h10)--(h11);
    \draw[loosely dotted] (h14)--(h15);

    \draw[loosely dotted] (h5)--(h9);
    \draw[loosely dotted] (h6)--(h10);
    \draw[loosely dotted] (h7)--(h11);
    \draw[loosely dotted] (h8)--(h12);
    \node[align=center] at (0.5,0.5){\footnotesize{$\domain_{c_1}$} \\ \footnotesize{$h_{X_{c_1}}$}};
    \node[align=center] at (2.5,0.5){\footnotesize{$\domain_{c_2}$} \\ \footnotesize{$h_{X_{c_2}}$}};
    \node[align=center] at (0.5,2.5){\footnotesize{$\domain_{c_3}$} \\ \footnotesize{$h_{X_{c_3}}$}};
    \node[align=center] at (2.5,2.5){\footnotesize{$\domain_{c_4}$} \\ \footnotesize{$h_{X_{c_4}}$}};
\end{tikzpicture}
\caption{Structure of the resulting uniform local cubes and the localized approximating functions on a sub-domain $\domain_a$ with length scale $\ell_a$ and grid size $\Delta x_a$.
Illustrated are the local cubes $\domain_c$, $c \in \{c_1,c_2,c_3,c_4\} \subseteq \mathcal{C}_{a,p_a}$.
}
\label{fig:cubes}
\end{figure}
The resulting structure is shown in Figure~\ref{fig:cubes}.
For each local cube $\domain_c$, we define a localized approximating function~$h_{X_c}$ that only depends on the $2^n$ samples~$X_c$, leading to a piecewise-defined approximating function and a localized kernel interpolation approach.

For the approximation procedure, we use function values that are shifted by an empirical mean value, \ie
\begin{align}\label{eq:shift}
    \widetilde{f}_{X_{a,p}} = f_{X_{a,p}} - \mu_a, \quad
    \mu_a \coloneqq \frac{\sum_{x \in {X_{a,\underline{p}}}}f(x)}{\mathrm{card}(X_{a,\underline p})}.
\end{align}
%the shifted function evaluations at $X_{a,p} \subseteq \domain_a$  for all  $p \in \mathbb{N}_{\geq \underline p}$.
Due to the continuity of the ground truth (see  Lemma~\ref{le:cont}), this shifting ensures arbitrarily small function values $\widetilde f_{X_{a,p}}$ for decreasing size of the sub-domain $\domain_a$, which later assists in guaranteeing an arbitrarily small approximation error $\error > 0$.
For any local cube $\domain_c \subseteq \domain_a$ with samples $X_c \subseteq X_{a,p_a}$, the localized kernel interpolation with shifted function values~\eqref{eq:shift} results in the shifted localized approximating function (see~\eqref{eq:h})
\begin{align}\label{eq:shifted_h_cube}
    \widetilde {h}_{X_{c}}(x) = \widetilde f_{X_c}^\top K_{a, X_c}^{-1} \widetilde k_{a, X_c}(x).
\end{align}

To construct the shifted localized approximating functions on a sub-domain $\domain_a$, we first query the ground truth at the inputs $X_{a,p_a}$ and shift the samples by the empirical mean~$\mu_a$.
With these samples, we compute the shifted localized approximating functions~\eqref{eq:shifted_h_cube} for all local cubes $\domain_c \subseteq \domain_a$.
Algorithm~\ref{alg:alki_function}  summarizes this procedure.

\begin{algorithm}
\begin{algorithmic}[1]
\Require $f$, $\kla$, $X_{a,p_a}$
    \State Sample $f_{X_{a,p_a}}$
 \Comment{\eqref{eq:f}}
    \State Compute mean $\mu_a$ and shifted values $\widetilde f_{X_{a,p_a}}$  \Comment{\eqref{eq:shift}} 
    \State Define cube partition $\mathcal{C}_{a,p_a}$ from $X_{a,p_a}$ 
\For {$c \in \cubes_{a,p_a}$} 
	\State Build approximating function $\widetilde h_{X_c}$
 \Comment{\eqref{eq:shifted_h_cube}}
\EndFor
%\State \Return $h_{a}(x) = \cup_{c \in \cubes_a} h_{X_c}(x)$
\end{algorithmic}
\caption{Shifted localized approximating functions}
\label{alg:alki_function}
\end{algorithm}

% To ensure that the shifted localized approximating functions computed in Algorithm~\ref{alg:alki_function} are sufficiently accurate, we first define the shifted ground truth  $\widetilde f_a$ on sub-domain $\domain_a$ as
In the following, we ensure that the shifted localized approximating functions computed in Algorithm~\ref{alg:alki_function} satisfy the desired error bound~$\error$.
First, we define the shifted ground truth $\widetilde f_a$ on sub-domain $\domain_a$ as
\begin{align}\label{eq:shifted_f}
    \widetilde{f}_{a}(x) \coloneqq f(x) - \mu_a.
\end{align}
Given the sub-domain partitioning and the local approximation scheme, the desired approximation error bound~\eqref{eq:uniform_error} reduces to
\begin{align} \label{eq:uniform_error_cube}
\lvert \widetilde f_a(x) - \widetilde h_{X_{c}}(x)\rvert \leq \error \quad \forall a \in \mathcal{A},  \; \forall c \in \cubes_{a,p_a}, \;    \forall x \in \domain_c.
\end{align}
Furthermore, analogous to standard kernel-based approximation methods
(see \eg\cite{\RKHSnormcite}), 
we require knowledge of an upper bound on the corresponding RKHS norm.
\begin{assumption}\label{asm:oracle_RKHS_adaptive}
For any $a\in\mathcal{A}$, $\widetilde f_a$ is a member of the RHKS of kernel $\kla$. %, \ie $\|\widetilde f_a\|_\kla < \infty$.
Moreover, we have access to an oracle $\rkhso \in (0, \infty)$, which is an upper bound on the RHKS norm, \ie $\rkhso \geq \|\widetilde f_a\|_\kla$.
\end{assumption}
\noindent
In Section~\ref{sec:alkiax}, we introduce a heuristic approximation that replaces the oracle assumption.

The following result provides a lower bound on the number of samples that ensure the desired error bound $\error$ for each sub-domain $\domain_a$.

\begin{proposition} \label{pr:max_error}
% Let Assumptions~\ref{asm:kernel}, \ref{asm:max_power}, and~\ref{asm:oracle_RKHS_adaptive} hold.
Let Assumptions~\ref{asm:kernel}--\ref{asm:oracle_RKHS_adaptive} hold.
For any $a \in \mathcal{A}$ and any~$\error > 0 $, there exists a $p_a^* \in \mathbb{N}$ 
%according to~\eqref{eq:p_log},
such that $p_a \geq p_a^*$ implies $P_{X_{c}}(\xcm) \rkhso \leq \error$ for all  $c \in \cubes_{a,{p_a}}$.
Furthermore, if $p_a \geq p_a^*$ for all $a \in \mathcal{A}$, then the localized approximating functions according to Algorithm~\ref{alg:alki_function} satisfy~\eqref{eq:uniform_error_cube}.
\end{proposition}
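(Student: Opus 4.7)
The plan is to split the argument into two parts. First, I would find $p_a^\ast$ such that $P_{X_c}(\xcm)\,\overline{\Gamma}_a \leq \error$ uniformly over all local cubes $c \in \mathcal{C}_{a,p_a}$. Second, I would deduce \eqref{eq:uniform_error_cube} directly from Proposition~\ref{le:maddalena} combined with Assumptions~\ref{asm:max_power} and~\ref{asm:oracle_RKHS_adaptive}.

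For the first part, the key observation is the well-known variational characterization
\begin{equation*}
P_{X_c}(\xcm)^2 \;=\; \min_{\alpha \in \mathbb{R}^{2^n}} \Bigl\| \kla(\xcm, \cdot) - \sum_{x_i \in X_c} \alpha_i\,\kla(x_i, \cdot) \Bigr\|_\kla^2,
\end{equation*}
which follows from completing the square with optimizer $\alpha^\ast = K_{a,X_c}^{-1}\,k_{a,X_c}(\xcm)$. Choosing the trial $\alpha$ that places unit weight on a single vertex $x_v \in X_c$ and zeros elsewhere, expanding, and using $\kla(x,x)=1$ yields the crude but sufficient estimate
\begin{equation*}
P_{X_c}(\xcm)^2 \;\leq\; 2 - 2\,\kla(\xcm, x_v) \;=\; 2\,\knorm_a(\|\xcm - x_v\|_2).
\end{equation*}
Because $\xcm$ is the centroid of a cube of edge length $\Delta x_a = \ell_a / 2^{p_a}$, the distance satisfies $\|\xcm - x_v\|_2 = \tfrac{\sqrt{n}\,\ell_a}{2^{p_a+1}}$, a purely geometric quantity that is independent of $c$. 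As $p_a \to \infty$ this tends to zero, and because $\knorm_a \in \mathcal{K}_\infty$ is continuous with $\knorm_a(0)=0$, so does $\knorm_a$ evaluated at it. Consequently, picking $p_a^\ast$ large enough that $\sqrt{2\,\knorm_a(\sqrt{n}\,\ell_a/2^{p_a^\ast + 1})}\,\overline{\Gamma}_a \leq \error$ (explicitly available via $\kinv$) guarantees the desired bound for every $c \in \mathcal{C}_{a,p_a}$ simultaneously.

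For the second part, I would apply Proposition~\ref{le:maddalena} to the shifted ground truth $\widetilde f_a$, which lies in the RKHS of $\kla$ by Assumption~\ref{asm:oracle_RKHS_adaptive}, and to the localized interpolant $\widetilde h_{X_c}$ on $\mathcal{X}_c$:
\begin{equation*}
|\widetilde f_a(x) - \widetilde h_{X_c}(x)| \;\leq\; P_{X_c}(x)\sqrt{\|\widetilde f_a\|_\kla^2 - \|\widetilde h_{X_c}\|_\kla^2}\,.
\end{equation*}
The minimum-norm interpolation property $\|\widetilde h_{X_c}\|_\kla \leq \|\widetilde f_a\|_\kla$ bounds the square root by $\|\widetilde f_a\|_\kla$, which is itself at most $\overline{\Gamma}_a$ by Assumption~\ref{asm:oracle_RKHS_adaptive}. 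Meanwhile Assumption~\ref{asm:max_power} yields $P_{X_c}(x) \leq P_{X_c}(\xcm)$ on $\mathcal{X}_c$, and chaining with the first part delivers \eqref{eq:uniform_error_cube}.

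The main obstacle is controlling $P_{X_c}(\xcm)$. The single-vertex trial substitution is deliberately lossy, but it already suffices because (i) its decay in $p_a$ is dictated only by $\knorm_a$ and the cube geometry, and (ii) crucially, the bound is \emph{uniform in $c$}, which is what lets Algorithm~\ref{alg:alki_function} certify the error across every local cube inside $\mathcal{X}_a$ with a single threshold $p_a^\ast$.
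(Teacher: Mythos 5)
Your proof is correct and arrives at the same intermediate bound $P_{X_c}(\xcm)\leq\sqrt{2\knorm_a\left(\sqrt{n}\Delta x_a/2\right)}$ as the paper, but by a genuinely different route for the key step. The paper exploits the symmetry of the cube: the covariance matrix $K_{a,X_c}$ has constant row sums and the covariance vector at the center is a constant multiple of $\mathbf{1}_{2^n}$, so that vector is an eigenvector of $K_{a,X_c}$ with eigenvalue $\beta\leq 2^n$; this lets the quadratic form in the power function be evaluated exactly as $\widetilde k_a(\sqrt{n}\Delta x_a/2)^2\,2^n/\beta$, after which the factorization $1-\widetilde k^2=(1-\widetilde k)(1+\widetilde k)\leq 2\knorm_a$ yields the bound. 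You instead invoke the variational (best-approximation) characterization of the power function and substitute the single-vertex trial coefficient, obtaining $P_{X_c}(\xcm)^2\leq 2\knorm_a(\|\xcm-x_v\|_2)$ in one line without ever touching $K_{a,X_c}^{-1}$. Your argument is more elementary and more portable: it bounds the power function at any point by the distance to the nearest sample and does not rely on cubic symmetry, whereas the paper's eigenvector computation is exact up to its final relaxation $\beta\leq 2^n$ and reuses structure that also underlies the conditioning discussion. The second half of your proof (Proposition~\ref{le:maddalena} plus the minimum-norm property, Assumption~\ref{asm:max_power} to pass to the center, Assumption~\ref{asm:oracle_RKHS_adaptive} for the oracle) coincides with the paper's Part~III. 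One cosmetic remark: your claim that $p_a^*$ is explicitly available via $\kinv$ requires the argument $\frac{1}{2}(\error/\rkhso)^2$ to lie in $[0,1)$; the paper treats $\error\geq\rkhso$ as a separate trivial case with $p_a^*=0$, but since the proposition only asserts existence of $p_a^*$, your formulation is not deficient.
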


\begin{proof}
The proof is divided into three parts.
In Part~I, we bound the power function~$P_{X_c}(\xcm)$ using an eigenvector of the covariance matrix.
In Part~II, we show that $P_{X_c}(\xcm)\rkhso \leq \error$ holds for $p_a \geq p_a^*$, while in Part~III we prove that this implies~\eqref{eq:uniform_error_cube}.\\
\textit{Part~I:}
Pick any $a \in \mathcal{A}$.
For any $c \in \cubes_{a,p_a}$,
given samples~$X_c$ on the vertices of local cube~$\domain_c$,
we denote the scaled covariance matrix and the scaled covariance vector by~$K_{a,X_c}$ and~$k_{a,X_c}$, respectively.
In this cubic setting, the covariance matrix~$K_{a,X_c}$ has constant row sum.
Moreover, the covariance vector satisfies~$k_{a,X_c}(\xcm) = \widetilde{k}_{a}\left(\frac{\sqrt{n} \Delta x_a}{2}\right)\cdot \mathbf{1}_{2^n}$.
Therefore,~$k_{a,X_c}(\xcm)$ is an eigenvector of $K_{a,X_c}$ with eigenvalue 
\begin{align}\label{eq:beta}
\beta = \sum_{x \in X_c} \widetilde{k}_{a}(\|x-x^\prime\|_2) \overset{\text{Asm.\ref{asm:kernel}}}{\leq}  2^n    
\end{align}
for an arbitrary $x^\prime \in X_c$.
Hence, we have
\begin{align}\label{eq:eigenvector}
    K_{a,X_c}^{-1} k_{a,X_c}(\xcm) = \frac{1}{\beta} k_{a,X_c}(\xcm).
\end{align}
This yields
\begin{align} \label{eq:P_eigen}
P_{X_c}(\xcm) \overset{\eqref{eq:power},\eqref{eq:eigenvector}}&{=}
\sqrt{1-
\frac{\widetilde{k}_{a}\left(\frac{\sqrt{n}\Delta x_a}{2}\right)^2 2^n}
{\beta} 
}
\\
\overset{\eqref{eq:beta}}&{\leq}
\sqrt{1-
\widetilde{k}_{a}\left(\frac{\sqrt{n}\Delta x_a}{2}\right)^2
}. \nonumber
\end{align}
\textit{Part~II}:
We introduce the notation $\lceil \xi \rceil \in \mathbb{N}$ as rounding $\xi \in \mathbb{R}$ up to the nearest integer.
First, suppose $\error < \rkhso$ and define
\begin{align}\label{eq:p_log}
    p_a^* := \left\lceil \log_2\left(\frac{\ell_a \sqrt{n}}{2\kinv_{a}\left(\frac{1}{2}\left(\frac{\error}{\rkhso}\right)^2\right)}\right) \right\rceil.
\end{align}
For any $p_a \geq p_a^*$, any $c \in \cubes_{a,p_a}$, and any~$\error > 0$, it holds that
\begin{align} \label{eq:p_bound}
        &P_{X_c}(\xcm) \rkhso \overset{\eqref{eq:P_eigen}}{\leq}\rkhso \sqrt{1-
\widetilde{k}_{a}\left(\frac{\sqrt{n}\Delta x_a}{2}\right)^2
}  
%\overset{\eqref{eq:knorm}}{=}
\nonumber
\\&=
\rkhso \sqrt{ \left(1-
\widetilde{k}_{a}\left(\frac{\sqrt{n}\Delta x_a}{2}\right)\right)
\left(1+
\widetilde{k}_{a}\left(\frac{\sqrt{n}\Delta x_a}{2}\right)
\right)
} \nonumber \\ 
\overset{\eqref{eq:knorm}}&{=}\rkhso \sqrt{ {\knorm}_{a}\left(\frac{\sqrt{n}\Delta x_a}{2}\right)  
\left(1+
\widetilde{k}_{a}\left(\frac{\sqrt{n}\Delta x_a}{2}\right)
\right)
} \\
&\leq
\rkhso \sqrt{ 2{\knorm}_{a}\left(\frac{\sqrt{n}\Delta x_a}{2}\right) 
}
=
\rkhso \sqrt{ 2{\knorm}_{a}\left(\frac{\sqrt{n} \ell_a}{2 \cdot 2^{p_a}}\right) 
}
\nonumber \\
&\overset{\eqref{eq:p_log}}{\leq} 
\rkhso \sqrt{2\knorm_{a}\left(\kinv_{a}
    \left(\frac{1}{2}\left(\frac{\error}{\rkhso}\right)^2
    \right)
    \right)} = \error. \nonumber
\end{align}
Note that we used $2^{p_a} = \nicefrac{\ell_a}{\Delta x_a}$ and the fact that $\knorm_a \in \mathcal{K}_\infty$ and $\kla{:}\; \mathbb{R}_{\geq0} \rightarrow (0,1]$.
%Since we only consider cases in which $\error >0$, $\rkhso \in (0, \infty)$ (\cf Assumption~\ref{asm:oracle_RKHS_adaptive}), and we have that $\error < \rkhso$,~\eqref{eq:p_log} is well defined.
% In case of $\error \geq \rkhso$, $P_{X_c}(\xcm) \rkhso \leq \error$ (\cf\eqref{eq:p_bound}) holds even if $\nicefrac{\ell_a}{\Delta x_a}=1$, \ie $p_a^*=0$, since $P_{X_c}(\xcm) \leq 1$.\\
%Therefore, if $\epsilon \geq \rkhso$, we define $p_a^* := 0$.\\
%
In case of $\error \geq \rkhso$, $P_{X_c}(\xcm) \rkhso \leq \error$ 
(see~\eqref{eq:p_bound}) holds for any $\Delta x_a>0$ since $P_{X_c}(\xcm) \leq 1$ for all $p_a\geq p_a^*=0$.\\
\textit{Part~III:}
For any $c \in \mathcal{C}_{a,p_a}$ and for any $x \in \domain_c$, given the sub-domain partitioning and the local approximation scheme,
%, we have that $\lvert f(x) - h_X(x) \rvert = \lvert \widetilde f_a(x) - \widetilde h_{X_{c}}(x) \rvert$. 
%Therefore, 
error bound~\eqref{eq:error_lemma} of Proposition~\ref{le:maddalena} yields
\begin{align} \label{eq:error_lemma_cube}
\lvert \widetilde f_a(x) - \widetilde h_{X_{c}}(x) \rvert \leq P_{X_{c}}(x) \|\widetilde f_a\|_\kla.
\end{align}
Thus, for any $p_a \geq p_a^*$, any $c \in \cubes_{a,p_a}$, any $x \in \domain_c$, and any~$\error >0$, it holds that
\begin{align*}
\lvert \widetilde f_a(x) - \widetilde h_{X_{c}}(x) \rvert\overset{\eqref{eq:error_lemma_cube}}&{\leq} P_{X_{c}}(x) \|\widetilde f_a\|_\kla
\overset{\text{Asm.}\ref{asm:max_power}}{\leq} P_{X_{c}}(\xcm) \|\widetilde f_a\|_\kla \\
\overset{\text{Asm.}\ref{asm:oracle_RKHS_adaptive}}&{\leq} P_{X_{c}}(\xcm) \rkhso \leq
\error,
\end{align*}
where the last inequality was shown in Part~II.
\end{proof}

Proposition~\ref{pr:max_error} ensures arbitrary approximation accuracy $\error$ using a local approximation scheme by sampling densely enough.
Since we additionally enforce a lower bound $p_a\geq \underline{p}$, we choose 
% Although $p_a \geq p_a^*$ leads to sufficiently accurate approximations on sub-domain $\domain_a$, we additionally require $p_a \geq \underline{p}$.
% Thus, the sufficiently large integer $p_a$ corresponding to the sufficient number of samples is given by
\begin{align} \label{eq:pmin_pa}
    p_a = \max\{\underline p, p_a^*\}.
\end{align}
For small accuracies~$\error$, this may result in large values~$p_a$, which deteriorates the conditioning of the covariance matrices~$K_{a,X_c}$, resulting in numerical unreliability.
Thus, we introduce~$\overline p > \underline p$, an upper bound on the integer~$p_a$.
This yields a uniform upper bound~$\overline \kappa$ on the condition number
\begin{align} \label{eq:condition}
   \mathrm{cond}\left(K_{a,X_c}\right) \leq \overline{\kappa} \quad \forall a \in \mathcal{A}, \; \forall p \in \{\underline{p},\ldots,\overline{p}\}, \; \forall c \in \cubes_{a,p},
\end{align}
with $\mathrm{cond}(K_{a,X_{c}}) \coloneqq \|K_{a,X_{c}}\|_\infty \|K_{a,X_{c}} ^{-1}\|_\infty$.
For a given~$p_a \geq \underline p$, we have a fixed ratio $\nicefrac{\ell_a}{\Delta x_a} = 2^{p_a}$, and hence~$K_{a, X_c}$ is identical for any~$c \in C_{a,p_a}$ and any~$a \in \mathcal{A}$.
Thus, given a uniformly bounded~$p_a \in \{\underline p, \ldots, \overline p\}$, this yields a uniform bound on the condition number~$\overline \kappa$. 
Furthermore,~$\overline p$ and~$\overline \kappa$ have one-to-one correspondence.\footnote{%
In fact,~\eqref{eq:condition} can be enforced a priori.
First, we define $\mathcal{C}_{\overline p}$ as the cube partition resulting from $\mathrm{card}(X)=(1+2^{\overline p})^n$ equidistant samples on the domain $\domain = [0,1]^n$.
Then, we compute the covariance matrix of the unscaled kernel $k$ (\ie with $\ell=1$) for an arbitrary $c \in \mathcal{C}_{\overline p}$ with $2^n$ samples $X_c$ and check whether $\mathrm{cond}(K_{X_c}) \leq \overline\kappa$.}
If~\eqref{eq:pmin_pa} returns $p_a > \overline p$, we partition the sub-domain $\domain_a$ into $2^n$ cubes of the same size, which halves the length scale $\ell_a$, allowing for a denser sampling.

The overall offline approximation is summarized in Algorithm~\ref{alg:alki_offline}.
To compute a sufficiently accurate approximation of the ground truth~$f$ on the domain~$\domain$,
we iterate through a dynamic for-loop to determine the localized approximating functions for all sub-domains.
After querying the oracle to receive an upper bound on the RKHS norm of the ground truth~$\rkhso$, we determine the number of samples for the current sub-domain with~\eqref{eq:pmin_pa}.
If $p_a \leq \overline p$, then the matrices are well-conditioned according to~\eqref{eq:condition} and we define the localized approximating functions for the current sub-domain.
If $p_a > \overline p$, we partition~$\domain_a$ further into~$2^n$ uniform sub-domains.
We continue with the described procedure until we have constructed localized approximating functions that cover the whole domain.

\begin{algorithm}
\begin{algorithmic}[1]
\Require $f$, $\domain$, $\widetilde k$, $\error$, $\underline p$, $\overline{p}$
\State Init: $\mathcal{A}=\{0\}$, $\domain_0 = \domain$
\For{$a \in \mathcal{A}$} \Comment Dynamic for-loop
	\State Query oracle for $\rkhso$
 \Comment{Asm.~\ref{asm:oracle_RKHS_adaptive}}
\State Determine sufficiently large $p_a$ \Comment{ Prop.~\ref{pr:max_error},~\eqref{eq:pmin_pa}}
\If {$p_a \leq \overline p$} \Comment{\eqref{eq:condition}}
\State Get localized approximating functions
 \Comment{Alg.~\ref{alg:alki_function}}
\Else
    \State Define $2^n$ new sub-domains $\cup_{a^\prime \in \mathcal{A^\prime}} \domain_{a^\prime} = \domain_a$
    \State $\mathcal{A} \gets \mathcal{A} \cup \mathcal{A}^\prime \setminus a$ 
    \Comment{Update set partitions}
\EndIf
\EndFor 
\end{algorithmic}
\caption{Adaptive and Localized Kernel Interpolation}
\label{alg:alki_offline}
\end{algorithm}

The resulting localized approximating functions are used for the online evaluation, which is described in Algorithm~\ref{alg:alki_online}.
First, we define the function $h{:}\; \domain \rightarrow \mathbb{R}$ as the piecewise-defined approximating function on the domain~$\domain$, 
\ie
for any $a \in \mathcal{A}$, $c \in \mathcal{C}_{a,p_{a}}$, and $x \in \domain_c$, we define
\begin{align}\label{eq:h_after}
    \widetilde{h}(x)\coloneqq \widetilde h_{X_c}(x), \quad h(x) \coloneqq \widetilde h_{X_c}(x) + \mu_a. % \quad \forall a \in \mathcal{A}, \forall c \in \mathcal{C}_{a,p_{a}}, \forall x \in \domain_c.
\end{align}

%
%
%
% For the online evaluation, we require $\widetilde h$, which is the shifted piecewise-defined approximating function of the whole domain $\domain$, \ie $\widetilde h(x) = \widetilde h_{X_c}(x)$ for all $a \in \mathcal{A}, \; c\in \cubes_{a,p_a}, \; x \in \domain_c$.
For the online evaluation, we determine the relevant local cube $\domain_c$ such that $x \in \domain_c$ by performing a tree search, and then evaluate the localized approximating function, similar to the online evaluation in~\cite{Lederer2020Realtime}.

\begin{algorithm}
\begin{algorithmic}[1]
\Require $\widetilde h$, $x$
        \State Tree search $c \in \cubes_{a,p_a}, a \in \mathcal{A}$ such that $x \in \domain_c \subseteq \domain_a \subseteq \domain$
        \State \Return  $h(x)$
\end{algorithmic}
\caption{Online Evaluation}
\label{alg:alki_online}
\end{algorithm}

\subsection{Sample complexity and uniform approximation error}
\label{sec:alki_theory}
To derive an upper bound on the maximum number of samples, we use the following assumption on the ground truth. 
\begin{assumption}\label{asm:oracle_RKHS_adaptive_bound}
There exists a function $\alpha \in \mathcal{K}_\infty$ such that for any $a\in\mathcal{A}$, the oracle in Assumption~\ref{asm:oracle_RKHS_adaptive} satisfies
\begin{align}\label{eq:RKHS_adaptive_bound}
     \alpha\left(\rkhso\right) \leq \ell_a.
\end{align}
\end{assumption}
\noindent
Shifted function values in combination with a continuous ground truth (see Lemma~\ref{le:cont}) ensure $\lim_{\ell_a \rightarrow 0} \widetilde f_a(x) = 0$ for all $x \in \domain_a$, which implies $\lim_{\ell_a \rightarrow 0} \|\widetilde f_a\|_\kla = 0$, and hence provides an intuition for Assumption~\ref{asm:oracle_RKHS_adaptive_bound}.
\begin{comment}
The intuition for Assumption~\ref{asm:oracle_RKHS_adaptive_bound} comes from the fact that the shifted function values with a continuous ground truth (see Lemma~\ref{le:cont}) ensure $\lim_{\ell_a \rightarrow 0} \widetilde f_a(x) = 0$ for all $x \in \domain_a$, which implies $\lim_{\ell_a \rightarrow 0} \|\widetilde f_a\|_\kla = 0$.
\end{comment}
%
%
In fact, in the proof of Theorem~\ref{th:alkiax}, 
we show that the RKHS norm of the approximating function is upper-bounded by $\alpha^{-1}(\ell_a)$ with some later specified $\alpha \in \mathcal K_\infty$.

The following theorem provides an upper bound on the sample complexity~\ref{property:error_samples} and ensures that the localized approximating functions satisfy error bound~$\error$~\ref{property:error_satisfaction}.

\begin{theorem}\label{th:alki}
Let
% Assumptions~\ref{asm:kernel},~\ref{asm:max_power},~\ref{asm:oracle_RKHS_adaptive}, and~\ref{asm:oracle_RKHS_adaptive_bound} hold.
Assumptions~\ref{asm:kernel}--\ref{asm:oracle_RKHS_adaptive_bound} hold.
Then, Algorithm~\ref{alg:alki_offline} terminates after querying at most 
$\mathrm{card}\left(X\right) \leq
\mathcal{O}\left(\left(\nicefrac{1}{\alpha(\error)}\right)^n\right)$ samples.
Furthermore, 
the resulting approximating function (Alg.\ \ref{alg:alki_online}) satisfies Inequality~\eqref{eq:uniform_error}, \ie the approximation error is uniformly bounded by $\error > 0$.
\end{theorem}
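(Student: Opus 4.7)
The plan is to decompose the argument into three steps: show that the dynamic for-loop in Algorithm~\ref{alg:alki_offline} terminates on every branch, bound the total sample count, and then deduce the uniform error bound. For the termination step, the key observation is a scaling identity for the scaled kernel: since $\widetilde k_a(r) = \widetilde k(r/\ell_a)$, we have $\knorm_a(r) = \knorm(r/\ell_a)$ and therefore $\kinv_a(y) = \ell_a\,\kinv(y)$. Substituting this into the formula~\eqref{eq:p_log} for $p_a^*$ cancels the $\ell_a$ factor, leaving $p_a^*$ as a function of $\rkhso$ only (up to constants depending on $n$ and $\widetilde k$). A short rearrangement yields that $p_a^* \leq \overline p$ is equivalent to $\rkhso \leq \error / \sqrt{2\,\knorm(\sqrt n / 2^{\overline p + 1})}$; denote this threshold by $\Gamma^\star$. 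Assumption~\ref{asm:oracle_RKHS_adaptive_bound} gives $\rkhso \leq \alpha^{-1}(\ell_a)$, so $\ell_a \leq \alpha(\Gamma^\star)$ is sufficient. Because each partitioning halves the edge length and the domain $\domain = [0,1]^n$ starts with $\ell = 1$, every branch terminates after at most $k^\star := \lceil \log_2(1/\alpha(\Gamma^\star)) \rceil$ refinements.

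For the sample count, the partitioning forms a $2^n$-ary tree of depth at most $k^\star$, so the number of terminal sub-domains is at most $2^{n k^\star} = \mathcal{O}((1/\alpha(\Gamma^\star))^n)$. Each terminal sub-domain contains at most $(1 + 2^{\overline p})^n$ equidistant samples, a constant independent of $\error$. Since $\Gamma^\star$ is a fixed positive multiple of $\error$ with factor depending only on $n$, $\overline p$, and $\widetilde k$, the product yields $\mathrm{card}(X) \leq \mathcal{O}((1/\alpha(\error))^n)$ as claimed. For the error bound, on every terminal sub-domain Algorithm~\ref{alg:alki_offline} enforces $p_a \geq p_a^*$, so Proposition~\ref{pr:max_error} provides $|\widetilde f_a(x) - \widetilde h_{X_c}(x)| \leq \error$ for all $c \in \cubes_{a,p_a}$ and all $x \in \domain_c$. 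Combining the definitions $\widetilde f_a = f - \mu_a$ from~\eqref{eq:shifted_f} and $h = \widetilde h_{X_c} + \mu_a$ from~\eqref{eq:h_after} cancels the shift and yields $|f(x) - h(x)| \leq \error$ on each local cube; these collectively cover $\domain$, proving~\eqref{eq:uniform_error}.

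The main obstacle is the sample count: one must carefully track the partitioning tree, justify that branches terminating at different depths are subsumed by the worst-case $\mathcal{O}((1/\alpha(\Gamma^\star))^n)$ bound, and argue that the multiplicative constant relating $\Gamma^\star$ to $\error$ can be absorbed into the $\mathcal{O}$-notation using the monotonicity of $\alpha \in \mathcal{K}_\infty$. A secondary subtlety in the termination step is verifying that the scaling identity $\kinv_a(y) = \ell_a\,\kinv(y)$ genuinely removes the $\ell_a$-dependence from $p_a^*$; without this cancellation, shrinking $\ell_a$ via partitioning would not obviously drive $p_a^*$ below $\overline p$, and the termination argument would fail.
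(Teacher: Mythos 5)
Your overall architecture matches the paper's: termination once the length scale crosses a threshold, a geometric count of sub-domains times a constant per-sub-domain sample budget, and then Proposition~\ref{pr:max_error} plus cancellation of the shift $\mu_a$ for the error bound. The scaling identity $\kinv_a(y)=\ell_a\,\kinv(y)$ is correct and the cancellation of $\ell_a$ in~\eqref{eq:p_log} is a nice observation. However, there is one step that does not hold as stated. Your claimed equivalence $p_a^*\leq\overline p \iff \rkhso\leq\Gamma^\star$ with $\Gamma^\star=c\,\error$, $c=1/\sqrt{2\knorm(\sqrt n/2^{\overline p+1})}$, ignores the case distinction in Proposition~\ref{pr:max_error} (which sets $p_a^*=0$ whenever $\error\geq\rkhso$, and whose formula~\eqref{eq:p_log} is only invoked for $\error<\rkhso$). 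More importantly, since $\knorm$ takes values in $[0,1)$, the constant $c$ lies in $(\nicefrac{1}{\sqrt 2},\infty)$ and can be strictly less than $1$ (e.g.\ for large $n$ and small $\overline p$). In that regime your final step --- absorbing $c$ into the $\mathcal{O}$-notation ``using the monotonicity of $\alpha$'' --- fails for a general $\alpha\in\mathcal{K}_\infty$: monotonicity gives $\alpha(c\error)\leq\alpha(\error)$, which is the wrong direction, and there is no guarantee that $\alpha(c\error)\geq c'\alpha(\error)$ for some constant $c'>0$ (take $\alpha(s)=e^{-1/s}$ for $s>0$, $\alpha(0)=0$; then $\alpha(c\error)/\alpha(\error)\to 0$ as $\error\to 0$ when $c<1$, so $(1/\alpha(\Gamma^\star))^n$ is not $\mathcal{O}((1/\alpha(\error))^n)$).

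The fix is exactly the route the paper takes: rather than characterizing when $p_a^*\leq\overline p$, use the cruder sufficient condition $\rkhso\leq\error$, which by Proposition~\ref{pr:max_error} forces $p_a^*=0\leq\overline p$ outright. Combined with Assumption~\ref{asm:oracle_RKHS_adaptive_bound}, $\ell_a\leq\alpha(\error)$ then guarantees termination, so halving yields $\ell_a>\alpha(\error)/2$ for every surviving sub-domain and $\mathrm{card}(\mathcal{A})<(2/\alpha(\error))^n$ with no constant inside $\alpha$ to absorb. (Equivalently, you can repair your own argument by noting the effective threshold is $\max\{\error,\Gamma^\star\}\geq\error$, after which monotonicity of $\alpha$ does point the right way.) Your Parts on the sample count per sub-domain and on the error bound are correct and coincide with the paper's.
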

\begin{proof}
The proof is divided into three parts.
In Part~I, we show that Algorithm~\ref{alg:alki_offline} terminates once a threshold length scale is reached.
In Part~II, we derive the worst-case sample complexity in order to reach this threshold length scale.
In Part~III, we demonstrate that the resulting localized approximating functions satisfy error bound~$\error$.\\
\textit{Part I:} 
Consider any $a \in \mathcal{A}$. 
First, we show that Algorithm~\ref{alg:alki_offline} terminates whenever
\begin{align}\label{eq:ella_bound} 
    \ell_a \leq \alpha(\error).
\end{align}
For any $c \in \cubes_{a,p_a}$, it holds that
\begin{align} \label{eq:alkia_termination}
\rkhso
\overset{\eqref{eq:RKHS_adaptive_bound}}{\leq} \alpha^{-1}(\ell_a) \overset{\eqref{eq:ella_bound}}{\leq} \alpha^{-1}(\alpha(\error)) = \error,
\end{align}
which yields $p_a^*=0$ in Proposition~\ref{pr:max_error}.
Therefore,~\eqref{eq:pmin_pa} returns $p_a = \underline{p}\leq\overline p$, \ie  Algorithm~\ref{alg:alki_offline} terminates if \eqref{eq:ella_bound} is satisfied.
Thus, since the length scale $\ell_a$ is halved with each partitioning, 
\begin{align}\label{eq:ella_upper_bound}
    \ell_a > \frac{\alpha(\error)}{2}
\end{align}
always holds for any partition $a \in \mathcal{A}$.\\
\textit{Part II:}
The number of sub-domains satisfies 
\begin{align}\label{eq:cardA}
    {\mathrm{card}(\mathcal{A})} \leq \left(\min_{a \in \mathcal{A}} \ell_a\right)^{-n} \overset{\eqref{eq:ella_upper_bound}}{<} \left(\frac{2}{\alpha(\error)}\right)^n.
\end{align}
Moreover, each sub-domain $\domain_a$ has at most
\begin{align}\label{eq:Na}
    \mathrm{card}(X_a) \overset{\eqref{eq:condition}}{\leq} \left(1+2^{\overline p}\right)^n
\end{align}
samples.
Thus, Algorithm~\ref{alg:alki_offline} terminates after at most
\begin{align*}
    \mathrm{card}(X) &= \sum_{a\in\mathcal{A}} \mathrm{card}(X_a) \overset{\eqref{eq:cardA},\eqref{eq:Na}}{<} \left(\frac{2}{\alpha(\error)}\right)^n\left(1+2^{\overline{p}}\right)^n \\ &= 
\mathcal{O}\left(\left(\frac{1}{\alpha(\error)}\right)^n\right)
\end{align*}
samples. \\
\textit{Part~III:} 
Once Algorithm~\ref{alg:alki_offline} has terminated, each sub-domain~$\domain_a$ satisfies $p_a\geq p_a^*$.
Hence, Proposition~\ref{pr:max_error} ensures the satisfaction of error bound~\eqref{eq:uniform_error_cube}.
For any $c \in \mathcal{C}_{a,p_a}$, any $a \in \mathcal{A}$, and any $x \in \domain_c$, we have that $
\lvert f(x) - h(x) \rvert \overset{\eqref{eq:shifted_f},\eqref{eq:h_after}}{=} \lvert \widetilde f_a (x) - \widetilde h_{X_c}(x)\rvert \overset{\eqref{eq:uniform_error_cube}}{\leq} \error$, 
\ie error bound~\eqref{eq:uniform_error} holds.
\end{proof}

The approach proposed in this section leads to a one-shot algorithm with explicit worst-case sample complexity and a guaranteed error bound.
However, assuming a known RKHS norm of the unknown ground truth (see  Assumptions~\ref{asm:oracle_RKHS_adaptive} and~\ref{asm:oracle_RKHS_adaptive_bound}) can be restrictive in practice.
In the next section, we alleviate this issue by introducing a heuristic.

% !TEX root =  ../main.tex 

\section{\alkiax}
\label{sec:alkiax}
In this section, we extend the algorithm proposed in Section~\ref{sec:alki} to unknown RKHS norms by introducing a heuristic RKHS norm extrapolation.
This extension yields \alkiax, the \textbf{A}daptive and \textbf{L}ocalized \textbf{K}ernel \textbf{I}nterpolation \textbf{A}lgorithm with e\textbf{X}trapolated RKHS norm.
In Section~\ref{sec:alkiax_main_idea}, we introduce the main idea before explaining the details of the RKHS norm extrapolation in Section~\ref{sec:alkiax_detailed}.
We provide the theoretical analysis in Section~\ref{sec:alkiax_theory}, including the guaranteed error bound and the worst-case sample complexity.
Then, we present simpler complexity bounds for a popular kernel choice in Section~\ref{sec:se}.

\subsection{Main idea} \label{sec:alkiax_main_idea}
\alkiax\ has one major difference compared to the algorithm presented in Section~\ref{sec:alki}:
instead of assuming access to an oracle that returns an upper bound on the RKHS norm of the unknown ground truth (see Assumption~\ref{asm:oracle_RKHS_adaptive}), we introduce a novel \emph{RKHS norm extrapolation}~\ref{tool:RKHS}.
Specifically, we follow an extrapolation from the RKHS norm of the approximating function for each sub-domain $\domain_a$. 
\begin{figure}
    \centering
            % This file was created with tikzplotlib v0.10.1.
\begin{tikzpicture}[yscale=1]
\definecolor{darkgray176}{RGB}{176,176,176}
\pgfplotsset{
every axis legend/.append style={
at={(0.5,0.675)},
anchor=north west,
},
}
\begin{axis}[
tick align=outside,
tick pos=left,
x grid style={darkgray176},
xlabel={\small \(\displaystyle 1+2^p\)},
xmin=3.6, xmax=34.4,
xtick style={color=black},
y grid style={darkgray176},
ylabel={\small RKHS norm},
height=4cm,
width=8cm,
ymin=2.49110171352698, ymax=2.77252778180555,
ytick style={color=black},
legend style={nodes={scale=0.75}},
tick label style={font=\footnotesize}
]
\addplot [dashed, semithick, blue]
table {%
5 2.75973568779288
5.28282828282828 2.75973568779288
5.56565656565657 2.75973568779288
5.84848484848485 2.75973568779288
6.13131313131313 2.75973568779288
6.41414141414141 2.75973568779288
6.6969696969697 2.75973568779288
6.97979797979798 2.75973568779288
7.26262626262626 2.75973568779288
7.54545454545454 2.75973568779288
7.82828282828283 2.75973568779288
8.11111111111111 2.75973568779288
8.39393939393939 2.75973568779288
8.67676767676768 2.75973568779288
8.95959595959596 2.75973568779288
9.24242424242424 2.75973568779288
9.52525252525253 2.75973568779288
9.80808080808081 2.75973568779288
10.0909090909091 2.75973568779288
10.3737373737374 2.75973568779288
10.6565656565657 2.75973568779288
10.9393939393939 2.75973568779288
11.2222222222222 2.75973568779288
11.5050505050505 2.75973568779288
11.7878787878788 2.75973568779288
12.0707070707071 2.75973568779288
12.3535353535354 2.75973568779288
12.6363636363636 2.75973568779288
12.9191919191919 2.75973568779288
13.2020202020202 2.75973568779288
13.4848484848485 2.75973568779288
13.7676767676768 2.75973568779288
14.0505050505051 2.75973568779288
14.3333333333333 2.75973568779288
14.6161616161616 2.75973568779288
14.8989898989899 2.75973568779288
15.1818181818182 2.75973568779288
15.4646464646465 2.75973568779288
15.7474747474747 2.75973568779288
16.030303030303 2.75973568779288
16.3131313131313 2.75973568779288
16.5959595959596 2.75973568779288
16.8787878787879 2.75973568779288
17.1616161616162 2.75973568779288
17.4444444444444 2.75973568779288
17.7272727272727 2.75973568779288
18.010101010101 2.75973568779288
18.2929292929293 2.75973568779288
18.5757575757576 2.75973568779288
18.8585858585859 2.75973568779288
19.1414141414141 2.75973568779288
19.4242424242424 2.75973568779288
19.7070707070707 2.75973568779288
19.989898989899 2.75973568779288
20.2727272727273 2.75973568779288
20.5555555555556 2.75973568779288
20.8383838383838 2.75973568779288
21.1212121212121 2.75973568779288
21.4040404040404 2.75973568779288
21.6868686868687 2.75973568779288
21.969696969697 2.75973568779288
22.2525252525253 2.75973568779288
22.5353535353535 2.75973568779288
22.8181818181818 2.75973568779288
23.1010101010101 2.75973568779288
23.3838383838384 2.75973568779288
23.6666666666667 2.75973568779288
23.9494949494949 2.75973568779288
24.2323232323232 2.75973568779288
24.5151515151515 2.75973568779288
24.7979797979798 2.75973568779288
25.0808080808081 2.75973568779288
25.3636363636364 2.75973568779288
25.6464646464646 2.75973568779288
25.9292929292929 2.75973568779288
26.2121212121212 2.75973568779288
26.4949494949495 2.75973568779288
26.7777777777778 2.75973568779288
27.0606060606061 2.75973568779288
27.3434343434343 2.75973568779288
27.6262626262626 2.75973568779288
27.9090909090909 2.75973568779288
28.1919191919192 2.75973568779288
28.4747474747475 2.75973568779288
28.7575757575758 2.75973568779288
29.040404040404 2.75973568779288
29.3232323232323 2.75973568779288
29.6060606060606 2.75973568779288
29.8888888888889 2.75973568779288
30.1717171717172 2.75973568779288
30.4545454545455 2.75973568779288
30.7373737373737 2.75973568779288
31.020202020202 2.75973568779288
31.3030303030303 2.75973568779288
31.5858585858586 2.75973568779288
31.8686868686869 2.75973568779288
32.1515151515151 2.75973568779288
32.4343434343434 2.75973568779288
32.7171717171717 2.75973568779288
33 2.75973568779288
};
\addplot [semithick, black]
table {%
5 2.5038939812464
5.28282828282828 2.5169695972137
5.56565656565657 2.52877450901235
5.84848484848485 2.53948525356251
6.13131313131313 2.54924713277596
6.41414141414141 2.55818082518138
6.6969696969697 2.56638738724858
6.97979797979798 2.57395208389194
7.26262626262626 2.58094735576854
7.54545454545454 2.58743514200267
7.82828282828283 2.59346871591901
8.11111111111111 2.59909414884093
8.39393939393939 2.60435148696904
8.67676767676768 2.60927570485634
8.95959595959596 2.61389748342421
9.24242424242424 2.6182438490575
9.52525252525253 2.62233870187473
9.80808080808081 2.62620325495916
10.0909090909091 2.62985640157722
10.3737373737374 2.63331502378921
10.6565656565657 2.63659425308012
10.9393939393939 2.63970769149183
11.2222222222222 2.64266760006703
11.5050505050505 2.64548506010581
11.7878787878788 2.64817011170285
12.0707070707071 2.65073187321332
12.3535353535354 2.65317864464145
12.6363636363636 2.6555179974201
12.9191919191919 2.65775685262622
13.2020202020202 2.65990154933322
13.4848484848485 2.66195790452121
13.7676767676768 2.66393126573692
14.0505050505051 2.6658265575066
14.3333333333333 2.66764832234968
14.6161616161616 2.66940075711189
14.8989898989899 2.6710877452294
15.1818181818182 2.67271288544579
15.4646464646465 2.67427951742855
15.7474747474747 2.67579074466882
16.030303030303 2.67724945499455
16.3131313131313 2.67865833898237
16.5959595959596 2.68001990651493
16.8787878787879 2.6813365016982
17.1616161616162 2.68261031632502
17.4444444444444 2.68384340204757
17.7272727272727 2.68503768140099
18.010101010101 2.68619495780254
18.2929292929293 2.68731692463586
18.5757575757576 2.68840517351632
18.8585858585859 2.68946120182251
19.1414141414141 2.69048641956854
19.4242424242424 2.69148215568375
19.7070707070707 2.69244966375837
19.989898989899 2.69339012730753
20.2727272727273 2.69430466459997
20.5555555555556 2.69519433309292
20.8383838383838 2.69606013351017
21.1212121212121 2.69690301359623
21.4040404040404 2.6977238715764
21.6868686868687 2.69852355934908
21.969696969697 2.69930288543434
22.2525252525253 2.7000626177001
22.5353535353535 2.7008034858853
22.8181818181818 2.70152618393742
23.1010101010101 2.70223137218016
23.3838383838384 2.70291967932542
23.6666666666667 2.7035917043426
23.9494949494949 2.70424801819674
24.2323232323232 2.7048891654662
24.5151515151515 2.70551566584954
24.7979797979798 2.70612801557021
25.0808080808081 2.70672668868721
25.3636363636364 2.70731213831891
25.6464646464646 2.7078847977866
25.9292929292929 2.70844508168401
26.2121212121212 2.70899338687818
26.4949494949495 2.70953009344679
26.7777777777778 2.71005556555674
27.0606060606061 2.71057015228806
27.3434343434343 2.71107418840711
27.6262626262626 2.71156799509284
27.9090909090909 2.71205188061914
28.1919191919192 2.71252614099658
28.4747474747475 2.71299106057615
28.7575757575758 2.71344691261778
29.040404040404 2.71389395982579
29.3232323232323 2.71433245485377
29.6060606060606 2.71476264078058
29.8888888888889 2.71518475155974
30.1717171717172 2.71559901244358
30.4545454545455 2.71600564038407
30.7373737373737 2.71640484441162
31.020202020202 2.71679682599336
31.3030303030303 2.71718177937216
31.5858585858586 2.71755989188751
31.8686868686869 2.71793134427956
32.1515151515151 2.71829631097712
32.4343434343434 2.71865496037075
32.7171717171717 2.71900745507186
33 2.71935395215844
};
\addplot [semithick, magenta, mark=asterisk, mark size=3, mark options={solid}, only marks]
table {%
5 2.50389380753964
9 2.61453449203542
17 2.6846997801278
33 2.72397880380792
};

\legend{
$\rkhsx$,
$\gamma_{a}(p)$,
$\|\widetilde h_{X_{a, p}}\|_\kla$
}

% \addlegendentry{Style 1}
% \addlegendentry{Style 2}
% \addlegendentry{Style 3}
\end{axis}

\end{tikzpicture}
     \caption{Illustration of the RKHS norm extrapolation. 
     The magenta asterisks show the RKHS norm of the approximating function~$\|\widetilde h_{X_{a, p}}\|_\kla$~\eqref{eq:RKHS_h_a}.
     The solid black line is the extrapolating function~$\gamma_{a}$~\eqref{eq:extrapolation}, while the dashed blue line depicts its limit value~$\rkhsx$~\eqref{eq:gamma_bar}.
     }
    \label{fig:RKHS_samples}
\end{figure}
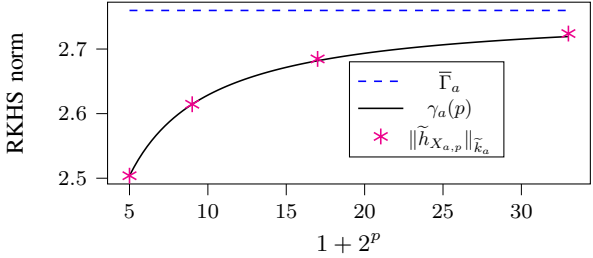
Figure~\ref{fig:RKHS_samples} shows that
the trend of the RKHS norm of the approximating function empirically resembles exponential behavior.\footnote{%
The results in Figure~\ref{fig:RKHS_samples} are generated using the Matérn kernel~\cite{Rasmussen2006Gaussian} with $\nu=\nicefrac{3}{2}$ and the function $f{:}\; [0,1]^2 \rightarrow \mathbb{R}$ with $f(x) = \sin(2\pi x_1) + \cos(2 \pi x_2)$, where $x=[x_1,x_2]^\top\in\mathbb{R}^2$ and $\domain_a = [0, \nicefrac{1}{3}]^2$.
The remaining hyperparameters are $\underline{p} = 2$, $\overline p = 5$ ($\overline{\kappa} = 1.14 \cdot 10^8$), and $\error = 10^{-6}$.}
Hence, we fit an \emph{exponential} function to the first two values of the RKHS norm of the approximating function and assume that its limit provides an upper bound on the RKHS norm of the unknown ground truth for each sub-domain.

\subsection{RKHS norm extrapolation} \label{sec:alkiax_detailed}
In the following, we formalize the exponential extrapolating function. 
To this end, we first introduce the RKHS norm of the shifted approximating function of sub-domain $\domain_a$ based on $\mathrm{card}(X_{a,\hat{p}})=(1+2^{\hat{p}})^n$ samples as
\begin{align}\label{eq:RKHS_h_a}
\|\widetilde{h}_{X_{a,\hat p}}\|_\kla = \sqrt{
\widetilde f_{X_{a,\hat p}}^\top K_{a, X_{a,\hat p}}^{-1} \widetilde f_{X_{a,\hat p}}
}
\quad \forall \hat p \in \{\underline p, \ldots,  \overline p\}.
\end{align}
Note that we only use $\|\widetilde{h}_{X_{a,\hat p}}\|_\kla$ for $\hat p \in \{\underline p, \underline p + 1\}$ to construct the extrapolating function.
To guarantee well-conditioned computations, we require that
\begin{align}\label{eq:condition_extended}
\mathrm{cond}(K_{a,X_{a,\hat p}})
\leq
\overline{\kappa} \quad \forall\hat p \in \{\underline p, \underline p+1\}, \; \forall  a \in \mathcal{A}
\end{align}
holds in addition to~\eqref{eq:condition}.
Note that~\eqref{eq:condition_extended} is enforced through the choice of~$\underline p$, and that~$\underline p$ and~$\overline \kappa$ have one-to-one correspondence and are uniform bounds.\footnote{%
Akin to~\eqref{eq:condition}, for a fixed $\underline p>0$, the condition number $\mathrm{cond}(K_{a,X_{a,\underline p+1}})$ is identical for each sub-domain $\domain_a$.
Moreover,~\eqref{eq:condition_extended} can be enforced a priori by checking whether the covariance matrix of the unscaled kernel $k$ (\ie with~$\ell=1$) with $\mathrm{card}(X)=(1+2^{\underline p +1})^n$ equidistant samples on domain $\domain = [0,1]^n$ is well-conditioned, \ie whether $\mathrm{cond}(K_{X}) \leq \overline\kappa$.
}

We upper-bound $\| \widetilde{h}_{X_{a,\hat p}} \|_\kla$,  $\hat p \in \{\underline{p}, \underline{p}+1\}$ with
\begin{align}\label{eq:gamma_hat}
    \hat{\gamma}_{a, \hat p} = \| \widetilde{h}_{X_{a,\hat p}}  \|_\kla + \frac{\error}{2^{\lambda+1}}, \quad \lambda \coloneqq 2 + 2^{-\underline p} .
\end{align}
The form of this upper bound becomes clearer when developing the theoretical guarantees in Section~\ref{sec:alkiax_theory}.
The extrapolating function $\gamma_{a}{:}\; \mathbb{N}_{\geq \underline{p}} \rightarrow \mathbb{R}_{\geq 0}$
is given by
\begin{align} \label{eq:extrapolation}
    \gamma_{a}(p) \coloneqq \rkhsx \exp\left(-\frac{\tau_{a}}{1+2^p}\right) \quad \forall p \in \{\underline{p}, \ldots, \overline{p}\},
\end{align}
where $\rkhsx$ is the limit of the extrapolating function and $\tau_{a}$ determines the rate at which the extrapolating function approaches its limit value.
\begin{lemma} \label{le:constants}
For any $a \in \mathcal{A}$ and any $\hat p \in \{\underline{p},\underline{p}+1\}$,
it holds that
\begin{align} \label{eq:interpolating}
    \gamma_{a}(\hat p) = \hat{\gamma}_{a,\hat p}
\end{align}
with
\begin{align} \label{eq:gamma_bar}
    \tau_{a} = \ln \left(\frac{\hat\gamma_{a,\underline{p}+1}}{\hat\gamma_{a, \underline{p}}}\right)\lambda
    \left(1+2^{\underline p}\right), \quad
    \rkhsx =\hat{\gamma}_{a, \underline p} \left( \frac{\hat{\gamma}_{a, \underline p+1}}{\hat{\gamma}_{a, \underline p}}\right)^{\lambda}
\end{align}
and $\rkhsx \in (0, \infty)$.
\end{lemma}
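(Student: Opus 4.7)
The plan is to treat the two equalities $\gamma_a(\underline p)=\hat\gamma_{a,\underline p}$ and $\gamma_a(\underline p+1)=\hat\gamma_{a,\underline p+1}$ as a system of two equations in the two unknowns $\tau_a$ and $\rkhsx$ (for fixed $a\in\mathcal A$), solve the system explicitly, and verify that the solution matches the expressions in \eqref{eq:gamma_bar}. The closing of the proof then only requires checking that the solution is well-defined and that $\rkhsx$ is strictly positive and finite.

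First, I would substitute the definition \eqref{eq:extrapolation} into the two desired identities, obtaining the coupled equations
\begin{align*}
\rkhsx\exp\!\Bigl(-\tfrac{\tau_a}{1+2^{\underline p}}\Bigr)&=\hat\gamma_{a,\underline p},\\
\rkhsx\exp\!\Bigl(-\tfrac{\tau_a}{1+2^{\underline p+1}}\Bigr)&=\hat\gamma_{a,\underline p+1}.
\end{align*}
Dividing the second by the first cancels $\rkhsx$ and, after taking a logarithm, yields
\[
\tau_a\Bigl(\tfrac{1}{1+2^{\underline p}}-\tfrac{1}{1+2^{\underline p+1}}\Bigr)=\ln\!\Bigl(\tfrac{\hat\gamma_{a,\underline p+1}}{\hat\gamma_{a,\underline p}}\Bigr).
\]
Computing the bracketed difference gives $\tfrac{2^{\underline p}}{(1+2^{\underline p})(1+2^{\underline p+1})}$, so
\[
\tau_a=\ln\!\Bigl(\tfrac{\hat\gamma_{a,\underline p+1}}{\hat\gamma_{a,\underline p}}\Bigr)\,\tfrac{(1+2^{\underline p})(1+2^{\underline p+1})}{2^{\underline p}}.
\]
The only piece of genuine (but purely arithmetic) care is recognizing $\tfrac{1+2^{\underline p+1}}{2^{\underline p}}=2^{-\underline p}+2=\lambda$, which reduces this to the claimed formula for $\tau_a$ in \eqref{eq:gamma_bar}.

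Next, plugging the derived $\tau_a$ back into the first interpolation equation yields
\[
\rkhsx=\hat\gamma_{a,\underline p}\exp\!\Bigl(\tfrac{\tau_a}{1+2^{\underline p}}\Bigr)=\hat\gamma_{a,\underline p}\exp\!\bigl(\lambda\ln(\hat\gamma_{a,\underline p+1}/\hat\gamma_{a,\underline p})\bigr)=\hat\gamma_{a,\underline p}\Bigl(\tfrac{\hat\gamma_{a,\underline p+1}}{\hat\gamma_{a,\underline p}}\Bigr)^{\lambda},
\]
matching \eqref{eq:gamma_bar}. By construction the two interpolation conditions hold, so \eqref{eq:interpolating} is established for $\hat p\in\{\underline p,\underline p+1\}$.

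For the claim $\rkhsx\in(0,\infty)$, I would observe that $\|\widetilde h_{X_{a,\hat p}}\|_\kla\ge 0$ together with $\error>0$ and $\lambda=2+2^{-\underline p}>0$ gives $\hat\gamma_{a,\hat p}\ge \error/2^{\lambda+1}>0$ for both $\hat p\in\{\underline p,\underline p+1\}$ via \eqref{eq:gamma_hat}. Hence the ratio $\hat\gamma_{a,\underline p+1}/\hat\gamma_{a,\underline p}$ is a finite positive real, its $\lambda$-th power is finite and positive, and the product with the finite positive $\hat\gamma_{a,\underline p}$ is finite and positive, proving $\rkhsx\in(0,\infty)$. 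The only non-mechanical step in the whole argument is the identification of the combinatorial factor $(1+2^{\underline p+1})/2^{\underline p}$ with $\lambda$, which is why the constant $\lambda$ is introduced in \eqref{eq:gamma_hat} in the first place; everything else is straightforward algebra and positivity bookkeeping.
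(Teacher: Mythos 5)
Your proof is correct and is essentially the paper's computation run in the reverse direction: the paper substitutes the stated $\tau_a$ and $\rkhsx$ into \eqref{eq:extrapolation} and simplifies to recover $\hat\gamma_{a,\hat p}$, whereas you solve the two interpolation equations for the two constants; both arguments hinge on the same identity $(1+2^{\underline p+1})/2^{\underline p}=2+2^{-\underline p}=\lambda$, and your derivation has the mild advantage of explaining where the formulas (and the definition of $\lambda$) come from. One small omission in the last step: when you declare the ratio $\hat\gamma_{a,\underline p+1}/\hat\gamma_{a,\underline p}$ to be a \emph{finite} positive real, you justify only the positivity of the denominator; finiteness of the numerator requires $\|\widetilde h_{X_{a,\underline p+1}}\|_\kla<\infty$, which the paper establishes explicitly from \eqref{eq:RKHS_h_a} using the strict positive definiteness of the covariance matrix together with the continuity of $f$ on the bounded domain (Lemma~\ref{le:cont}) -- a trivial but worthwhile line to add.
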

The proof can be found in Appendix~\ref{sec:appendix_lemma}.
Figure~\ref{fig:RKHS_samples} depicts the RKHS norm extrapolation heuristic applied to a toy example.
The resulting extrapolation approximates the trend of the RKHS norm of the approximating function well, although it has been constructed with the first two values $\|\widetilde h_{X_{a, p}}\|_\kla$, $p \in \{\underline p, \underline p+1\}$ only.
We use the limit of the extrapolating function $\rkhsx$ as an upper bound on the RKHS norm, which we formalize in Assumption~\ref{asm:extrapolation}.
\begin{assumption}\label{asm:extrapolation}
For any $a\in\mathcal{A}$, $\widetilde f_a$ is a member of the RHKS of kernel $\kla$. %, \ie $\|\widetilde f_a\|_\kla < \infty$.
Moreover, the limit of the extrapolating function $\rkhso \in (0, \infty)$ (see~\eqref{eq:extrapolation}, \eqref{eq:gamma_bar}) is an upper bound on the RHKS norm, \ie $\rkhso \geq \|\widetilde f_a\|_\kla$.
\end{assumption}

\begin{remark}(RKHS norm approximation) \label{re:Gamma}
In general, it is not possible to find an upper bound on the RKHS norm of the ground truth  $\|f\|_k$ (or any shifted or restricted version of it) from finite samples, since it is an underlying characteristic of an unknown function.
Thus, there exists no perfect solution to tackle the challenging problem of working with the RKHS norm of an unknown ground truth.
A basic approach, followed by many related works (see~\eg\cite{\RKHSnormcite})
%(\cf\eg\cite{Maddalena2021Deterministic, Berkenkamp2021Bayesian, Srinivas2010Gaussian, Chowdhury2017Kernelized, Koller2018Learning, Fiedler2021Practical, Sui2015Safe, Baumann2021GO, Scharnhorst2022Robust, Greeff2021Learning, Cai2021Bounds, Bhavi2023GSO, Konig2023Safe, Bogunovic2022Robust, Maddalena2021KPC, Widmer2023Legged})
and also in Section~\ref{sec:alki} of this paper, is to assume that an oracle is available that returns a not overly conservative upper bound on the RKHS norm.
In order to not rely on an oracle, we introduce the RKHS norm extrapolation~\eqref{eq:extrapolation} with its corresponding limit~\eqref{eq:gamma_bar}.
The considered extrapolation is only a heuristic and Assumption~\ref{asm:extrapolation} may fail on some occasions.

Some early works in the field of information theory and signal detection compute the RKHS norm of the ground truth function.
However, the authors assume that an explicit expression of the ground truth is known
\cite{Kailath1971RKHS, Kailath1972RKHS}, which is not the case in the setting of this paper.
Moreover, in~\cite{Pradhan2022Submodular}, an upper bound on the RKHS norm of the ground truth based on the RKHS norm of the approximating function is derived.
However, the authors assume that the ground truth is a member of a pre-RKHS spanned by the chosen kernel (see \eg\cite[Section~2.3]{Kanagawa2018Gaussian}), which is more restrictive.

In the following, we briefly mention other approximations of
$\|f\|_k$ using $\|h_X\|_k$.
As noted in~\cite[Remark~2]{Maddalena2021Deterministic}, the RKHS norm of the approximating function $\|h_X\|_k$ is an underestimation of the RKHS norm of the ground truth $\|f\|_k$.
In~\cite[Appendix~A]{Scharnhorst2022Robust}, %and~\cite[Section~2.8.2]{Maddalena2023Thesis}, 
the authors use the intuition that $\|h_X\|_k$ may converge to $\|f\|_k$ for an increasing sample set $X$.
Based hereon,~\cite[Appendix~A]{Hashimoto2020Nonlinear} estimates an upper bound by \emph{manually} checking the convergence of $\|h_X\|_k$.
In contrast, we provide a simple formula to \emph{automatically} extrapolate an estimate of the RKHS norm,  which improves the ideas presented in~%
%\cite{Scharnhorst2022Robust, Maddalena2023Thesis, Hashimoto2020Nonlinear}
\cite{Scharnhorst2022Robust, Hashimoto2020Nonlinear} and empirically tends to closely match the underlying ground truth as seen in Figure~\ref{fig:RKHS_samples}.
\end{remark}

The offline approximation using \alkiax\ is summarized in Algorithm~\ref{alg:alkiax_offline}.
Compared to Algorithm~\ref{alg:alki_offline} in Section~\ref{sec:alki}, instead of receiving an upper bound on the RKHS norm of the ground truth from an oracle (see Assumption~\ref{asm:oracle_RKHS_adaptive}), we use the exponential extrapolation to compute~$\rkhsx$ (see  Assumption~\ref{asm:extrapolation}).
As a result, the exact number of required samples is determined during the approximation procedure, as~$\rkhsx$ is not known a priori.

\begin{algorithm}
\begin{algorithmic}[1]
\Require $f, \domain, \widetilde k, \error, \underline{p}, \overline{p}$ \Comment{\eqref{eq:condition_extended}}
\State Init: $\mathcal{A}=\{0\}$, $\domain_0 = \domain$
\For{$a \in \mathcal{A}$} \Comment{Dynamic for-loop}
\State Get samples $\widetilde f_{X_{a,\hat p}}, \; \hat p \in \{\underline p, \underline p+1\}$ \Comment{\eqref{eq:f},~\eqref{eq:shift}}
\State Compute $\rkhsx$ \Comment{\eqref{eq:RKHS_h_a},~\eqref{eq:gamma_hat},~\eqref{eq:gamma_bar}}
\State Determine sufficiently large $p_a$ \Comment{ Prop.~\ref{pr:max_error},~\eqref{eq:pmin_pa}}
  \If{$p_a \leq \overline p$} 
  \Comment{\eqref{eq:condition}}
  \State Get localized approximating function
  \Comment{Alg.~\ref{alg:alki_function}}
  \Else 
      \State Define $2^n$ new sub-domains $\cup_{a^\prime \in \mathcal{A^\prime}} \domain_{a^\prime} = \domain_a$
    \State $\mathcal{A} \gets \mathcal{A} \cup \mathcal{A}^\prime \setminus a$ 
    \Comment{Update set partitions}
  \EndIf
\EndFor
\end{algorithmic}
\caption{\alkiax}
\label{alg:alkiax_offline}
\end{algorithm}

\subsection{Sample complexity and uniform approximation error}\label{sec:alkiax_theory}

The following theorem provides a bound on the sample complexity and ensures the satisfaction of error bound~$\error$.
\begin{theorem}\label{th:alkiax}
Let Assumptions~\ref{asm:kernel},~\ref{asm:max_power}, and~\ref{asm:extrapolation} hold.
Then, Algorithm~\ref{alg:alkiax_offline} terminates after at most 
\begin{align}
\mathrm{card}(X)\leq \mathcal{O}\left(\left(
\frac{\sqrt{n}}{\kinv\left(C\left(\frac{\error}{ \|f\|_k
} \right)^2
\right)}
\right)^n
\right)
\end{align}
samples, 
with some $C > 0$ independent of $\error$, $f$.
Furthermore, the resulting approximating function 
(Alg.~\ref{alg:alki_online}) satisfies Inequality~\eqref{eq:uniform_error}, \ie the approximation error is uniformly bounded by $\error > 0$.
\end{theorem}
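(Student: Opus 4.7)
The plan is to mirror the proof of Theorem~\ref{th:alki}, replacing the oracle bound of Assumption~\ref{asm:oracle_RKHS_adaptive} by the extrapolated $\rkhsx$ guaranteed by Assumption~\ref{asm:extrapolation}. The error bound~\eqref{eq:uniform_error} follows immediately: once Algorithm~\ref{alg:alkiax_offline} terminates on each sub-domain~$\domain_a$, the construction~\eqref{eq:pmin_pa} ensures $p_a\geq p_a^*$ with $\rkhsx$ in the role of~$\rkhso$ from Proposition~\ref{pr:max_error}; since Assumption~\ref{asm:extrapolation} gives $\rkhsx \geq \|\widetilde f_a\|_\kla$, Part~III of the proof of Proposition~\ref{pr:max_error} transfers verbatim.

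The bulk of the work is the sample complexity, which I would obtain by bounding $\rkhsx$ in terms of $\|f\|_k$ and $\ell_a$. First I would bound $\|\widetilde h_{X_{a,\hat p}}\|_\kla$ for $\hat p \in \{\underline p,\underline p+1\}$: applying Lemma~\ref{le:cont} to $f$ with the original kernel~$k$, and using that $\mu_a$ is an empirical mean of $f$-values at points of $\domain_a$ (whose Euclidean diameter is at most $\sqrt{n}\ell_a$), one gets $\lvert\widetilde f_a(x)\rvert \leq \|f\|_k\sqrt{2\knorm(\sqrt{n}\ell_a)}$ on $\domain_a$. Combining formula~\eqref{eq:RKHS_h_a} with the uniform condition-number bound~\eqref{eq:condition_extended} and the uniform sample count $\mathrm{card}(X_{a,\hat p})\leq(1+2^{\underline p+1})^n$ then delivers $\|\widetilde h_{X_{a,\hat p}}\|_\kla \leq C_1\|f\|_k\sqrt{\knorm(\sqrt{n}\ell_a)}$ for some constant $C_1$ depending only on $n$, $\underline p$, and $\overline\kappa$.

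The hard part will be translating this into a bound on $\rkhsx = \hat\gamma_{a,\underline p}(\hat\gamma_{a,\underline p+1}/\hat\gamma_{a,\underline p})^\lambda$, which is nonlinear and whose ratio could in principle blow up. This is exactly where the $\error/2^{\lambda+1}$ padding in~\eqref{eq:gamma_hat} pays off: it enforces $\hat\gamma_{a,\underline p}\geq \error/2^{\lambda+1}$ and thereby controls the ratio. Concretely, I would verify that once $C_1\|f\|_k\sqrt{\knorm(\sqrt{n}\ell_a)}\leq \error/2^{\lambda+1}$, both $\hat\gamma_{a,\underline p}$ and $\hat\gamma_{a,\underline p+1}$ lie in the interval $[\error/2^{\lambda+1},\,\error/2^\lambda]$, so the ratio is at most $2$ and $\rkhsx \leq (\error/2^\lambda)\cdot 2^\lambda=\error$. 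Under this condition $p_a^*=0$ in Proposition~\ref{pr:max_error}, so~\eqref{eq:pmin_pa} returns $p_a=\underline p\leq\overline p$ and Algorithm~\ref{alg:alkiax_offline} terminates on~$\domain_a$.

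Rearranging the sufficient condition yields $\ell_a \leq \kinv(C(\error/\|f\|_k)^2)/\sqrt{n}$ for some constant $C>0$ independent of $\error$ and $f$, which plays the role of $\alpha(\error)$ in~\eqref{eq:ella_bound}. Since each partitioning step halves~$\ell_a$, every undecomposed sub-domain still satisfies the analogue of~\eqref{eq:ella_upper_bound}, and the counting argument of Part~II of the proof of Theorem~\ref{th:alki}---combined with the uniform per-sub-domain sample bound $(1+2^{\overline p})^n$---then delivers the claimed $\mathcal{O}((\sqrt{n}/\kinv(C(\error/\|f\|_k)^2))^n)$ bound on $\mathrm{card}(X)$.
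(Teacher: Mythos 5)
Your proposal is correct and follows essentially the same route as the paper's proof: bounding $\lvert\widetilde f_a\rvert$ via Lemma~\ref{le:cont} and the zero-mean shift, bounding $\|\widetilde h_{X_{a,\hat p}}\|_\kla$ via \eqref{eq:RKHS_h_a} together with the condition-number bound \eqref{eq:condition_extended}, using the $\error/2^{\lambda+1}$ padding in \eqref{eq:gamma_hat} to control the ratio $\hat\gamma_{a,\underline p+1}/\hat\gamma_{a,\underline p}$ by $2$ so that $\rkhsx\leq\error$ once $\ell_a$ is below the threshold, and then reusing the halving/counting argument of Theorem~\ref{th:alki}. The only cosmetic difference is that you bound $\lvert\widetilde f_a(x)\rvert$ by averaging over the sample points rather than picking a single point of opposite sign, which yields the identical estimate.
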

\begin{proof}
The proof is structured analogously to the proof of Theorem~\ref{th:alki}.
In Part~I, we show that Algorithm~\ref{alg:alkiax_offline} terminates once a threshold length scale is reached. % by formalizing an upper bound on the RKHS norm of the approximating function $\|h_{X_{a,\hat p}}\|_\kla$.
In Part~II, we derive the worst-case sample complexity in order to reach that threshold length scale.
In Part~III, we demonstrate that the resulting localized approximating functions satisfy error bound~$\error$.\\
\textit{Part I:} Consider any $a \in \mathcal{A}$.
First, we prove that $\lvert \widetilde f_a(x) \rvert \leq \|f\|_k \sqrt{2\knorm(\sqrt n\ell_a)}$ holds for all $x \in \domain_a$.
Consider an arbitrary $x \in \domain_a$, where, w.l.o.g., $\widetilde f_a(x) \geq 0$.
Pick any other $x^\prime \in \domain_a$ with $\widetilde f_a(x^\prime) \leq 0$, which exists since ${\sum_{x \in X_{a,\underline{p}}} \widetilde f_a(x)}=0$ (see~\eqref{eq:shift}).
It holds that 
 \begin{align}\label{eq:lipschitz_bound}
     \lvert \widetilde f_a(x) \rvert 
     &\leq \lvert \widetilde f_a(x) - \widetilde f_a(x^\prime) \rvert
     \overset{\eqref{eq:shifted_f}}{=} \lvert f(x) - f(x^\prime) \rvert  \\
     \overset{\mathrm{Lem.\ref{le:cont}}}&{\leq} \|f\|_k \sqrt{2\knorm(\|x-x^\prime\|_2)} 
     \overset{\eqref{eq:scaled_kernel}}{\leq}
     \|f\|_k \sqrt{2\knorm(\sqrt{n}\ell_a)} \nonumber.  
\end{align}
Next, we derive an upper bound on $\|\widetilde h_{X_{a,\hat p}}\|_\kla$.
For all $\hat p \in \{\underline{p}, \underline{p}+1\}$, it holds that
\begin{align} \label{eq:h_RKHS_bound}
    \|\widetilde h_{X_{a,\hat p}}\|_\kla \overset{\eqref{eq:h_RKHS}}&{=}
    \sqrt{\widetilde f_{X_{a,\hat p}}^\top K_{a,X_{a,\hat p}}^{-1}\widetilde f_{X_{a,\hat p}}} \nonumber \\
    &\leq
    \sqrt{\|\widetilde f_{X_{a,\hat p}}\|_1 \|\widetilde f_{X_{a,\hat p}}\|_\infty   \|K_{a,X_{a,\hat p}}^{-1}\|_\infty   
    } \\
    \overset{\eqref{eq:lipschitz_bound}}&{\leq}
     \|f\|_k \sqrt{2\knorm(\sqrt n\ell_a) \left(1+2^{\underline{p}+1}\right)^n \|K_{a,X_{a,\hat p}}^{-1}\|_\infty
    } \nonumber \\
    \overset{\eqref{eq:condition_extended}}&{\leq}
     \|f\|_k \sqrt{2\knorm(\sqrt n \ell_a) \left(1+2^{\underline{p}+1}\right)^n
    \overline{\kappa}
    }. \nonumber
\end{align}
The first inequality follows from  Hölder's inequality, whereas the second inequality holds since $\mathrm{card}(X_{a,\hat p}) \leq \left(1+2^{\underline p +1}\right)^n$ for all $\hat p \in\{\underline p, \underline p+1\}$.
Moreover, the last inequality is satisfied since $\|K_{a,X_{a,\hat p}}^{-1}\|_\infty = \frac{\mathrm{cond}(K_{a,X_{a,\hat p}})}{\|K_{a,X_{a,\hat p}}\|_\infty} \overset{\eqref{eq:condition_extended}}{\leq}  \overline \kappa$ with $\|K_{a,X_{a,\hat p}}\|_\infty \geq \max_{i,j} K_{a,X_{a,p}}(i,j)=\widetilde{k}(0)=1$. 
In the following, we show that Algorithm~\ref{alg:alkiax_offline} terminates whenever
\begin{align}\label{eq:alkiax_ell_bound}
\ell_a \leq \frac{ \kinv \left(
\frac{\error^2}{2 \overline \kappa \left( 1+2^{\underline p+1}\right)^n
2^{2\lambda +2} \|f\|_k^2
}
\right)}
{\sqrt{n}}.
\end{align}
For any $\hat p \in \{\underline p, \underline p+1\}$, \eqref{eq:h_RKHS_bound} and \eqref{eq:alkiax_ell_bound} imply
\begin{align} \label{eq:h_norm_bound}
    \| \widetilde h_{X_{a,\hat p}} \|_\kla \leq \frac{\error}{2^{\lambda + 1}}.
\end{align}
For any $c \in \cubes_{a,p_a}$, \eqref{eq:h_norm_bound} implies
\begin{align}\label{eq:alkiax_error}
    &
    %P_{X_c}(\xcm) \rkhsx \leq 
    \rkhsx
    \overset{\eqref{eq:gamma_bar}}{=} \hat{\gamma}_{a,\underline p} \left( \frac{\hat{\gamma}_{a,\underline p+1}}{\hat{\gamma}_{a,\underline p}}\right)^{\lambda} 
    \nonumber  \\
    \overset{\eqref{eq:gamma_hat}}&{=} \left(\| \widetilde{h}_{X_{a,\underline{p}}}  \|_\kla + \frac{\error}{2^{\lambda+1}}
 \right) \left(\frac{\| \widetilde{h}_{X_{a,\underline{p}+1}}  \|_\kla + \frac{\error}{2^{\lambda +1}}
}{\| \widetilde{h}_{X_{a,\underline{p}}}  \|_\kla + \frac{\error}{2^{\lambda +1}}
}\right)^\lambda 
 \\  
    &\leq \left(\| \widetilde{h}_{X_{a,\underline{p}}}  \|_\kla + \frac{\error}{2^{\lambda +1}}
 \right) \left(\frac{\error + \| \widetilde{h}_{X_{a,\underline{p}+1}}  \|_\kla 2^{\lambda + 1}}{ \error}\right)^\lambda 
\nonumber \\
    \overset{\eqref{eq:h_norm_bound}}&{\leq} \left(\frac{\error}{2^{\lambda + 1}} + \frac{\error}{2^{\lambda +1}} \right) \left( \frac{\error + \error }{\error}\right)^\lambda 
    = \left(\frac{2\error}{2^{\lambda +1}}\right)2^\lambda 
    = \error. \nonumber
\end{align}
Thus, analogous to Part~I of the proof of Theorem~\ref{th:alki}, this yields $p_a^*=0$ in Proposition~\ref{pr:max_error} and $p_a = \underline{p}\leq\overline p$ in~\eqref{eq:pmin_pa}, wherefore Algorithm~\ref{alg:alkiax_offline} terminates if \eqref{eq:alkiax_ell_bound} is satisfied.
Thus,
\begin{align}\label{eq:alkiax_ell_termination}
\ell_a > \frac{ \kinv \left(
\frac{\error^2}{2 \overline \kappa \left( 1+2^{\underline p+1}\right)^n
2^{2\lambda +2} \|f\|_k^2
}
\right)}
{2\sqrt{n}}
\end{align}
always holds.
Hence, we have that 
\begin{align} \label{eq:ell_a_min}
     \ell_a \leq \mathcal{O}\left(
    \frac{\kinv\left(C\left(\frac{\error}{\|f\|_k}\right)^2\right)}{\sqrt{n}}
    \right)\quad \forall a \in \mathcal{A},
\end{align}
where $C>0$ is a constant that is independent of $\error$, $f$.
\\
\textit{Part II:}
From Part~II of the proof of Theorem~\ref{th:alki}, we know that $\mathrm{card}(X) \leq (\min_{a \in \mathcal{A}} \ell_a)^{-n} \left(1+2^{\overline p}\right)^n$.
With~\eqref{eq:ell_a_min}, it follows that
\begin{align}  \label{eq:alkiax_sample_complexity}
        \mathrm{card}(X) &\leq\mathcal{O}
        \left(
        \left(\frac{\sqrt{n}(1+2^{\overline p})}{
\kinv\left(
C
\left(\frac{\error}{ \|f\|_k
} \right)^2
\right)}        \right)^n 
\right)
\\
% & = \mathcal{O}\left(\left(
% \frac{\sqrt{n}}{\kinv\left(\frac{\error^2}{2 \overline \kappa \left( 1+2^{\underline p+1}
%         \right)^n
% 2^{2\lambda +2} \|f\|_k^2
% }
% \right)}
% \right)^n
% \right) \\
 &=  \mathcal{O}\left(\left(
\frac{\sqrt{n}}{
\kinv\left(
C
\left(\frac{\error}{ \|f\|_k
} \right)^2
\right)}
\right)^n
\right). \nonumber
\end{align}
\\
\textit{Part III:} 
The proof is analogous to Part~III of the proof of Theorem~\ref{th:alki}.
We determine $\rkhso$ with Assumption~\ref{asm:extrapolation} and receive $p_a$ from Proposition~\ref{pr:max_error} and~\eqref{eq:pmin_pa}.
The satisfaction of Inequality~\eqref{eq:uniform_error} directly follows.
\end{proof}
% Note that the application of Algorithm~\ref{alg:alkiax_offline} does not require explicit knowledge of the RKHS norm of the unknown ground truth $\|f\|_k$ but merely the fact that $\|f\|_k<\infty$ (\cf Section~\ref{sec:kernel_interpolation}).

Since $\kinv (0) = 0$, we have that~$\nicefrac{\error}{\|f\|_k} \rightarrow 0$ implies~$\mathrm{card}(X) \rightarrow \infty$. 
This result is intuitive, as more complex functions (corresponding to a larger RKHS norm~$\|f\|_k$) or a smaller allowed approximation error~$\error$ require more samples.
Furthermore, the theoretical results obtained in this section show that, in the worst case, the number of samples to terminate \alkiax\ may scale exponentially with the input dimension, which is expected and widely known as the curse of dimensionality. 

\subsection{Complexity bounds for the SE-kernel}\label{sec:se}
In this section, we derive more intuitive bounds on the online and offline complexity for the special case of the squared-exponential (SE) kernel, which is one the the most common kernels.
The following lemma provides a bound on the smallest length scale that can occur when executing~Algorithm~\ref{alg:alkiax_offline} (see   Part~II of the proof of Theorem~\ref{th:alkiax}). 

\begin{lemma}\label{le:SE}
Suppose that the assumptions in Theorem~\ref{th:alkiax} hold and that~$k$ is the SE-kernel.
Then, for any $a\in\domain_a$, it holds that
\begin{align}\label{eq:ell_limit}
    % \lim_{\frac{\error}{\|f\|_k}\rightarrow 0} \left(\ell_a - 
    % \mathcal{O} \left(
    % \frac{ \error}{\sqrt n\|f\|_k}    
    % \right)
    % \right)\leq
    % 0
    % \quad \forall a\in\mathcal{A}.
    \ell_a \leq \mathcal{O} \left(
    \frac{ \error}{\sqrt n\|f\|_k}    
    \right)
    \quad \mathrm{as}\; \frac{\error}{\|f\|_k}\rightarrow 0.
\end{align}
\end{lemma}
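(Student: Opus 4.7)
The plan is to specialize the bound~\eqref{eq:ell_a_min} derived in Part~II of the proof of Theorem~\ref{th:alkiax} to the SE-kernel by a direct Taylor expansion argument. Recall that~\eqref{eq:ell_a_min} reads
\begin{align*}
    \ell_a \leq \mathcal{O}\left(
    \frac{\kinv\!\left(C\left(\frac{\error}{\|f\|_k}\right)^2\right)}{\sqrt{n}}
    \right)
\end{align*}
with a constant $C>0$ independent of $\error$ and $f$. Hence the lemma reduces to showing that, for the SE-kernel, $\kinv(s) = \mathcal{O}(\sqrt{s})$ as $s \to 0^+$, since this converts the square inside $\kinv$ into the linear rate claimed in~\eqref{eq:ell_limit}.

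First, I would write out $\knorm$ for the SE-kernel. Since $\widetilde{k}(r) = \exp(-r^2/(2\sigma^2))$ for some $\sigma > 0$ (which satisfies Assumption~\ref{asm:kernel}), the definition~\eqref{eq:knorm} gives $\knorm(r) = 1 - \exp(-r^2/(2\sigma^2))$. A Taylor expansion around $r=0$ yields $\knorm(r) = \tfrac{r^2}{2\sigma^2} + \mathcal{O}(r^4)$, from which inverting gives $\kinv(s) = \sigma\sqrt{2s} + \mathcal{O}(s^{3/2})$ as $s \to 0^+$. Equivalently, $\kinv(s) = \mathcal{O}(\sqrt{s})$ in this regime, which is the only property of the SE-kernel I need.

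Second, I would substitute this asymptotic into~\eqref{eq:ell_a_min}. Setting $s = C(\error/\|f\|_k)^2$ and using $\error/\|f\|_k \to 0$, the argument of $\kinv$ tends to zero, so
\begin{align*}
\kinv\!\left(C\left(\frac{\error}{\|f\|_k}\right)^2\right)
= \mathcal{O}\!\left(\sqrt{C}\,\frac{\error}{\|f\|_k}\right)
= \mathcal{O}\!\left(\frac{\error}{\|f\|_k}\right).
\end{align*}
Dividing by $\sqrt{n}$ and combining with~\eqref{eq:ell_a_min} yields the claimed bound $\ell_a \leq \mathcal{O}(\error/(\sqrt{n}\|f\|_k))$.

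There is no real obstacle here: the lemma is essentially a corollary of Theorem~\ref{th:alkiax} together with the elementary fact that the SE-kernel's $\knorm$ behaves quadratically at the origin, so its inverse behaves as a square root. The only care needed is to check that the residual $\mathcal{O}(s^{3/2})$ term does not dominate (it does not, since it is of lower order as $s \to 0$) and to ensure the $\sigma$-dependent prefactor is absorbed into the $\mathcal{O}(\cdot)$ notation, since the lemma makes no claim about the hidden constant.
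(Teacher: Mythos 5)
Your proposal is correct and follows essentially the same route as the paper: both reduce the lemma to the asymptotic $\kinv(s)=\mathcal{O}(\sqrt{s})$ as $s\to 0^+$ and substitute into the bound~\eqref{eq:ell_a_min} from the proof of Theorem~\ref{th:alkiax}. The only cosmetic difference is that you obtain this asymptotic by Taylor-expanding $\knorm$ at the origin and inverting, whereas the paper applies L'H\^opital's rule directly to $\kinv_\mathrm{SE}(x)=\sqrt{\ln(1/(1-x))}$ (and in fact your version sidesteps the paper's mislabeled limit ``as $x\rightarrow 1$'', which should read $x\rightarrow 0$).
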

The proof can be found in Appendix~\ref{sec:appendix_lemma_SE_limit}.
Moreover, the following corollary provides an upper bound on the number of samples sufficient to terminate Algorithm~\ref{alg:alkiax_offline}.

\begin{corollary}\label{co:SE}
Suppose that the conditions in Lemma~\ref{le:SE} hold.
Then, it holds that
\begin{align}\label{eq:se_limit}
    % \lim_{\frac{\error}{\|f\|_k}\rightarrow 0} \left( \mathrm{card}(X)
    % - \mathcal{O} \left(\left(\frac{ \sqrt n\|f\|_k}{\error}\right)^n\right)
    % \right) \leq 0. \\
    \mathrm{card}(X) \leq \mathcal{O} \left(\left(\frac{ \sqrt n\|f\|_k}{\error}\right)^n\right) \quad \mathrm{as}\; \frac{\error}{\|f\|_k}\rightarrow 0.
\end{align}
\end{corollary}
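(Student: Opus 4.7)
The plan is to derive Corollary~\ref{co:SE} by combining Lemma~\ref{le:SE}, which bounds the smallest length scale produced by Algorithm~\ref{alg:alkiax_offline} for the SE-kernel, with the elementary counting argument already used in Part~II of the proof of Theorem~\ref{th:alkiax}.

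First, I would recall from Part~II of the proof of Theorem~\ref{th:alkiax} the inequality
\begin{align*}
\mathrm{card}(X) \leq \left(\min_{a\in\mathcal{A}}\ell_a\right)^{-n}\left(1+2^{\overline p}\right)^n,
\end{align*}
where $\overline p$ is a fixed hyperparameter independent of $\error$ and $f$. What is therefore needed is a lower bound on $\min_{a\in\mathcal{A}}\ell_a$ of the form $c\cdot\error/(\sqrt n\,\|f\|_k)$ for some $\error$- and $f$-independent constant $c>0$.

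Such a lower bound follows from Lemma~\ref{le:SE} combined with the binary halving performed by Algorithm~\ref{alg:alkiax_offline}: the lemma characterizes the termination threshold as $\ell_a\leq c_1\,\error/(\sqrt n\,\|f\|_k)$ for some constant $c_1>0$, while the halving rule guarantees that whenever the algorithm halts at a sub-domain of length scale $\ell_a$, the preceding length scale $2\ell_a$ did not yet trigger termination, so $\ell_a>c_1\,\error/(2\sqrt n\,\|f\|_k)$. Substituting $\min_{a\in\mathcal{A}}\ell_a\geq (c_1/2)\,\error/(\sqrt n\,\|f\|_k)$ into the counting inequality and absorbing the $\error$- and $f$-independent factor $(1+2^{\overline p})^n$ into the $\mathcal{O}$ yields
\begin{align*}
\mathrm{card}(X)\leq \mathcal{O}\!\left(\left(\frac{\sqrt n\,\|f\|_k}{\error}\right)^{n}\right),
\end{align*}
which is exactly the claim. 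As a sanity check, one can equivalently obtain the same estimate directly from Theorem~\ref{th:alkiax} by inserting the first-order expansion $\kinv(s)=\sqrt{2s}\,(1+o(1))$ valid for the SE-kernel $\widetilde k(r)=\exp(-r^2/2)$ into the general bound $\mathcal{O}((\sqrt n/\kinv(C(\error/\|f\|_k)^2))^n)$.

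The only subtlety is keeping the direction of the asymptotic inequality straight when passing from Lemma~\ref{le:SE}'s upper bound on the termination threshold to the matching lower bound on $\min_a\ell_a$; apart from this bookkeeping, the argument is a direct substitution. I do not expect any serious obstacle, since the SE-kernel-specific expansion of $\kinv$ underlying Lemma~\ref{le:SE} and the general sample-complexity analysis of Theorem~\ref{th:alkiax} together already provide all the necessary ingredients.
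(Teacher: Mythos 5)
Your proposal is correct and follows essentially the same route as the paper: combine the counting bound $\mathrm{card}(X)\leq(\min_{a\in\mathcal A}\ell_a)^{-n}(1+2^{\overline p})^n$ from Part~II of the proof of Theorem~\ref{th:alki} with the SE-specific length-scale estimate of Lemma~\ref{le:SE}. You are in fact slightly more careful than the paper in converting the termination threshold into a genuine lower bound on $\min_{a\in\mathcal A}\ell_a$ via the halving argument, a step the paper leaves implicit.
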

\begin{proof}
From Part~II of Theorem~\ref{th:alki}, we know that $\mathrm{card}(X) \leq (\min_{a\in\mathcal A}\ell_a)^{-n} \left(1+2^{\overline p}\right)^n = \mathcal{O}((\min_{a\in\mathcal A}\ell_a)^{-n})$.
For $\nicefrac{\error}{\|f\|_k}\rightarrow 0$, Equation~\eqref{eq:ell_limit} derived in Lemma~\ref{le:SE} 
provides a bound on $\ell_a $ for all  $\in \mathcal A$,
from which~\eqref{eq:se_limit} directly follows. 
\end{proof} % Thus, the worst-case sample complexity  $\mathrm{card}(X)= \mathcal{O}\left(
% \frac{1}{\ln\left(\frac{1}{1-\left(\nicefrac{\error}{ \|f\|_k
% } \right)^2}
% \right)^{\nicefrac{n}{2}}}
% \right)$.

% In the following remark, we examine the space and time complexity of the offline learning with \alkiax\ (\cf Algorithm~\ref{alg:alkiax_offline}), and the time complexity of the approximating function for the online evaluation (\cf Algorithm~\ref{alg:alki_online}).

% In Theorem~\ref{th:alkiax}, we derived the maximum number of samples required to terminate Algorithm~\ref{alg:alkiax_offline} (\cf\eqref{eq:alkiax_sample_complexity}).
% Thus, the space and time complexity of the offline learning with \alkiax\ is given by~\eqref{eq:alkiax_sample_complexity}.
% For the special case of working with the SE-kernel and considering the limit case $\nicefrac{\error}{\|f\|_k}\rightarrow 0$, a straightforward expression of the space and time complexity of the offline learning can be found in Corollary~\ref{co:SE}.
% We discover that the space and time complexities for a given $\nicefrac{\error}{\|f\|_k}$ scale exponentially with the input dimension $n$.

\subsubsection*{Offline computational complexity}
Inequality~\eqref{eq:se_limit} shows that, for a given $\nicefrac{\error}{\|f\|_k}$, the number of maximum samples to terminate Algorithm~\ref{alg:alkiax_offline} scales exponentially with the input dimension $n$.
Since we store all samples $X$ to generate the approximating functions (see  Algorithm~\ref{alg:alki_function}), the memory requirements also scale exponentially with $n$.
%Therefore, also the memory requirement scales exponentially with the input dimension $n$ for a given $\nicefrac{\error}{\|f\|_k}$.

\subsubsection*{Online computational complexity}
%Next, we examine the complexity of the online evaluation (\cf Algorithm~\ref{alg:alki_online}).
The online evaluation with Algorithm~\ref{alg:alki_online} for an input $x\in \domain$ can be divided into three steps.
First, we execute a tree search to find $\domain_c$ such that $x \in \domain_c \subseteq \domain_a \subseteq \domain$.
Second, we construct the covariance vector~\eqref{eq:k_x}.
Third, we evaluate the approximating function~\eqref{eq:shifted_h_cube}.
Next, we elaborate on the complexity of each step.

Each sub-domain $\domain_a \subseteq \domain$ corresponds to a node of the resulting tree of the 
offline approximation. 
The domain $\domain$ corresponds to the root node and the sub-domains containing the local cubes with the localized approximating functions are located in the leaf nodes.
% In each depth level $d$ of the tree, one if-condition is checked to proceed to the local cube $\domain_c$ s.t. $x \in \domain_c$.
% Thus, the maximum depth $\overline d$ characterizes the complexity of the tree search.
% Moreover, it holds that the length scale is halved from one depth level to the next one.
% Therefore, we have that  $d = \mathcal{O}\left(\log_2( \ell^{-1})\right)$, which yields 
% \begin{align} \label{eq:overline_d}
%     \overline d = \mathcal{O}\left(\log_2(\underline  \ell^{-1})\right).
% \end{align}
The complexity of the tree search scales linearly with the depth of the tree.
Since the length scale is halved from one depth level to the next one, the maximum depth $\overline d \in \mathbb{N}$ is given by
%and, with~, we have that
\begin{align} \label{eq:depth}
    \overline d = \mathcal{O}\left(\log_2\left(\left(\min_{a\in \mathcal{A}} \ell_a\right)^{-1}\right)\right) \overset{\eqref{eq:ell_limit}}{\leq} \mathcal{O}\left(\log_2\left(\frac{\|f\|_k \sqrt n}{\error}\right)\right).
\end{align}
Hence, the computational complexity of the tree search only scales \emph{logarithmically}.
This result is comparable to the logarithmic complexity of localized GP regression obtained in~\cite{Lederer2020Realtime}.

Since we use local approximations based on local cubes with $2^n$ samples, each covariance vector~\eqref{eq:k_x} is $2^n$-dimensional.
Thus, to construct the covariance vector, the kernel needs to be evaluated at $2^n$ points.
%Therefore, this part of the online complexity is $\mathcal{O}(2^n)$.

%The multiplication between the vector containing the function values and the inverse covariance matrix in~\eqref{eq:h} can be computed offline.
In~\eqref{eq:shifted_h_cube}, the vector-matrix multiplication~$f_{X_c}^\top K_{a,X_{c}}^{-1}$ can be computed offline, and hence evaluating the approximating function~\eqref{eq:shifted_h_cube} online reduces to computing $2^n \times 2^n$ vector-vector multiplications.

To conclude, although the offline sample complexity and the memory requirements scale exponentially with the input dimension~$n$ for a given $\nicefrac{\|f\|_k}{\error}$, the online evaluation is characterized by operations of order $2^n$ that are \emph{independent} of the desired accuracy~$\error$ and the ground truth~$f$, and the tree search that scales \emph{logarithmically} with $\nicefrac{\|f\|_k \sqrt n}{\error}$.

% !TEX root =  ../main.tex 

\section{
Approximate MPC via \alkiax % with closed-loop guarantees
} \label{sec:alkiax_MPC}

In this section, we investigate the characteristics of the approximate MPC obtained using \alkiax.
As introduced in Section~\ref{sec:problem_setting}, the ground truth~$f$ corresponds to the MPC feedback law  $u=f(x)$ 
and \alkiax\ determines an explicit feedback law $u=h(x)$ that approximates the MPC feedback~$f$ on the domain~$\domain$.
\subsubsection*{
Infeasible states}
\alkiax\ requires a cubic domain~$\domain$ and the evaluation of the ground truth $f$ for any $x \in \domain$.
The set $\domain$ should represent the state constraints, and in the standard case of box constraints, they can be shifted and scaled to yield a cubic set $\domain$.
The evaluation of $f$ requires the solution of the underlying NLP, which is only defined on some feasible set~$\xfeasf\subseteq \domain$. 
We define function evaluations for infeasible states $x \in \domain \setminus \xfeasf$ by relaxing hard state and terminal constraints in the MPC formulation using slack variables and penalties~\cite{Kerrigan2000Soft}. 
Then, under suitable regularity conditions, the optimization problem returns the same solution as the original MPC for all feasible states $x \in \xfeasf$ 
(see~\cite{Rosenberg1984Exact}),
while also providing a well-defined solution for $x \in \domain \setminus \xfeasf$.
\subsubsection*{
Feasible domain of the approximate MPC}
Although we can obtain function evaluations for the whole domain~$\domain$ during the
offline approximation, there is no need for approximating the MPC on infeasible parts of
the domain $\domain$.
If each state within a sub-domain results in a non-zero slack variable, \alkiax\ classifies that sub-domain as infeasible and does not continue the approximation process on that sub-domain.
Hence, \alkiax\ defines its own feasible domain $\xfeash$ on which it computes an approximating function with the desired approximation accuracy~$\error$. 
We assume that $\xfeasf \subseteq \xfeash$ holds to ensure that the error bound $\error$ is enforced for all feasible states $x \in \xfeasf$.
\subsubsection*{Non-scalar ground truths}
In Sections~\ref{sec:alki} and~\ref{sec:alkiax}, we developed \alkiax\ with its corresponding guarantees for a scalar ground truth $f{:}\; \domain\rightarrow \mathbb{R}$.
However, as discussed in Section~\ref{sec:problem_setting}, the MPC feedback law $f{:}\; \domain\rightarrow \mathbb{R}^{n_u}$ is in general non-scalar, \ie $n_u > 1$.
We can approximate such non-scalar functions by treating each dimension separately.
Specifically, in Algorithm~\ref{alg:alkiax_offline}, we determine the sufficiently large integer~$p_a$ corresponding to the sufficient number of samples as the maximum of the sufficient number of samples required to approximate each control input dimension separately.
Moreover, the localized approximating functions for each dimension are constructed (see  Algorithm~\ref{alg:alki_function}) and evaluated (see  Algorithm~\ref{alg:alki_online}) independently.
Hence, \alkiax\ ensures $\|f(x)-h(x)\|_\infty\leq \error$ for all $x\in\xfeasf\subseteq\xfeash$.

\subsubsection*{Closed-loop guarantees}\label{sec:closed-loop}
We provide closed-loop guarantees on the approximate MPC by combining a robust MPC design with the guaranteed approximation error (see  Figure~\ref{fig:framework}).
To this end, consider a disturbance set $\mathcal{W}\subseteq \mathbb{R}^n$ that upper-bounds the effect of inexactly approximating the control input, \ie
$\mathcal{W}\supseteq\{g(x,\tilde{u})-g(x,u)|~x\in\domain,u\in\mathcal{U},\|u-\tilde{u}\|_\infty\leq\error\}$, see \eg constructions in~\cite{Hertneck2018Learning,Nubert2020Robot}.

\begin{corollary}\label{co:MPC}
Let~$f$ be according to the robust MPC design in~\cite[Problem~(30)]{Kohler2021Uncertain} with disturbance bound $\mathcal{W}$ and suppose that the conditions in~\cite[Theorem~2]{Kohler2021Uncertain} regarding $f,g,\domain,\mathcal{U},\mathcal{W}$ hold.\footnote{%
As discussed above,~$f$ is the robust MPC design with constraints relaxed using penalties to ensure that the function is defined for all $x\in\domain$. We assume that the conditions in~\cite{Rosenberg1984Exact} hold, ensuring that~$f(x)$ coincides with the minimizer of~\cite[Problem~(30)]{Kohler2021Uncertain} for all $x\in\xfeasf$.}
Suppose further that the conditions in Theorem~\ref{th:alkiax} hold and that~$h$ is constructed according to Algorithm~\ref{alg:alkiax_offline}, where we approximate each dimension independently.
Then, for any initial condition $x_0\in\xfeasf$, the closed-loop system $x_{t+1}=g(x_t,u_t)$, $u_t=h(x_t)$ ensures
\begin{enumerate}[(i)]
\item recursive feasibility $x_t\in \xfeasf\subseteq \xfeash, \; \forall t\in\mathbb{N}$,
\item constraint satisfaction $x_t\in\domain$, $u_t\in\mathcal{U}$, $\forall t\in\mathbb{N}$, 
\item and practical\footnote{%
     Practical asymptotic stability implies convergence to a neighborhood of the steady-state $x_\mathrm{s}$. In this case, the size of the neighborhood depends on the magnitude of the approximation error 
     $\max_{x\in\xfeasf}\| f(x)-h(x)\|_\infty \leq \error$.} asymptotic stability of $x=x_\mathrm{s}$.
\end{enumerate}
\end{corollary}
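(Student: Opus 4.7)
My plan is to interpret the closed-loop system under the approximate feedback $h$ as a perturbed version of the nominal closed-loop system under the robust MPC feedback $f$, where the perturbation takes values in the disturbance set $\mathcal{W}$. Once this reformulation is established, the three claims (i)--(iii) follow in a single step from~\cite[Theorem~2]{Kohler2021Uncertain}, whose hypotheses are assumed.

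The first technical step is to translate the approximation guarantee into a disturbance bound. I would apply Theorem~\ref{th:alkiax} to each of the $n_{\mathrm{u}}$ output dimensions of $f$ independently, as prescribed in the paragraph on non-scalar ground truths, and take the dimension-wise maximum to obtain $\|f(x)-h(x)\|_\infty\leq\error$ for every $x\in\xfeasf\subseteq\xfeash$. By the very definition of $\mathcal{W}$, this yields, for every such $x$,
\begin{equation*}
g(x,h(x))=g(x,f(x))+w,\quad w\in\mathcal{W}.
\end{equation*}
Hence whenever $x_t\in\xfeasf$, the one-step update $x_{t+1}=g(x_t,h(x_t))$ coincides with a nominal robust MPC step plus a disturbance $w_t\in\mathcal{W}$.

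The second step is an induction on $t\in\mathbb{N}$ that simultaneously delivers (i) and (ii). The base case $x_0\in\xfeasf$ is given. For the inductive step, assume $x_t\in\xfeasf$; by~\cite{Rosenberg1984Exact} the relaxed feedback $f(x_t)$ coincides with the original robust MPC minimizer at $x_t$, so the displayed reformulation places us exactly in the setting of~\cite[Theorem~2]{Kohler2021Uncertain}. That theorem guarantees $x_{t+1}\in\xfeasf$, together with $x_{t+1}\in\domain$ and $u_t=h(x_t)\in\mathcal{U}$, closing the induction and yielding (i) and (ii). Claim (iii) is obtained from the practical asymptotic stability statement of the same theorem, where the size of the residual neighborhood around $x_{\mathrm{s}}$ is governed by the magnitude of $\mathcal{W}$, and therefore ultimately by $\error$.

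The main obstacle I expect is the apparent circularity between the approximation guarantee and recursive feasibility: Theorem~\ref{th:alkiax} certifies $\|f(x)-h(x)\|_\infty\leq\error$ only on $\xfeasf$, but to keep invoking this bound along the trajectory one needs to know a priori that the trajectory never leaves $\xfeasf$. The induction above is tailored to break this loop by interleaving, at each time step, the approximation bound (which places $w_t$ in $\mathcal{W}$) and the robust one-step invariance from~\cite[Theorem~2]{Kohler2021Uncertain} (which propagates $\xfeasf$). The only auxiliary point to check is that the assumed inclusion $\xfeasf\subseteq\xfeash$ is actually used here, ensuring that \alkiax\ has produced a valid approximating function at every state the closed loop can visit; this is discussed in the paragraph on the feasible domain preceding the corollary and requires no further argument.
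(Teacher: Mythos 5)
Your proposal is correct and follows essentially the same route as the paper's proof: establish $\|f(x)-h(x)\|_\infty\leq\error$ on $\xfeasf\subseteq\xfeash$ via Theorem~\ref{th:alkiax} applied dimension-wise, rewrite the closed loop as $x_{t+1}=g(x_t,f(x_t))+w_t$ with $w_t\in\mathcal{W}$, and invoke~\cite[Theorem~2]{Kohler2021Uncertain}. The only difference is that you make the induction breaking the feasibility/approximation circularity explicit, which the paper leaves implicit in its appeal to the cited robust MPC theorem.
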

\begin{proof}
Theorem~\ref{th:alkiax} ensures that each dimension of the approximating function~$h$, which we receive independently from Algorithm~\ref{alg:alkiax_offline}, satisfies Inequality~\eqref{eq:uniform_error}, \ie the satisfaction of $\|f(x)-h(x)\|_\infty\leq\epsilon$ is ensured for all $x\in\xfeash$.
    Hence, the closed-loop system satisfies $x_{t+1}=g(x_t,h(x_t))=:g(x_t,f(x_t))+w_t$ with $w_t\in\mathcal{W}$ for all $x\in\xfeasf\subseteq\xfeash$. Thus, the closed-loop guarantees in~\cite[Theorem~2]{Kohler2021Uncertain} with respect to disturbances $w_t\in\mathcal{W}$ apply.
\end{proof}
    These theoretical guarantees are equally applicable if any other nonlinear MPC scheme is approximated
    that is designed to be robust with respect to disturbances $w_t\in\mathcal{W}$, see \eg \cite{\robustMPCcite} for corresponding designs.

\subsubsection*{Computational complexity}\label{sec:scalability}
In the complexity analysis in Section~\ref{sec:se}, we showed that the offline computational complexity of \alkiax\ scales exponentially with the number of states $n$. 
In contrast, the computational complexity of classical explicit MPC approaches scales with the number of states, the number of constraints, and the prediction horizon~\cite[Section~3]{Alessio2009Explicit}.
Hence, a major benefit of approximating MPC schemes using \alkiax\ is the fact that the computational complexity does \emph{not} directly increase with the prediction horizon or the number of constraints used in the MPC formulation.
Moreover, the online computational complexity of classical kernel-based methods scales cubicly in the number of samples \cite[Section~2.3]{Rasmussen2006Gaussian}.
In contrast, the online computational complexity of \alkiax\ only scales \emph{logarithmically} with $\nicefrac{\|f\|_k\sqrt n}{\error}$.
%This ensures a fast-to-evaluate approximating function (\cf\ref{property:fast}).
Thus, the resulting approximate MPC is fast-to-evaluate, and hence suitable to control nonlinear systems that require high sampling rates with relatively cheap hardware~\ref{property:fast}.

% !TEX root =  ../main.tex 
\section{Numerical Experiments} \label{sec:numerical}
In this section, we approximate two nonlinear MPC schemes using \alkiax.
In Section~\ref{sec:cstr}, we apply \alkiax\ on an academic example and compare its performance with existing work.
In Section~\ref{sec:plasma}, we demonstrate the practicability of \alkiax\ by approximating 
an MPC corresponding to a real-world application for which fast sampling rates are required and computational capacity is limited. 
\alkiax\ and both MPCs are implemented in \python.
The MPC schemes are formulated using \casadi~\cite{Casadi2019} and the underlying NLPs are solved with IPOPT~\cite{IPOPT2006} using just-in-time compilation.
We conducted the offline approximations on a Linux server with \SI{32}{\giga\byte} RAM
parallelized on~8 cores.
The online evaluation was performed using an Ubuntu laptop with \SI{32}{\giga\byte} RAM and an Intel Core i7-12700H processor.
Moreover, we used the Matérn kernel with $\nu=\nicefrac{3}{2}$ and a length scale of~$0.8$ for both experiments.

\subsection{Continuous stirred tank reactor}\label{sec:cstr}
\begin{figure}
    \centering
    \input{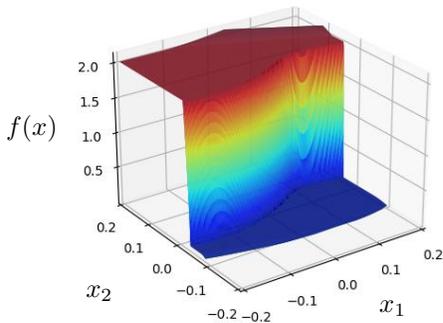}
    \caption{Ground truth MPC feedback law~$f$ over the domain $\domain=[-0.2,0.2]^2$. 
}
    \label{fig:ground_truth}
\end{figure}

We consider a nonlinear continuous stirred tank reactor with
\begin{align}\label{eq:CSTR}
    g( x, u)&=\begin{pmatrix}
         x_1 + \delta \left(\frac{1- x_1}{\theta}- \hat k  x_1 e^{-\frac{M}{ x_1}}\right) \\
         x_2+\delta\left(\frac{x_f- x_2}{\theta}+\hat k x_1e^{-\frac{M}{ x_2}}-\hat \alpha  u( x_2-x_c)
        \right)
            \end{pmatrix},
\end{align}
where~$x_1$ is the temperature,~$x_2$ is the concentration,~$u$ is the coolant flow, and~$\delta=\SI{0.1}{\second}$ is the sampling time, see~\cite{Hertneck2018Learning} for details.
The nonlinear MPC is formulated with a horizon of~$N=180$, the input constraint is $\mathcal{U}=[0,2]$,
and we obtain a cubic domain~$\mathcal{X}=[-0.2,0.2]^2$ after shifting, as done in~\cite{Hertneck2018Learning}.\footnote{%
Before executing \alkiax, we shift and scale the domain~$\domain$ to yield the unit cube~$[0,1]^2$.
}
Figure~\ref{fig:ground_truth} shows the corresponding MPC feedback law~$f$ over the domain $\domain=[-0.2,0.2]^2$.

We use \alkiax\ to determine an approximate MPC to control system~\eqref{eq:CSTR} with high sampling rates and closed-loop guarantees.
To this end, we consider the MPC scheme in~\cite{Hertneck2018Learning}, which is designed to be robust with respect to input disturbances bounded by $\error=5.1\cdot 10^{-3}$.
To compute the approximate MPC, \alkiax\ requires samples, \ie solutions of the NLP of the ground truth MPC~$f$ for different states.
In this experiment, the hyperparameters were $\underline p=2$ and $\overline p=5$, yielding an upper bound on the condition number of~$\overline\kappa = 1.14\cdot 10^8$ (see~\eqref{eq:condition}, \eqref{eq:condition_extended}).

\subsubsection*{Result}\label{sec:result}
We approximate the ground truth MPC feedback law~$f$ using \alkiax\ with the error bound~$\error=5.1\cdot 10^{-3}$.
We heuristically validate the approximation error by evaluating the error for $90\cdot 10^3$ equidistant states on the domain~$\domain$, yielding a maximum error of $2.46\cdot 10^{-3}$, \ie  about 50\% smaller than the guaranteed error bound~$\error$.

\begin{figure}
    \centering
    \input{figures_final/subdomains_5e-3}
    \caption{Sub-domain partitioning using \alkiax\ and feasible domains.
    The light gray shaded area corresponds to the feasible domain $\xfeasf$ of the ground truth MPC, whereas the area enclosed by the magenta shading corresponds to the feasible domain~$\xfeash$ of the obtained approximate MPC via \alkiax.
    The resulting partitioning is reminiscent of classical linear explicit MPC approaches~\cite{Johansen2003Tree,Summers2011Multiresolution}.
    In total, we have~1996 sub-domains.
    }
    \label{fig:subdomains}
\end{figure}

\begin{comment}
    The documented online evaluation times are obtained over three closed-loop runs of 30 time steps from the initial state~$x=[\SI{35}{\degreeCelsius}, \SI{58}{\degreeCelsius}, \SI{0}{\min}]^\top$ and the NLP of the MPC is solved using a warm start.
\end{comment}

Figure~\ref{fig:subdomains} depicts the
sub-domain partitioning of \alkiax, the feasible domain~$\xfeasf$ of the MPC, and~$\xfeash$, 
the feasible domain computed by \alkiax.
Recall that \alkiax\ partitions sub-domains further if the maximum number of samples cannot ensure the desired approximation error~$\error$.
This behavior can be observed in Figure~\ref{fig:subdomains}, where larger sub-domains correspond to regions where the ground truth 
resembles a constant function
(see Figure~\ref{fig:ground_truth}), and hence fewer samples are required.
Furthermore, the derived sample complexities in Theorem~\ref{th:alkiax} and Corollary~\ref{co:SE} also support this statement.\footnote{%
The sample complexity on sub-domain~$\domain_a$ increases with the RKHS norm~$\|\tilde{f}_a\|_\kla$ 
(see Theorem~\ref{th:alkiax}, Corollary~\ref{co:SE}).
For constant functions, we have~$\|\tilde{f}_a\|_\kla=0$.}
Overall, \alkiax\ successfully approximates the ground truth MPC~$f$, yielding an approximate MPC~$h$ with closed-loop guarantees on stability and constraint satisfaction.

\subsubsection*{Discussion}
In the following, we compare the MPC approximation obtained using \alkiax\ with the approach presented in~\cite{Hertneck2018Learning}.
Both works approximate the same ground truth MPC (see~\eqref{eq:CSTR}, Figure~\ref{fig:ground_truth}) with the identical error bound~$\error$.
\alkiax\ is a kernel-based algorithm that inherently guarantees the satisfaction of any desired error bound~$\error>0$  (see Theorem~\ref{th:alkiax}, \ref{property:error_satisfaction}, \ref{property:error_samples}).
In contrast, the approach in~\cite{Hertneck2018Learning} uses NNs for the approximation and an additional statistical validation step to guarantee the desired approximation accuracy.
Notably, the approach in~\cite{Hertneck2018Learning} may lead to an \emph{iterative} offline design between NN training and validation, \eg
requiring additional sampling or changes to the NN architecture whenever the statistical validation step fails.

Table~\ref{tab:experiment} compares the number of samples $\mathrm{card}(X)$, the offline approximation time $t_\mathrm{offline}$, and the online evaluation time $t_\mathrm{online}$ of \alkiax\ and the NN approach proposed in~\cite{Hertneck2018Learning}.\footnote{%
Table~\ref{tab:experiment} directly contains the computation times reported in~\cite{Hertneck2018Learning}, which are not obtained with the same hardware as \alkiax.
}
The online evaluation time of \alkiax\ is obtained over three runs of evaluating~$90\cdot 10^3$ equidistant states.
%
%
\begin{comment}
\begin{table}
    \renewcommand{\thetable}{\arabic{table}} % Use Arabic numerals for table counter
    \renewcommand{\tablename}{Tab.}
    \centering
    \begin{tabularx}{\columnwidth}{|>{\centering\arraybackslash}X|>{\centering\arraybackslash}X|>{\centering\arraybackslash}X|}
         \hline
         & \alkiax & NN approach~\cite{Hertneck2018Learning}  \\
        \hline
         \rule{0pt}{2.25ex} % Add extra space above the line
     $\mathrm{card}(X)$ & 
     $3.15\cdot 10^{6}$ & 
     $1.6\cdot 10^6$ \\
     $t_\text{offline}$ & \SI{8.46}{\hour} & \SI{500}{\hour} \\
     $t_\text{online}$ & \SI{30}{\micro\second} & \SI{3000}{\micro\second} \\
         \hline
    \end{tabularx}
    \caption{Comparison between \alkiax\ and the NN approximation in~\cite{Hertneck2018Learning} with the continuous stirred tank reactor.
    }
    \label{tab:experiment}
\end{table}
\end{comment}
%
% put both
\begin{table}
    \centering
    \begin{tabular}{c | c|c}
\hline
         & \alkiax & NN approach~\cite{Hertneck2018Learning}  \\
        \hline
        \rule{0pt}{2.25ex} % Add extra space above the line
     $t_\text{online}$ & $\SI{44.02}{\micro\second} \pm \SI{0.14}{\micro\second}$ & \SI{3000}{\micro\second} \\
     $t_\text{offline}$ & \SI{10.3}{\hour} & \SI{500}{\hour} \\
         $\mathrm{card}(X)$ & 
     $1.56\cdot 10^{6}$ & 
     $1.6\cdot 10^6$ \\
         \hline
    \end{tabular}
    \caption{Comparison between \alkiax\ and the NN approximation in~\cite{Hertneck2018Learning} with the continuous stirred tank reactor.
    }
    \label{tab:experiment}
\end{table}
Both \alkiax\ and the NN approach require roughly~$\mathrm{card}(X)\approx10^6$ samples. 
However, the offline approximation time~$t_\mathrm{offline}$ of~\cite{Hertneck2018Learning} is significantly longer. % than the offline approximation time of \alkiax.
The main reason for this discrepancy is that the NN approach requires further samples for the statistical validation step.
Additionally, \alkiax\ yields an almost 100-times faster\footnote{%
More recently,~\cite{Hose2023Approximate} approximated the same ground truth MPC~$f$ with NNs, achieving an online evaluation time of $t_\mathrm{online}=\SI{500}{\micro\second}$, 
which is still 10-times slower than the proposed solution.
Furthermore, additional parallelization with $10^3$ cores and the usage of GPUs reduced the offline sampling time to $t_\mathrm{offline}=\SI{0.5}{\hour}$, resulting in $t_\mathrm{offline}=\SI{4.5}{\hour}$ overall including the NN training.
However, in contrast to the proposed approximation method,~\cite{Hose2023Approximate} does not guarantee error bound~$\error$ but utilizes a fallback policy to obtain closed-loop guarantees.
}
online evaluation time $t_\mathrm{online}$ compared to the NN approach.
The fast online evaluation of \alkiax~\ref{property:fast} is due to the \emph{localized} kernel interpolation approach~\ref{tool:localized}, whereas standard kernel-based methods tend to  suffer from scalability issues~\cite[Section~2.3]{Rasmussen2006Gaussian}.

%Another reason is parallel
%,
In addition to the presented numerical differences, the closed-loop guarantees on the approximate MPC using \alkiax\ are \emph{deterministic}.
Furthermore, both the sample acquisition and the offline approximation of \alkiax\ are entirely parallelized, thus accelerating the offline approximation.
Besides the favorable offline approximation time, 
the offline approximation with \alkiax\ is an \emph{automatic} and \emph{non-iterative} process~\ref{property:numerical} 
in contrast to the iterative offline design present in~\cite{Hertneck2018Learning}.

%In contrast, \alkiax\ inherently guarantees the desired approximation error~\ref{property:error_satisfaction}, \ref{property:error_samples}, and hence ensures closed-loop guarantees on the approximate MPC without an additional validation step.
%Moreover, the guarantees obtained using \alkiax\ are \emph{deterministic}.

%because of \ref{tool:adaptive} and \ref{tool:RKHS} 

\subsection{Cold atmospheric plasma} \label{sec:plasma}
We consider the MPC scheme to control a cold atmospheric plasma device from~\cite[Chapter~4.5]{Bonzanini2022Thesis}. 
The nonlinear systems dynamics $g{:}\; \mathbb{R}^3 \rightarrow \mathbb{R}^2$ are described by
\begin{align} \label{eq:plasma}
    g(x,u) = \begin{pmatrix}
  0.427 x_1 + 0.68  x_2 + 1.58  u_1 - 1.02  u_2      \\
-0.06  x_1 + 0.26  x_2 + 0.73  u_1 + 0.03  u_2        \\
x_3 + \mathbb{K}^{(\mathbb T-x_1)}\delta
    \end{pmatrix},
\end{align}
where~$x_1$ is the surface temperature,~$x_2$ is the gas temperature, and~$x_3$ is the thermal dose delivered to the target surface.
The inputs~$u_1$ and~$u_2$ correspond to the applied power to the plasma and the flow rate of Argon, respectively, while~$\mathbb K$ and~$\mathbb T$ are constants and~$\delta=\SI{0.5}{\second}$ is the sampling time.
Note that states and inputs are appropriately shifted in~\eqref{eq:plasma}, see~\cite[Chapter~4.5]{Bonzanini2022Thesis}.
The state constraints%
\footnotemark[8]
are~$x_1\in[\SI{25}{\degreeCelsius} , \SI{42.5}{\degreeCelsius}]$,~$x_2\in[\SI{20}{\degreeCelsius} , \SI{80}{\degreeCelsius}]$, and~$x_3\in[\SI{0}{\minute} , \SI{11}{\minute}]$, whereas the input constraints 
are~$u_1\in[\SI{1.5}{\watt}, \SI{8}{\watt}]$
and~$u_2\in[\SI{1}{slm}, \SI{6}{slm}]$.
The control objective is to achieve a treatment time of the thermal dose of~$\overline x_{3}=\SI{10}{\minute}$ and satisfy the state and input constraints.
The nonlinear MPC is formulated with a horizon of~$N=120$.

As mentioned in~\cite[Sections~1.2.2 and 6.2.2]{Bonzanini2022Thesis}, approximate MPC schemes are desired for this application due the need for \emph{(i)}~sub-second sampling rates and \emph{(ii)}~implementability on portable embedded devices.
To this end, we use \alkiax\ to determine an approximate MPC to control system~\eqref{eq:plasma}.
In this application example, we focus on obtaining an implementable controller for relatively cheap hardware with limited memory capacities.
Hence, instead of ensuring a desired error bound~$\error$, we here focus on achieving the best approximation subject to a memory requirement on the samples of \SI{75}{\mega\byte}, which we achieve by upper-bounding the maximum depth~$\overline d$ (see~\eqref{eq:depth}) a priori.
Specifically, for this application, we pre-partitioned the domain into $3^3$ sub-domains and obtained~$\overline d=1$, \ie each sub-domain can at most be partitioned into $2^3$ further sub-domains.
The hyperparameters for this experiment were $\underline p=2$ and $\overline p=4$, yielding an upper bound on the condition number of~$\overline\kappa = 3.32\cdot 10^7$ (see~\eqref{eq:condition}, \eqref{eq:condition_extended}).

\subsubsection*{Result}
\begin{table}
    \centering
    \begin{tabular}{c | c|c}
\hline
         & Approximate MPC (\alkiax) & MPC (IPOPT) \\
        \hline
        \rule{0pt}{2.25ex} % Add extra space above the line
     $t_\text{online}$ & $\SI{96}{\mu \second}\pm\SI{4}{\micro\second}$ & $\SI{507}{\milli \second}\pm\SI{6}{\milli\second}$ \\
     Offline memory & \SI{33}{\mega\byte} & \SI{9}{\mega\byte} \\
     $t_\text{offline}$ & \SI{69}{\hour} & [-]\\
     $\mathrm{card}(X)$ & $8.22\cdot 10^5$ & [-] \\
         \hline
    \end{tabular}
    \caption{Comparison between the approximate MPC computed by \alkiax\ and the MPC with the cold atmospheric plasma device.
    }
    \label{tab:plasma}
\end{table}
Table~\ref{tab:plasma} compares the approximate MPC computed by \alkiax\ (Algorithm~\ref{alg:alkiax_offline}) and the MPC using online optimization (IPOPT) in terms of online evaluation time~$t_\mathrm{online}$, offline memory requirement, offline approximation time~$t_\mathrm{offline}$, and number of samples~$\mathrm{card}(X)$.
The documented online evaluation times are obtained over three closed-loop runs of 30 time steps from the initial state~$x=[\SI{35}{\degreeCelsius}, \SI{58}{\degreeCelsius}, \SI{0}{\min}]^\top$ and the NLP of the MPC is solved using a warm start.
\alkiax\ requires a long time $t_{\mathrm{ofline}}$ for the offline approximation\footnote{%
Note that, due to parallelization, the offline approximation time can directly be reduced to $t_\mathrm{offline}\approx\SI{2.5}{\hour}$ by increasing the number of cores from~8 to~216, provided sufficient RAM.}, 
whereas the MPC only runs the just-in-time compilation prior to deployment. 
The offline memory requirements for the stored samples while executing \alkiax\ and for the compiled IPOPT code are comparable, both in the order of tens of megabytes.
The benefit of the approximate MPC is the significant speedup of the online evaluation time~$t_\mathrm{online}$ by a factor of over~$10^3$.
%
\begin{comment}
\alkiax\ (Algorithm~\ref{alg:alkiax_offline}) terminated after $t_\mathrm{offline}=\SI{66}{\hour}$ and sampled the MPC $\mathrm{card}(X)=8.2\cdot 10^5$ times, and the resulting approximating function requires a memory of~\SI{33}{\mega\byte}.\footnote{%
Note that, due to parallelization, the offline approximation time can directly be reduced to $t_\mathrm{offline}\approx\SI{2}{\hour}$ by increasing the number of cores from~8 to~216, provided sufficient RAM.}
The approximate MPC computed by \alkiax\ has an online evaluation time of $t_\mathrm{online}=\SI{181}{\mu \second}\pm\SI{7.85}{\micro\second}$, whereas solving the NLP with IPOPT leads to an average online evaluation of $t_\mathrm{online}=\SI{430}{\milli \second}\pm\SI{24}{\milli\second}$.
\biggerchange{The offline time of the MPC is~\SI{5}{\second}, which is required to build the solver.}
\end{comment}

\begin{figure}
    \centering
% This file was created with tikzplotlib v0.10.1.
\begin{tikzpicture}

\definecolor{darkgray176}{RGB}{176,176,176}
% \definecolor{magenta}{RGB}{255,0,255}
\pgfplotsset{
every axis legend/.append style={
at={(0.1,0.675)},
anchor=north west,
},
}
\begin{axis}[
tick align=outside,
tick pos=left,
x grid style={darkgray176},
xmin=-1.5, xmax=31.5,
xtick style={color=black},
y grid style={darkgray176},
tick label style={font=\footnotesize},
xticklabels={},
ylabel={$x_1$ \SI{}{[\degreeCelsius]}},
ymin=34.6250011057457, ymax=42.8749767793399,
ytick style={color=black},
width=9cm,
height=4cm,
xmin=0, xmax=30,
]
\addplot [semithick, blue]
table {%
0 35
1 42.4999625324934
2 42.499959678823
3 42.4999564900159
4 42.4999529078333
5 42.4999488645027
6 42.4999442772897
7 42.4999390440703
8 42.4999330375901
9 42.4999260976553
10 42.4999180203432
11 42.4999085428539
12 42.4998973218703
13 42.4999578636321
14 42.4999519692184
15 42.499944745615
16 42.4999357609468
17 42.4999243925561
18 42.4997513446633
19 42.4996979575905
20 42.4996253417503
21 42.4995230248604
22 42.4993723040216
23 42.499137068782
24 42.4987400379286
25 42.497990405123
26 42.4963048278321
27 42.4911236749312
28 42.4571043772555
29 39.8621049920065
30 36.7004986410263
};
\addplot [semithick, magenta]
table {%
0 35
1 42.4935171411996
2 42.4952485547761
3 42.4982259373822
4 42.4985152336006
5 42.4983411022883
6 42.4982007589267
7 42.4981111343009
8 42.4980730257484
9 42.4980863021829
10 42.4981505490593
11 42.4982650327144
12 42.4984347332417
13 42.4986475236695
14 42.4989935991688
15 42.4988203433113
16 42.4985463843662
17 42.4983182317686
18 42.4980293131561
19 42.4978810812188
20 42.4977665320423
21 42.4977312769792
22 42.4975865673357
23 42.4974810658922
24 42.4972473983821
25 42.4974002398475
26 42.4965707489379
27 42.4904798401375
28 42.4591528234963
29 39.8484771725129
30 38.034035632099
};
\addplot [dotted, very thick, black] % Adjust mark size and dash pattern as needed
table {%
0 42.5
1 42.5
2 42.5
3 42.5
4 42.5
5 42.5
6 42.5
7 42.5
8 42.5
9 42.5
10 42.5
11 42.5
12 42.5
13 42.5
14 42.5
15 42.5
16 42.5
17 42.5
18 42.5
19 42.5
20 42.5
21 42.5
22 42.5
23 42.5
24 42.5
25 42.5
26 42.5
27 42.5
28 42.5
29 42.5
30 42.5
};
\legend{
MPC,
Approximate MPC (\alkiax)
}
\end{axis}

\end{tikzpicture}
% This file was created with tikzplotlib v0.10.1.
\begin{tikzpicture}

\definecolor{darkgray176}{RGB}{176,176,176}
% \definecolor{magenta}{RGB}{255,0,255}

\begin{axis}[
tick align=outside,
tick pos=left,
x grid style={darkgray176},
xmin=-1.5, xmax=31.5,
xtick style={color=black},
y grid style={darkgray176},
ymin=40.4836578632328, ymax=58.7864924827032,
ytick style={color=black},
width=9cm,
height=4cm,
xmin=0, xmax=30,
ytick style={color=black},
tick label style={font=\footnotesize},
xticklabels={},
ylabel={$x_2$ \SI{}{[\degreeCelsius]}}
]
\addplot [semithick, blue]
table {%
0 58
1 46.7417278078668
2 44.6958352000065
3 44.6626959240632
4 44.6622863000222
5 44.662280487092
6 44.6622795360705
7 44.6622785147558
8 44.6622773355478
9 44.6622759631863
10 44.6622743532643
11 44.6622724479168
12 44.662270170619
13 44.6622952082721
14 44.6622806478149
15 44.6622789653632
16 44.6622770473292
17 44.662274582844
18 44.6622115235396
19 44.662226508222
20 44.6622100383453
21 44.6621856502217
22 44.6621483979589
23 44.6620875884294
24 44.6619788060248
25 44.6617558601323
26 44.6611686621309
27 44.6576831984824
28 44.5573390002871
29 43.6111387278409
30 43.1500395275924
};
\addplot [semithick, magenta]
table {%
0 58
1 46.7402415055556
2 44.6956829645966
3 44.6631573133605
4 44.6622397691361
5 44.6621069604931
6 44.6620923805377
7 44.6620899528022
8 44.6620884394629
9 44.6621009246051
10 44.6621208816369
11 44.6621480355227
12 44.6621843887848
13 44.6622243740982
14 44.662304310296
15 44.6621714208121
16 44.6621085174057
17 44.6620835414547
18 44.6620252026147
19 44.6620308685176
20 44.6620192550061
21 44.6620288720094
22 44.6619788525183
23 44.6619765241067
24 44.6618983260404
25 44.6619886973262
26 44.6615681003361
27 44.6575828357941
28 44.5596108402049
29 43.6162972291031
30 43.5544196823137
};
\end{axis}

\end{tikzpicture}
% This file was created with tikzplotlib v0.10.1.
\begin{tikzpicture}

% Define a custom color
% \definecolor{magenta}{RGB}{255,0,255}
\definecolor{mydarkgreen}{RGB}{0,125,0}
\begin{axis}[
xmin=-1.5, xmax=31.5,
tick align=outside,
tick pos=left,
x grid style={darkgray176},
xtick style={color=black},
y grid style={darkgray176},
ymin=-0.499986756032121, ymax=10.4997218766746,
ytick style={color=black},
width=9cm,
height=4cm,
xmin=0, xmax=30,
ytick style={color=black},
xticklabels={},
tick label style={font=\footnotesize},
ylabel={$x_3$ \SI{}{[\minute]}}
]
\addplot [semithick, blue]
table {%
0 0
1 0.001953125
2 0.355497333755398
3 0.709040843196216
4 1.06258357119622
5 1.41612542135787
6 1.76966628067632
7 2.12320601587318
8 2.47674446864541
9 2.8302814495076
10 3.18381672972094
11 3.53735003057847
12 3.89088100897908
13 4.24440923769955
14 4.59795230231417
15 4.95149392246219
16 5.30503377242455
17 5.65857142065454
18 6.01210628303018
19 6.36559874226982
20 6.71907812072804
21 7.07253970779215
22 7.42597622801478
23 7.77937582604335
24 8.13271780603089
25 8.48596255939603
26 8.83902381239337
27 9.1916728060439
28 9.54305760224151
29 9.88625353897135
30 9.94305630491534
};
\addplot [semithick, magenta]
table {%
0 0
1 0.001953125
2 0.353921360983386
3 0.706312256219609
4 1.05943115418047
5 1.41262086835822
6 1.76576795559453
7 2.11888069084557
8 2.47197149033552
9 2.82505296311319
10 3.17813768514622
11 3.53123813128932
12 3.88436659848504
13 4.23753660572213
14 4.59075870764125
15 4.9440655510856
16 5.29732996781953
17 5.65052730817498
18 6.00366879714337
19 6.35673957197572
20 6.70977407186786
21 7.06278054213165
22 7.41577838610021
23 7.76874082436702
24 8.12167745213047
25 8.47455692079766
26 8.8274737760716
27 9.18018777684478
28 9.53141579473655
29 9.87509937281948
30 9.93136810128007
};
\addplot [semithick, dashed, mydarkgreen]
table {%
0 10
1 10
2 10
3 10
4 10
5 10
6 10
7 10
8 10
9 10
10 10
11 10
12 10
13 10
14 10
15 10
16 10
17 10
18 10
19 10
20 10
21 10
22 10
23 10
24 10
25 10
26 10
27 10
28 10
29 10
30 10
};
\end{axis}

\end{tikzpicture}
% This file was created with tikzplotlib v0.10.1.
\begin{tikzpicture}

\definecolor{darkgray176}{RGB}{176,176,176}
% \definecolor{magenta}{RGB}{255,0,255}

\begin{axis}[
xmin=-1.45, xmax=30.45,
tick align=outside,
tick pos=left,
x grid style={darkgray176},
xtick style={color=black},
y grid style={darkgray176},
ymin=1, ymax=5.5,
xmin=0, xmax=30,
ytick style={color=black},
ylabel={$u_1$ \SI{}{[W]}},
xticklabels={},
width=9cm,
height=4cm
]
\addplot [semithick, blue, const plot mark right]
table {%
0 1.93252681818045
1 3.81769117616636
2 4.51277278347709
3 4.52419512974747
4 4.52433498333745
5 4.52433543284338
6 4.52433398074189
7 4.52433228170965
8 4.52433030804817
9 4.52432799772715
10 4.52432526976918
11 4.524322017355
12 4.5243566887129
13 4.52433239966889
14 4.5243348481402
15 4.52433220105223
16 4.52432873679592
17 4.52424116031345
18 4.52427055724986
19 4.52423791585279
20 4.52420411443129
21 4.52415294487833
22 4.52406990376238
23 4.52392225955871
24 4.52362160142839
25 4.52283590939858
26 4.51825911291421
27 4.3886782837444
28 3.10588877072037
29 2.57425802943837
};
\addplot [semithick, magenta, const plot mark right]
table {%
0 1.93036849735187
1 3.81745712071652
2 4.51306725103855
3 4.52380817346947
4 4.52397916042848
5 4.52399227344775
6 4.52398239758785
7 4.52397438084766
8 4.5239891703327
9 4.52401361863667
10 4.52404967184766
11 4.52410015177072
12 4.52415701248591
13 4.52427170818802
14 4.52408641089236
15 4.52403231084067
16 4.52399737493183
17 4.52390630389662
18 4.52391129428938
19 4.52388085930057
20 4.52388906816477
21 4.52381353636872
22 4.52381569881174
23 4.52370050382594
24 4.52383654054592
25 4.52324023296867
26 4.51797774830367
27 4.39167841419416
28 3.11132255065645
29 3.14286697837784
};
\end{axis}

\end{tikzpicture}
% This file was created with tikzplotlib v0.10.1.
\begin{tikzpicture}

\definecolor{darkgray176}{RGB}{176,176,176}
% \definecolor{magenta}{RGB}{255,0,255}

\begin{axis}[
xmin=-1.45, xmax=30.45,
tick align=outside,
tick pos=left,
x grid style={darkgray176},
xtick style={color=black},
y grid style={darkgray176},
ymin=2.5, ymax=6.5,
ytick style={color=black},
ylabel={$u_2$ \SI{}{[slm]}},
xlabel={Time steps},
width=9cm,
xmin=0, xmax=30,
height=4cm
]
\addplot [semithick, blue, const plot mark right]
table {%
0 5.36610768650165
1 3.86897211031035
2 3.58174069508624
3 3.57734348170326
4 3.57728952399565
5 3.57728917736912
6 3.57728953577601
7 3.57728995691632
8 3.57729044413833
9 3.57729101181141
10 3.5772916785738
11 3.57729246875759
12 3.57728068209876
13 3.57729045745974
14 3.57728919807097
15 3.57728981011866
16 3.57729061114019
17 3.57731828415633
18 3.57730286630374
19 3.57731150301572
20 3.57731857406954
21 3.57732868772378
22 3.57734378182717
23 3.5773669227398
24 3.57740012662814
25 3.57737829809737
26 3.57428278569778
27 3.40245473480378
28 3.87860159252368
29 4.45537987947142
};
\addplot [semithick, magenta, const plot mark right]
table {%
0 5.36908341629051
1 3.86958345003693
2 3.58185208646114
3 3.57774857575181
4 3.57769158069951
5 3.57768924433058
6 3.5776943052968
7 3.57768072578706
8 3.57767391822738
9 3.57766259219114
10 3.57764595972827
11 3.57762302397212
12 3.57759659646412
13 3.57754924911393
14 3.57762787112824
15 3.57765272265204
16 3.57766754303267
17 3.57769912993396
18 3.57769432653131
19 3.57770222583967
20 3.57769459566967
21 3.57771136218093
22 3.57772521198409
23 3.57773086444131
24 3.57764339533623
25 3.57765611183372
26 3.57485394604386
27 3.40476169816492
28 3.90273726934514
29 4.02660462939559
};
\end{axis}

\end{tikzpicture}
\caption{Closed-loop state and input trajectories of the MPC (blue lines) and the approximate MPC (magenta lines) computed by \alkiax.
The dotted black line shows the constraint on~$x_1$, whereas the dashed green line depicts the desired treatment time of~$\overline x_3=\SI{10}{\min}$.
}
\label{fig:plasma}
\end{figure}
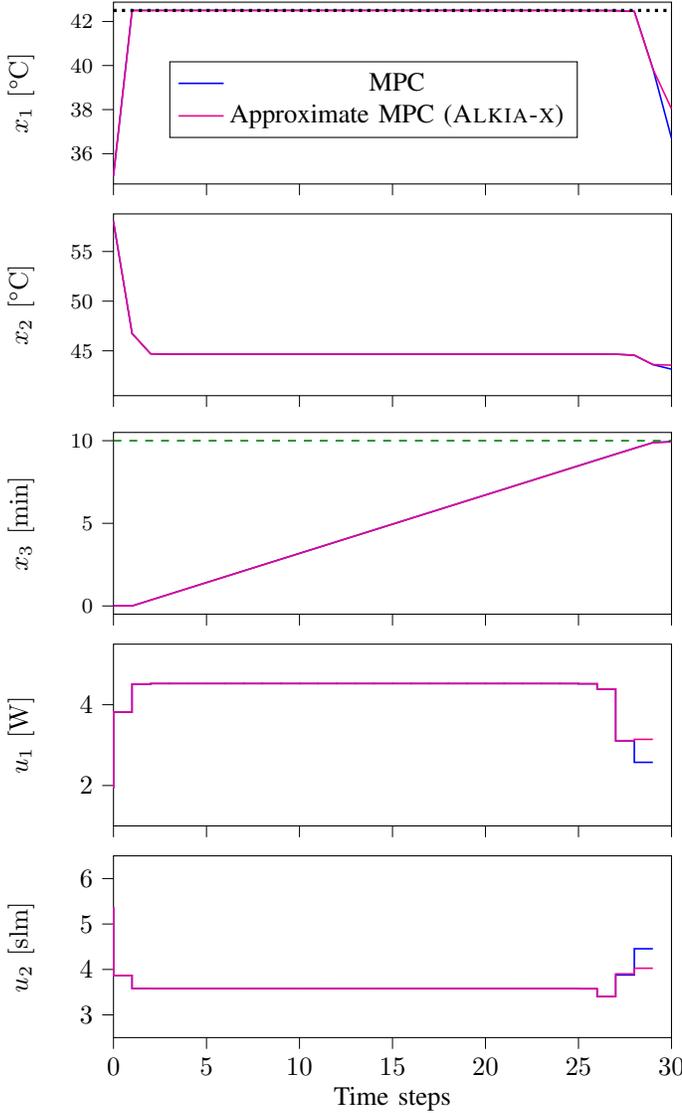

Figure~\ref{fig:plasma} illustrates the closed-loop trajectories of the MPC, where the NLP is solved using IPOPT, and of the approximate MPC computed by \alkiax. 
As in~\cite[Chapter~4.5]{Bonzanini2022Thesis}, closed-loop simulations are terminated as soon as $x_3\geq \overline{x}_3$.
The trajectories are very similar, with~$x_3$ showing no noticeable visual difference and all constraints are satisfied.
Particularly, the approximation error between the approximate MPC and the MPC on the state trajectory of the approximate MPC is~$\max_t\|f(x_t)-h(x_t)\|_\infty = 2.98\cdot 10^{-3}$.
% $2.567\cdot 10^{-3}$ and~$1.754 \cdot 10^{-3}$ for~$u_1$ and~$u_2$, respectively.

 Overall, \alkiax\ \emph{(i)} automatically computes an approximate MPC~\ref{property:numerical} that closely resembles the MPC, \emph{(ii)} reduces the online evaluation time by a factor of over $10^3$~\ref{property:fast}, and \emph{(iii)} only requires~\SI{33}{\mega\byte} of memory, thus being implementable on most microcontrollers.

% I do not think that we have to document the online evaluation time of Bonzanini-
% !TEX root =  ../main.tex 
\section{Conclusion} \label{sec:conclusion}
We addressed the problem of automatically approximating nonlinear MPC schemes while ensuring that the resulting approximate MPC inherits desirable closed-loop properties on stability and constraint satisfaction.
By considering a robust MPC scheme, we reduced this problem to a function approximation problem, which we tackled by proposing \alkiax.
\alkiax\ is an automatic and reliable algorithm that yields a computationally efficient approximate MPC with closed-loop guarantees.
We successfully applied \alkiax\ to approximate two nonlinear MPC schemes, \emph{(i)}~demonstrating reduced offline computation and online evaluation time by over one order of magnitude and \emph{(ii)}~showcasing applicability to realistic problems with limited memory capacities.

Although we proposed \alkiax\ to approximate MPC schemes, the algorithm is equally capable of automatically approximating a wide range of black-box functions with guaranteed bounds on the approximation error.

% !TEX root =  ../main.tex 
\section*{Acknowledgements}
The authors thank D. Baumann for helpful comments and J. Sieber for setting up the server.
% !TEX root =  ../main.tex 
\appendices

% \section{Upper bounding the power function} \label{app:power_function}
% To look deeper into Assumption~\ref{asm:max_power}, we investigate it a bit more formally in Proposition~\ref{pr:eigenvector}.
% \begin{proposition}\label{pr:eigenvector}
% Let Assumption~\ref{asm:kernel} hold. 
% Let $X_c$ be a sample set according to \eqref{eq:cube}.
% Then,

\section{Proof of Lemma~\ref{le:cont}} 
\label{app:cont}
Analogous to~\cite[Proof of Lemma~4.23]{Steinwart2008SVM}, for all $x, x^\prime \in \domain$, it holds that
\begin{align*}
&\lvert f(x) - f(x^\prime) \rvert = \lvert \langle f, k(x, \cdot) - k(x^\prime, \cdot) \rangle_k \rvert \\
&\leq \|f\|_k \sqrt{k(x,x) - k(x^\prime, x) - k(x, x^\prime) + k(x^\prime, x^\prime)} \\
\overset{\mathrm{Asm.\ref{asm:kernel}}}&{=} \|f\|_k \sqrt{2\left(1-\widetilde k(\|x-x^\prime\|_2)\right)} 
\overset{\eqref{eq:knorm}}{=}\|f\|_k \sqrt{2\knorm(\|x-x^\prime\|_2)},
\end{align*}
where $\langle f, g \rangle_k$ denotes the inner product between two functions~$f,g$ in the RKHS of kernel~$k$.
The first equality follows from the reproducing property~\cite[Definition~4.18]{Steinwart2008SVM},
%\cite{Aronszajn1950Kernels}, 
while the first inequality follows from Cauchy-Schwarz.\hfill\qedsymbol

\section{Proof of Lemma~\ref{le:constants}} \label{sec:appendix_lemma}
In Part~I, we show that, for any $a\in\mathcal A$,~\eqref{eq:extrapolation} interpolates~$\hat \gamma_{a, p}$ for $ p \in \{\underline p, \underline p+1\}$,
\ie~\eqref{eq:interpolating} holds.
In Part~II, we prove that $\rkhsx \in (0,\infty)$ holds for any $a \in \mathcal A$.\\
\textit{Part~I:}
Consider any $a \in \mathcal{A}$.
First, we show $\gamma_{a}(\underline p) = \hat{\gamma}_{a,\underline p}$.
It holds that
\begin{align*}
    &\gamma_{a}(\underline p) \overset{\eqref{eq:extrapolation}}{=}\rkhso \exp\left(\frac{-\tau_{a}}{1+2^{\underline p}}\right) \\
    \overset{\eqref{eq:gamma_bar}}&{=}
     \hat{\gamma}_{a, \underline p} 
     \left( \frac{\hat{\gamma}_{a, \underline p+1}}{\hat{\gamma}_{a, \underline p}}
        \right)
    ^{
    \lambda
    }
    \exp\left(\frac{
    -
    \ln\left( \frac{\hat{\gamma}_{a, \underline p+1}}{\hat{\gamma}_{a, \underline p}}
        \right) \lambda \left( 1+ 2^{\underline p}
        \right)
    }
    {1+2^{\underline p}}\right) \\
    &=
    \hat{\gamma}_{a, \underline p} \left(\frac{\hat{\gamma}_{a, \underline p+1}}{\hat{\gamma}_{a, \underline p}}\right)^\lambda \exp\left(-\ln\left(\frac{\hat{\gamma}_{a, \underline p+1}}{\hat{\gamma}_{a, \underline p}}\right)\lambda\right) \\
    &=
    \hat{\gamma}_{a, \underline p} \left(\frac{\hat{\gamma}_{a, \underline p+1}}{\hat{\gamma}_{a, \underline p}}\right)^\lambda \left(\frac{\hat{\gamma}_{a, \underline p}}{\hat{\gamma}_{a, \underline p+1}}\right)^\lambda = \hat{\gamma}_{a, \underline p}.
\end{align*}
Second, we show $\gamma_{a}(\underline p +1) = \hat{\gamma}_{a,\underline p+1}$.
It holds that
\begin{align*}
    &\gamma_{a}(\underline p+1) \overset{\eqref{eq:extrapolation}}{=} \rkhso \exp\left(\frac{-\tau_{a}}{1+2^{\underline p+1}}\right) \\
    \overset{\eqref{eq:gamma_bar}}&{=}
     \hat{\gamma}_{a, \underline p} 
     \left( \frac{\hat{\gamma}_{a, \underline p+1}}{\hat{\gamma}_{a, \underline p}}
        \right)
    ^{
    \lambda
    }
    \exp\left(\frac{
    -
    \ln\left( \frac{\hat{\gamma}_{a, \underline p+1}}{\hat{\gamma}_{a, \underline p}}
        \right) \lambda \left( 1+ 2^{\underline p}
        \right)
    }
    {1+2^{\underline p+1}}\right) \\
    \overset{\eqref{eq:gamma_hat}}&{=}
     \hat{\gamma}_{a, \underline p} 
     \left( \frac{\hat{\gamma}_{a, \underline p+1}}{\hat{\gamma}_{a, \underline p}}
        \right)
    ^{
    \lambda
    }
    \exp\left(\frac{
    -
    \ln\left( \frac{\hat{\gamma}_{a, \underline p+1}}{\hat{\gamma}_{a, \underline p}}
        \right) \lambda \left( 1+ 2^{\underline p}
        \right)
    }
    {2^{\underline p} \lambda}\right)\\    
        &=
     \hat{\gamma}_{a, \underline p} 
     \left( \frac{\hat{\gamma}_{a, \underline p+1}}{\hat{\gamma}_{a, \underline p}}
        \right)
    ^{
    \lambda
    }
    \exp\left(
    -
    \ln\left( \frac{\hat{\gamma}_{a, \underline p+1}}{\hat{\gamma}_{a, \underline p}}
        \right)  \left( 1+ 2^{-\underline p}
        \right)
    \right)\\    
    \overset{\eqref{eq:gamma_hat}}&{=}
         \hat{\gamma}_{a, \underline p} 
     \left( \frac{\hat{\gamma}_{a, \underline p+1}}{\hat{\gamma}_{a, \underline p}}
        \right)
    ^{
    \lambda
    }
    \exp\left(
    -
    \ln\left( \frac{\hat{\gamma}_{a, \underline p+1}}{\hat{\gamma}_{a, \underline p}}
        \right) \left(\lambda-1\right)\right) \\
    &=  \hat{\gamma}_{a, \underline p}  
    \left( \frac{\hat{\gamma}_{a, \underline p+1}}{\hat{\gamma}_{a, \underline p}}
        \right)
    ^{
    \lambda
    }
    \left( \frac{\hat{\gamma}_{a, \underline p}}{\hat{\gamma}_{a, \underline p+1}}
        \right)
    ^{
    \lambda -1
    } \\
        &=
    \hat{\gamma}_{a, \underline p}  
    \left( \frac{\hat{\gamma}_{a, \underline p+1}}{\hat{\gamma}_{a, \underline p}}
        \right)
    = \hat{\gamma}_{a, \underline p+1},
\end{align*}
\ie \eqref{eq:interpolating} holds.\\
\textit{Part~II:} 
From~\eqref{eq:extrapolation}, we have that $\rkhsx \in (0,\infty)$ if $\hat\gamma_{a, p} > 0, \; p \in \{\underline p, \underline p+1\}$ and $\hat\gamma_{a, \underline p+1} < \infty$.
The fact that $\hat\gamma_{a, p} > 0, \; p \in \{\underline p, \underline p+1\}$ directly follows from~\eqref{eq:gamma_hat} with $\error>0$.
Moreover, from~\eqref{eq:extrapolation}, it follows that $\hat\gamma_{a, \underline p+1} < \infty$ holds if $\|\widetilde h_{X_{a,\underline p+1}}\|_\kla < \infty$ with $\error < \infty$.
Since the covariance matrix is always strictly positive definite (see  Section~\ref{sec:kernel_interpolation}) and we have a continuous ground truth $f$ on a bounded set $\domain$ (see Lemma~\ref{le:cont}), $\|\widetilde h_{X_{a,\underline p+1}}\|_\kla < \infty$ follows from~\eqref{eq:RKHS_h_a}.
% Thus, for any $a \in \mathcal A$, we have that $\rkhsx \in (0,\infty)$.
\hfill \qedsymbol

\section{Proof of Lemma~\ref{le:SE}} \label{sec:appendix_lemma_SE_limit}
% From~\eqref{eq:alkiax_ell_termination} and~\eqref{eq:k_SE}, it follows that
%    \begin{align*}
%     \underline \ell > \nicefrac{\sqrt{\ln\left(\frac{1}{1-\frac{\error^2}{2 \overline \kappa \left( 1+2^{\underline p+1}\right)^n
% 2^{2\lambda +2} \|f\|_k^2
% }}\right)}}{2\sqrt n}
% \end{align*}
% and, thus, we have that
% \begin{align}\label{eq:underline_ell_order}
%   \underline \ell = \mathcal{O}\left(
%   \nicefrac{\sqrt{\ln\left(\frac{1}{1-\left(\frac{\error}{\|f\|_k}\right)^2}\right)}}{\sqrt n}
%   \right).
% \end{align}
% In order to investigate~\eqref{eq:underline_ell_order} for $\nicefrac{\error}{\|f\|_k} \rightarrow 0$, we first show that
% \begin{align} \label{eq:ln_limit}
%     \lim_{x\rightarrow 0} \ln\left(\frac{1}{1-x^2}\right) =  \lim_{x\rightarrow 0} x^2.
% \end{align}
% With L'Hôpital's rule, it follows that 
% \begin{align}
%         \lim_{x\rightarrow 0} \frac{x^2}{\ln\left(\frac{1}{1-x^2}\right)} = \lim_{x\rightarrow 0} \frac{2x}{2x}(1-x^2) = (1-x^2) = 1,
% \end{align}
% which implies that~\eqref{eq:ln_limit} is satisfied.
% Thus, when considering $\nicefrac{\error}{\|f\|_k}\rightarrow 0$, we can replace $\ln\left(\frac{1}{1-(\nicefrac{\error}{\|f\|_k})^2}\right)$ in~\eqref{eq:underline_ell_order} by $\left(\nicefrac{\error}{\|f\|_k}\right)^2$, which yields
%     $\lim_{\nicefrac{\error}{\|f\|_k}\rightarrow 0} \underline \ell = 
%     \mathcal{O} \left(
%     \nicefrac{\sqrt{\left(\frac{\error}{\|f\|_k}\right)^2}}{\sqrt n}    
%     \right)$,
% from which~\eqref{eq:ell_limit} directly follows. \hfill \qedsymbol

% \vspace{2cm}
%
For the SE-kernel $\widetilde k_\mathrm{SE}{:}\; \mathbb{R}_{\geq 0} \rightarrow [0,1)$ with $\widetilde k_\mathrm{SE}(x) = \exp\left(-x^2\right)$ \cite{Rasmussen2006Gaussian}, we have  $\knorm_\mathrm{SE}(x) = 1-\exp\left(-x^2\right)$ (see~\eqref{eq:knorm}) and $\kinv_\mathrm{SE}{:}\; [0,1) \rightarrow  \mathbb{R}_{\geq 0} $ with
\begin{align} \label{eq:k_SE}
    \kinv_\mathrm{SE}(x) = \sqrt{\ln\left(\frac{1}{1-x}\right)}.
\end{align}
With L'Hôpital's rule, it follows that 
\begin{align}\label{eq:hospital}
     \frac{\ln\left(
    \frac{1}{1-x}
    \right)}{x} 
    = \frac{1}{1-x} = 1,
    \quad \mathrm{as}\; x \rightarrow 1.
\end{align}
Hence,~\eqref{eq:k_SE} and~\eqref{eq:hospital} yield
\begin{align}
 \kinv_\mathrm{SE}(x) = \sqrt{x},
    \quad \mathrm{as}\; x \rightarrow 1.
\end{align}
Thus, with~\eqref{eq:ell_a_min}, it holds for any $a \in \domain_a$ that
\begin{align*}
    %\lim_{\frac{\error}{\|f\|_k}\rightarrow 0} \left( 
    %
    \ell_a %-
    %\right) 
    \leq 
    \mathcal{O}
    \left(
    \frac{\sqrt{C} \error}{\sqrt{n}\|f\|_k}
    \right)
    \quad
\mathrm{as}\;\frac{\error}{\|f\|_k}\rightarrow 0,
\end{align*}
which yields~\eqref{eq:ell_limit}.
% \begin{align*}
%     &\lim_{\frac{\error}{\|f\|_k}\rightarrow 0} \left(\min_{a\in\mathcal{A}} \ell_a -
%         \mathcal{O}
%     \left(
%     \frac{\error}{\sqrt{n}\|f\|_k}
%     \right)
%     \right) \leq 0.
% \end{align*}
%which concludes the proof.
\hfill \qedsymbol

%\printbibliography  % only use for my citations

\bibliography{references_TAC}

% Generated by IEEEtran.bst, version: 1.14 (2015/08/26)
\begin{thebibliography}{10}
\providecommand{\url}[1]{#1}
\csname url@samestyle\endcsname
\providecommand{\newblock}{\relax}
\providecommand{\bibinfo}[2]{#2}
\providecommand{\BIBentrySTDinterwordspacing}{\spaceskip=0pt\relax}
\providecommand{\BIBentryALTinterwordstretchfactor}{4}
\providecommand{\BIBentryALTinterwordspacing}{\spaceskip=\fontdimen2\font plus
\BIBentryALTinterwordstretchfactor\fontdimen3\font minus
  \fontdimen4\font\relax}
\providecommand{\BIBforeignlanguage}[2]{{%
\expandafter\ifx\csname l@#1\endcsname\relax
\typeout{** WARNING: IEEEtran.bst: No hyphenation pattern has been}%
\typeout{** loaded for the language `#1'. Using the pattern for}%
\typeout{** the default language instead.}%
\else
\language=\csname l@#1\endcsname
\fi
#2}}
\providecommand{\BIBdecl}{\relax}
\BIBdecl

\bibitem{Rawlings2017Model}
J.~B. Rawlings, D.~Q. Mayne, and M.~Diehl, \emph{Model Predictive Control:
  Theory, Computation, and Design}.\hskip 1em plus 0.5em minus 0.4em\relax Nob
  Hill Publishing, 2017.

\bibitem{Chen2018Approx}
S.~Chen, K.~Saulnier, N.~Atanasov, D.~D. Lee, V.~Kumar, G.~J. Pappas, and
  M.~Morari, ``Approximating explicit model predictive control using
  constrained neural networks,'' in \emph{Proc. Annual American Control
  Conference (ACC)}, 2018, pp. 1520--1527.

\bibitem{Karg2020Efficient}
B.~Karg and S.~Lucia, ``Efficient representation and approximation of model
  predictive control laws via deep learning,'' \emph{IEEE Transactions on
  Cybernetics}, vol.~50, no.~9, pp. 3866--3878, 2020.

\bibitem{Chen2019Large}
S.~W. Chen, T.~Wang, N.~Atanasov, V.~Kumar, and M.~Morari, ``Large scale model
  predictive control with neural networks and primal active sets,''
  \emph{Automatica}, vol. 135, p. 109947, 2022.

\bibitem{Hose2023Approximate}
H.~Hose, J.~Köhler, M.~N. Zeilinger, and S.~Trimpe, ``Approximate non-linear
  model predictive control with safety-augmented neural networks,'' \emph{arXiv
  preprint arXiv:2304.09575}, 2023.

\bibitem{Zhang2021Policy}
X.~Zhang, M.~Bujarbaruah, and F.~Borrelli, ``Near-optimal rapid {MPC} using
  neural networks: A primal-dual policy learning framework,'' \emph{IEEE
  Transactions on Control Systems Technology}, no.~5, pp. 2102--2114, 2021.

\bibitem{Karg2021Probabilistic}
B.~Karg, T.~Alamo, and S.~Lucia, ``Probabilistic performance validation of deep
  learning-based robust {NMPC} controllers,'' \emph{International Journal of
  Robust and Nonlinear Control}, vol.~31, no.~18, pp. 8855--8876, 2021.

\bibitem{Hertneck2018Learning}
M.~Hertneck, J.~K{\"o}hler, S.~Trimpe, and F.~Allg{\"o}wer, ``Learning an
  approximate model predictive controller with guarantees,'' \emph{IEEE Control
  Systems Letters}, vol.~2, no.~3, pp. 543--548, 2018.

\bibitem{Nubert2020Robot}
J.~Nubert, J.~K\"{o}hler, V.~Berenz, F.~Allg{\"o}wer, and S.~Trimpe, ``Safe and
  fast tracking on a robot manipulator: Robust {MPC} and neural network
  control,'' \emph{IEEE Robotics and Automation Letters}, vol.~5, no.~2, pp.
  3050--3057, 2020.

\bibitem{Pin2013Approximate}
G.~Pin, M.~Filippo, F.~Pellegrino, G.~Fenu, and T.~Parisini, ``Approximate
  model predictive control laws for constrained nonlinear discrete-time
  systems: analysis and offline design,'' \emph{International Journal of
  Control}, vol.~86, no.~5, pp. 804--820, 2013.

\bibitem{parisini1998nonlinear}
T.~Parisini, M.~Sanguineti, and R.~Zoppoli, ``Nonlinear stabilization by
  receding-horizon neural regulators,'' \emph{International Journal of
  Control}, vol.~70, no.~3, pp. 341--362, 1998.

\bibitem{gonzalez2023neural}
C.~Gonzalez, H.~Asadi, L.~Kooijman, and C.~P. Lim, ``Neural networks for fast
  optimisation in model predictive control: A review,'' \emph{arXiv preprint
  arXiv:2309.02668}, 2023.

\bibitem{Canale2009Lipschitz}
M.~Canale, L.~Fagiano, and M.~Milanese, \emph{Fast nonlinear model predictive
  control via set membership approximation: an overview}.\hskip 1em plus 0.5em
  minus 0.4em\relax Springer Berlin Heidelberg, 2009, vol.~12, pp. 461--470.

\bibitem{Canale2010Lipschitz}
------, ``Efficient model predictive control for nonlinear systems via function
  approximation techniques,'' \emph{IEEE Transactions on Automatic Control},
  vol.~55, no.~8, pp. 1911--1916, 2010.

\bibitem{Canale2014Nonlinear}
M.~Canale, L.~Fagiano, and M.~Signorile, ``Nonlinear model predictive control
  from data: a set membership approach,'' \emph{International Journal of Robust
  and Nonlinear Control}, vol.~24, no.~1, pp. 123--139, 2014.

\bibitem{boggio2022set}
M.~Boggio, C.~Novara, and M.~Taragna, ``Set membership based nonlinear model
  predictive control,'' \emph{arXiv preprint arXiv:2212.12414}, 2022.

\bibitem{Ganguly2023Interpolation}
S.~Ganguly and D.~Chatterjee, ``Explicit feedback synthesis driven by
  quasi-interpolation for nonlinear robust model predictive control,''
  \emph{arXiv preprint arXiv:2306.03027v1}, 2023.

\bibitem{Binder2019GPs}
M.~Binder, G.~Darivianakis, A.~Eichler, and J.~Lygeros, ``Approximate explicit
  model predictive controller using {G}aussian processes,'' in \emph{Proc. 58th
  Conference on Decision and Control (CDC)}.\hskip 1em plus 0.5em minus
  0.4em\relax IEEE, 2019, pp. 841--846.

\bibitem{Rose2023Learning}
A.~Rose, M.~Pfefferkorn, H.~H. Nguyen, and R.~Findeisen, ``Learning a
  {G}aussian process approximation of a model predictive controller with
  guarantees,'' in \emph{Proc. 62nd Conference on Decision and Control
  (CDC)}.\hskip 1em plus 0.5em minus 0.4em\relax IEEE, 2023, pp. 4094--4099.

\bibitem{Houska2019Robust}
B.~Houska and M.~E. Villanueva, \emph{Robust Optimization for {MPC}}.\hskip 1em
  plus 0.5em minus 0.4em\relax Birkhäuser, 2019, pp. 413--443.

\bibitem{Kohler2021Uncertain}
J.~Köhler, R.~Soloperto, M.~A. Müller, and F.~Allgöwer, ``A computationally
  efficient robust model predictive control framework for uncertain nonlinear
  systems,'' \emph{IEEE Transactions on Automatic Control}, vol.~66, no.~2, pp.
  794--801, 2021.

\bibitem{Sasfi2023Robust}
A.~Sasfi, M.~N. Zeilinger, and J.~Köhler, ``Robust adaptive {MPC} using
  control contraction metrics,'' \emph{Automatica}, vol. 155, p. 111169, 2023.

\bibitem{Rakovic2023Robust}
S.~V. Raković, L.~Dai, and Y.~Xia, ``Homothetic tube model predictive control
  for nonlinear systems,'' \emph{IEEE Transactions on Automatic Control},
  vol.~68, no.~8, pp. 4554--4569, 2023.

\bibitem{Bemporad2002Explicit}
A.~Bemporad, M.~Morari, V.~Dua, and E.~N. Pistikopoulos, ``The explicit linear
  quadratic regulator for constrained systems,'' \emph{Automatica}, vol.~38,
  pp. 3--20, 2002.

\bibitem{Alessio2009Explicit}
A.~Alessio and A.~Bemporad, \emph{A Survey on Explicit Model Predictive
  Control}.\hskip 1em plus 0.5em minus 0.4em\relax Springer Berlin Heidelberg,
  2009, pp. 345--369.

\bibitem{Johansen2003Tree}
T.~Johansen and A.~Grancharova, ``Approximate explicit constrained linear model
  predictive control via orthogonal search tree,'' \emph{IEEE Transactions on
  Automatic Control}, vol.~48, no.~5, pp. 810--815, 2003.

\bibitem{Summers2011Multiresolution}
S.~Summers, C.~N. Jones, J.~Lygeros, and M.~Morari, ``A multiresolution
  approximation method for fast explicit model predictive control,'' \emph{IEEE
  Transactions on Automatic Control}, vol.~56, no.~11, pp. 2530--2541, 2011.

\bibitem{Grancharova2012Explicit}
A.~Grancharova and T.~A. Johansen, \emph{Explicit Nonlinear Model Predictive
  Control: Theory and Applictions}.\hskip 1em plus 0.5em minus 0.4em\relax
  Springer, 2012.

\bibitem{Scharnhorst2022Robust}
P.~Scharnhorst, E.~T. Maddalena, Y.~Jiang, and C.~N. Jones, ``Robust
  uncertainty bounds in reproducing kernel {H}ilbert spaces: {A} convex
  optimization approach,'' \emph{IEEE Transactions on Automatic Control},
  vol.~68, no.~5, pp. 2848--2861, 2023.

\bibitem{Maddalena2021Deterministic}
E.~T. Maddalena, P.~Scharnhorst, and C.~N. Jones, ``Deterministic error bounds
  for kernel-based learning techniques under bounded noise,''
  \emph{Automatica}, vol. 134, p. 109896, 2021.

\bibitem{Steinwart2008SVM}
I.~Steinwart and A.~Christmann, \emph{Support Vector Machines}.\hskip 1em plus
  0.5em minus 0.4em\relax Springer, 2008.

\bibitem{Fasshauer2015Kernel}
G.~M. Fasshauer and M.~J. McCourt, \emph{Kernel-based Approximation Methods
  using MATLAB}.\hskip 1em plus 0.5em minus 0.4em\relax World Scientific, 2015,
  vol.~19.

\bibitem{Kanagawa2018Gaussian}
M.~Kanagawa, P.~Hennig, D.~Sejdinovic, and B.~K. Sriperumbudur, ``Gaussian
  processes and kernel methods: A review on connections and equivalences,''
  \emph{arXiv preprint arXiv:1807.02582}, 2018.

\bibitem{Rasmussen2006Gaussian}
C.~E. Rasmussen and C.~K. Williams, \emph{Gaussian Processes for Machine
  Learning}.\hskip 1em plus 0.5em minus 0.4em\relax MIT Press, 2006.

\bibitem{Liu2020Scalable}
H.~Liu, Y.-S. Ong, X.~Shen, and J.~Cai, ``When {Gaussian} process meets big
  data: A review of scalable {GP}s,'' \emph{IEEE Transactions on Neural
  Networks and Learning Systems}, vol.~31, no.~11, pp. 4405--4423, 2020.

\bibitem{Lederer2020Realtime}
A.~Lederer, A.~J.~O. Conejo, K.~Maier, W.~Xiao, J.~Umlauft, and S.~Hirche,
  ``Real-time regression with dividing local {Gaussian} processes,''
  \emph{arXiv preprint arXiv:2006.09446}, 2020.

\bibitem{Nguyen2008Local}
D.~Nguyen-Tuong, J.~Peters, and M.~Seeger, ``Local {G}aussian process
  regression for real time online model learning and control,'' in \emph{Proc.
  21st Advances in Neural Information Processing Systems (NeurIPS)}, vol.~21,
  2008.

\bibitem{Berkenkamp2021Bayesian}
F.~Berkenkamp, A.~Krause, and A.~P. Schoellig, ``Bayesian optimization with
  safety constraints: safe and automatic parameter tuning in robotics,''
  \emph{Machine Learning}, vol. 112, no.~10, pp. 3713--3747, 2023.

\bibitem{Srinivas2010Gaussian}
N.~Srinivas, A.~Krause, S.~M. Kakade, and M.~Seeger, ``Information-theoretic
  regret bounds for {Gaussian} process optimization in the bandit setting,''
  \emph{IEEE Transactions on Information Theory}, vol.~58, no.~5, pp.
  3250--3265, 2012.

\bibitem{Chowdhury2017Kernelized}
S.~R. Chowdhury and A.~Gopalan, ``On kernelized multi-armed bandits,'' in
  \emph{Proc. 34th International Conference on Machine Learning (ICML)}.\hskip
  1em plus 0.5em minus 0.4em\relax PMLR, 2017, pp. 844--853.

\bibitem{Fiedler2021Practical}
C.~Fiedler, C.~W. Scherer, and S.~Trimpe, ``Practical and rigorous uncertainty
  bounds for {G}aussian process regression,'' in \emph{Proc. 35th {AAAI}
  Conference on Artificial Intelligence}, no.~8, 2021, pp. 7439--7447.

\bibitem{Sui2015Safe}
Y.~Sui, A.~Gotovos, J.~Burdick, and A.~Krause, ``Safe exploration for
  optimization with {Gaussian} processes,'' in \emph{Proc. 32nd International
  Conference on Machine Learning}, vol.~37.\hskip 1em plus 0.5em minus
  0.4em\relax JMLR, 2015, pp. 997--1005.

\bibitem{Hashimoto2020Nonlinear}
K.~Hashimoto, A.~Saoud, M.~Kishida, T.~Ushio, and D.~V. Dimarogonas,
  ``Learning-based symbolic abstractions for nonlinear control systems,''
  \emph{Automatica}, vol. 146, p. 110646, 2022, extended version on
  arXiv:1612.05327v3.

\bibitem{Calliess2014PhD}
J.-P. Calliess, ``Conservative decision-making and inference in uncertain
  dynamical systems,'' Ph.D. dissertation, University of Oxford, 2014.

\bibitem{milanese2004set}
M.~Milanese and C.~Novara, ``Set membership identification of nonlinear
  systems,'' \emph{Automatica}, vol.~40, no.~6, pp. 957--975, 2004.

\bibitem{Bonzanini2022Thesis}
A.~D. Bonzanini, ``Safe and fast learning-based model predictive control of
  nonlinear systems with applications to cold atmospheric plasma,'' Ph.D.
  dissertation, University of California, Berkeley, 2022.

\bibitem{Khalil2002Nonlinear}
H.~K. Khalil, \emph{Nonlinear Systems}.\hskip 1em plus 0.5em minus 0.4em\relax
  Patience Hall, 2002, vol. 115.

\bibitem{Kailath1971RKHS}
T.~Kailath, ``{RKHS} approach to detection and estimation problems-{P}art {I}:
  Deterministic signals in {G}aussian noise,'' \emph{IEEE Transactions on
  Information Theory}, vol.~17, no.~5, pp. 530--549, 1971.

\bibitem{Kailath1972RKHS}
T.~Kailath, R.~Geesey, and H.~Weinert, ``Some relations among {RKHS} norms,
  {F}redholm equations, and innovations representations,'' \emph{IEEE
  Transactions on Information Theory}, vol.~18, no.~3, pp. 341--348, 1972.

\bibitem{Pradhan2022Submodular}
H.~Pradhan, A.~Koppel, and K.~Rajawat, ``On submodular set cover problems for
  near-optimal online kernel basis selection,'' in \emph{International
  Conference on Acoustics, Speech and Signal Processing (ICASSP)}.\hskip 1em
  plus 0.5em minus 0.4em\relax IEEE, 2022, pp. 4168--4172.

\bibitem{Kerrigan2000Soft}
E.~C. Kerrigan and J.~M. Maciejowski, ``Soft constraints and exact penalty
  functions in model predictive control,'' in \emph{UKACC International
  Conference (Control 2000)}, 2000.

\bibitem{Rosenberg1984Exact}
E.~Rosenberg, ``Exact penalty functions and stability in locally {L}ipschitz
  programming,'' \emph{Journal of Optimization Theory and Applications},
  vol.~30, pp. 340--356, 1984.

\bibitem{Casadi2019}
J.~A.~E. Andersson, J.~Gillis, G.~Horn, J.~B. Rawlings, and M.~Diehl,
  ``{CasADi} -- {A} software framework for nonlinear optimization and optimal
  control,'' \emph{Mathematical Programming Computation}, vol.~11, no.~1, pp.
  1--36, 2019.

\bibitem{IPOPT2006}
A.~W{\"a}chter and L.~T. Biegler, ``On the implementation of an interior-point
  filter line-search algorithm for large-scale nonlinear programming,''
  \emph{Mathematical Programming}, vol. 106, pp. 25--57, 2006.

\end{thebibliography}
\bibliographystyle{IEEEtran}

% !TEX root =  ../main.tex 
\begin{IEEEbiography}[{\includegraphics[width=1in,height=1.25in,clip,keepaspectratio]{Tokmak_2023_crop.JPG}}]{Abdullah Tokmak} received his B.Sc. and M.Sc. degrees in mechanical engineering from RWTH Aachen University, Aachen, Germany, in 2020 and 2022, respectively.
In 2022, he received the Springorium Commemorative Coin for his Master's degree and the Friedrich-Wilhelm Award for his Master's thesis, both from RWTH Aachen University.
He is currently a doctoral researcher at the Cyber-physical Systems Group of  Aalto University, Espoo, Finland. 
His research interests include machine learning and control.
\begin{comment}
The second paragraph uses the pronoun of the person (he or she) and
not the author’s last name. It lists military and work experience, including
summer and fellowship jobs. Job titles are capitalized. The current job must
have a location; previous positions may be listed without one. Information
concerning previous publications may be included. Try not to list more than
three books or published articles. The format for listing publishers of a book
within the biography is: title of book (publisher name, year) similar to a
reference. Current and previous research interests end the paragraph.

The third paragraph begins with the author’s title and last name (e.g.,
Dr. Smith, Prof. Jones, Mr. Kajor, Ms. Hunter). List any memberships in
professional societies other than the IEEE. Finally, list any awards and work
for IEEE committees and publications. If a photograph is provided, it should
be of good quality, and professional-looking.
\end{comment}
\end{IEEEbiography}

\begin{IEEEbiography}[{\includegraphics[width=1in,height=1.25in,clip,keepaspectratio]{figures_final/Fiedler_downsized.png}}]{Christian Fiedler} 
Christian Fiedler received the B.Sc. degree in Mathematics from the University of Bayreuth in 2015, the M.Phil. degree in Computational Biology from the University of Cambridge in 2017, and the M.Sc. degree in Mathematics from the Technical University of Munich. During his studies, he was supported by the Max Weber program. In 2019, he joined the Max Planck Institute for Intelligent Systems and the University of Stuttgart as a PhD student, before transferring to RWTH Aachen University in 2021. He has broad research interests at the intersection of control theory and machine learning, with a particular focus on kernel methods.
\end{IEEEbiography}

\begin{IEEEbiography}[{\includegraphics[width=1in,height=1.25in,clip,keepaspectratio]{melanie_zeilinger.jpg}}]{Melanie N.\ Zeilinger} is an Associate Professor at ETH Zurich, Switzerland. She received the Diploma degree in engineering cybernetics from the University of Stuttgart, Germany, in 2006, and the Ph.D. degree with honors in electrical engineering from ETH Zurich, Switzerland, in 2011. From 2011 to 2012 she was a Postdoctoral Fellow with the Ecole Polytechnique Federale de Lausanne (EPFL), Switzerland. She was a Marie Curie Fellow and Postdoctoral Researcher with the Max Planck Institute for Intelligent Systems, Tübingen, Germany until 2015 and with the Department of Electrical Engineering and Computer Sciences at the University of California at Berkeley, CA, USA, from 2012 to 2014. From 2018 to 2019 she was a professor at the University of Freiburg, Germany. She was awarded the ETH medal for her PhD thesis, an SNF Professorship, the ETH Golden Owl for exceptional teaching in 2022 and the European Control Award in 2023. Her research interests include learning-based control with applications to robotics and human-in-the-loop control.
\end{IEEEbiography}

\begin{IEEEbiography}[{\includegraphics[width=1in,height=1.25in,clip,keepaspectratio]{sebastian_trimpe.JPG}}]{Sebastian Trimpe}  Sebastian Trimpe received the B. Sc. degree in general engineering science and the M. Sc. degree (Dipl.-Ing.) in electrical engineering from Hamburg University of Technology, Hamburg, Germany, in 2005 and 2007, respectively, and the Ph. D. degree (Dr. sc.) in mechanical engineering from ETH Zurich, Zurich, Switzerland, in 2013. Since 2020, he is a full professor at RWTH Aachen University, Germany, where he heads the Institute for Data Science in Mechanical Engineering. Before, he was an independent Research Group Leader at the Max Planck Institute for Intelligent Systems in Stuttgart and Tübingen, Germany. His main research interests are in systems and control theory, machine learning, networked systems and robotics. Dr. Trimpe has received several awards, including the triennial IFAC World Congress Interactive Paper Prize (2011), the Klaus Tschira Award for achievements in public understanding of science (2014), the Best Paper Award of the International Conference on Cyber-Physical Systems (2019), and the Future Prize by the Ewald Marquardt Stiftung for innovations in control engineering (2020).
\end{IEEEbiography}

\begin{IEEEbiography}[{\includegraphics[width=1in,height=1.25in,clip,keepaspectratio]{koehler.png}}]{Johannes Köhler} received his Master degree in Engineering Cybernetics from the University of Stuttgart, Germany, in 2017. 
In 2021, he obtained a Ph.D. in mechanical engineering, also from the University of Stuttgart,
Germany, for which he received the 2021 European Systems \& Control PhD award.
He is currently a postdoctoral researcher at  ETH Zürich. 
His research interests are in the area of model predictive control and control and estimation for nonlinear uncertain systems.
\end{IEEEbiography}

\end{document}